\numberwithin{equation}{section}
\DeclareMathSymbol{\shortminus}{\mathbin}{AMSa}{"39}
\newcommand{\bbr}{\mathbb{R}}
\newcommand{\bbe}{\mathbb{E}}
\newcommand{\bbn}{\mathbb{N}}
\newcommand{\bbp}{\mathbb{P}}
\newcommand{\bbt}{\mathbb{T}}
\newcommand{\fcal}{\mathcal{F}}
\newcommand{\ncal}{\mathcal{N}}
\newcommand{\acal}{\mathcal{A}}
\newcommand{\lossn}{\mathbf{L{}}^{\!n}\hspace{-1pt}}
\newcommand{\loss}{\mathbf{L}}
\newcommand{\eps}{\varepsilon}
\newcommand{\f}{\frac}
\newcounter{modcount}
\newcommand{\modulo}[2]{%
\setcounter{modcount}{#1}\relax
\ifnum\value{modcount}<#2\relax
\else\relax
\addtocounter{modcount}{-#2}\relax
\modulo{\value{modcount}}{#2}\relax
\fi}
\newcommand{\tablepictures}[4][c]{\begin{tabular}[#1]{@{}c@{}}#2\vspace{0.5cm}\\(\alph{#4}) #3\end{tabular}}
\newcounter{gridsearch}
\newcommand{\tabpic}[2]{
    \stepcounter{gridsearch}
    \modulo{\thegridsearch}{2}
    \ifnum\value{modcount}=0
        \tablepictures[t]{#1}{#2}{gridsearch}\\[2.0cm]
    \else
        \tablepictures[t]{#1}{#2}{gridsearch}&~&
    \fi
}
\newtheorem{lemma}{Lemma}[section]
\newtheorem{proposition}[lemma]{Proposition}
\newtheorem{theorem}[lemma]{Theorem}
\newtheorem{example1}[lemma]{Example}
\newtheorem{rem1}[lemma]{Remark}
\newtheorem{assumption}[lemma]{Assumption}
\newtheorem{alg1}[lemma]{Algorithm}
\newtheorem{me1}[lemma]{Mechanism}
\newenvironment{remark}{\begin{rem1}\rm}{\end{rem1}}
\newenvironment{example}{\begin{example1}\rm}{\end{example1}}
\newcommand{\T}{\top}
\newcommand{\tr}{\top}
\begin{document}

\title{A Dynamic Default Contagion Model:\\ From Eisenberg--Noe to the Mean Field}
\author{Zachary Feinstein\thanks{Stevens Institute of Technology, School of Business, Hoboken, NJ 07030, USA. \tt{zfeinste@stevens.edu}} \and Andreas Sojmark\thanks{Imperial College London, Department of Mathematics, London, SW7 2AZ, UK. \tt{a.sojmark@imperial.ac.uk}}}
\date{\today}
\maketitle
\abstract{
	In this work we introduce a model of default contagion that combines the approaches of Eisenberg--Noe interbank networks and dynamic mean field interactions.  The proposed contagion mechanism provides an endogenous rule for early defaults in a network of financial institutions.  The main result is to demonstrate a mean field interaction that can be found as the limit of the finite bank system generated from a finite Eisenberg--Noe style network. In this way, we connect two previously disparate frameworks for systemic risk, and in turn we provide a bridge for exploiting recent advances in mean field analysis when modelling systemic risk. The mean field limit is shown to be well-posed and is identified as a certain conditional McKean--Vlasov type problem that respects the original network topology under suitable assumptions.
	~\\
	\noindent\textbf{Keywords:} systemic risk; financial networks; mean field limit; default contagion; cascades; heterogeneous interactions; core-periphery;
}

\section{Introduction}\label{sec:intro}
More than a decade after the collapse of Lehman Brothers and the threat of contagious defaults throughout the global financial system in 2008, systemic risk is still of vital importance to study.  Systemic risk is the risk of financial contagion, i.e., when the failure of one institution spreads to others due to interlinkages in balance sheets both direct (e.g., via obligations) or indirect (e.g., via overlapping portfolios).  The 2008 financial crisis demonstrated the magnitude of the costs that systemic crises produce; this necessitates the design of models to consider stress testing of financial institutions to improve regulation and mitigate the worst effects of a crisis.

In this work, we aim to bridge the divide between two, currently unrelated, modeling techniques for financial contagion.  That is, we will connect the Eisenberg--Noe network approach popularized by \cite{EN01} to the more recent mean field approaches of systemic risk.  As the goal of this paper is primarily to highlight the overlapping notions between those works, and demonstrate that the network models in fact converge to the mean field limit, we will focus on simple, but realistic, financial settings that illustrate this point.  

Briefly, there are two main contagion channels for systemic risk: \eqref{item:default} \emph{default contagion} and \eqref{item:liquidity} \emph{liquidity contagion}. 
\begin{enumerate}
	\item\label{item:default} Default contagion occurs if the failure of one bank or institution to repay its debts in full causes other banks to default triggering a chain reaction of failing banks.  This occurs through, e.g., a network of interbank obligations as studied in the seminal works of~\cite{EN01,RV13} in a static, network-based setting.  More specifically, in those works, the default of a bank causes direct impacts to the balance sheets of other banks in the financial system. This loss of capital (potentially) causes other banks to default, thus spreading the original shock further throughout the financial system.  Such an event is denoted ``default contagion'' as the contagion is via default events.  
	\item\label{item:liquidity} Liquidity contagion occurs if the illiquidity of one bank or institution (as measured by, e.g., the leverage ratio) causes other banks to also become illiquid.  This occurs through, e.g., a fire sale of assets; the liquidation of assets causes the prices fall, this harms the leverage ratio of all other institutions via mark-to-market accounting thus causing further liquidations.  This has been studied in~\cite{CFS05,BW19,feinstein2019leverage}.
\end{enumerate}
In this work we will focus solely on default contagion.  Utilizing just this notion of contagion allows us to focus on the two main streams of literature mentioned above---namely balance sheet constructed network models and dynamic mean field models---in order to compare them and, ultimately, show that these at first sight divergent areas are, in fact, studying the same phenomena.

The specific modeling assumptions undertaken in this work are chosen in order to capture realistic financial networks that incorporate dynamic defaults that come as a \emph{random} shock to the system.  This dynamic and stochastic default contagion is due to all banks holding assets that evolve stochastically over time (in this work, often assumed to be a generalised geometric Brownian motion) and having interbank contractual obligations with fixed repayment schedules.  Key to this construction is the realistic determination of default considered herein; as in real financial systems, we impose that a bank enters default once it has negative capital (on its balance sheet due to appropriate accounting techniques).  We are primarily interested in how these default shocks propagate through the system over time as a default contagion event. Moreover, we are interested in how these events depend on the common exposures of the banks, which we model through a common noise component of the external assets. Passing to the mean field limit, the direct effects of the idiosyncratic noise are averaged out, while the propagation of feedback effects from default contagion remains conditional on the common noise.

The remainder of this paper will be organized as follows. Section~\ref{sec:review} provides a more detailed, but succinct, overview of both the network-based models and mean field models for systemic risk. It seems that such a combined literature review has not previously been attempted, so we hope it can spark more interaction between these two areas of research in systemic risk; however, those comfortable with this background can safely begin this work in Section~\ref{sec:finite}. From here, Section~\ref{sec:finite} provides an extension of the dynamic network model of~\cite{BBF18} to include early defaults based on the realised capital of the institutions.  This section is studied with a finite network of banks to provide understanding of the system dynamics and is novel on its own. To illustrate the workings of our model, simple numerical examples are provided. Focusing on a particular core-periphery setting, this model is extended in Section~\ref{sec:mf} to consider the limit as the number of institutions becomes large. In so doing, we introduce a mean field model of systemic risk directly built from the balance sheet approach, and we provide a simple numerical simulation based on a concrete core-periphery example from Section~\ref{sec:finite}. Finally, Section~\ref{sec:well-posedness} is dedicated to a more general mean field analysis of which the model in Section~\ref{sec:mf} is a special case. This model and the associated mathematical results are the main contributions of this work. By presenting both a finite network and mean field limit for the same systems, we are able to take advantage of the positive aspects of both modelling frameworks, including the balance sheet structure from the (finite) Eisenberg--Noe setting combined with the lower parameter space and more concise mathematical results of the mean field approach.

\section{Eisenberg--Noe and Mean Field Approaches to Systemic Risk}\label{sec:review}
\subsection{Network Approaches}\label{sec:intro-network}

Network approaches to systemic risk consider a directed and weighted graph of interbank obligations to determine the resultant clearing payments made between financial institutions.  These networks determine the defaulting set of banks by finding exactly those institutions who do not pay off their obligations in full when the network clears.  A fundamental concept for such models is the stylized banking balance sheet.  By and large, these models provide a static snapshot of the health of the financial system, though dynamics have begun to be included in select network models; more details on these models will be provided herein.

\paragraph{The Eisenberg--Noe model.} To fix the concepts, we will present first the Eisenberg--Noe clearing payment system from the seminal work~\cite{EN01}.
In short, consider $n$ banks that make up the financial system, each with some initial endowment $x_i \geq 0$ for firm $i$.  Each firm additionally has liabilities to other institutions denoted by the total liabilities $\bar p_i \geq 0$ for firm $i$ and the relative obligations between two firms $\pi_{ij}$ is given by the proportion of the total liabilities of firm $i$ owed to $j$.  The realised payments $p \in \bbr^n$ that each firm makes is obtained by the fixed point equation
\begin{equation}\label{eq:EN-fixedpt}
p = \bar p \wedge \left(x + \Pi^\T p\right)
\end{equation}
where $a \wedge b := (\min(a_1,b_1),\dots,\min(a_n,b_n))$ denotes the lattice minimum in the usual way.
One of the key strengths of this model is in its relation to a simple balance sheet that can be calibrated to data, as undertaken in, e.g.,~\cite{UW04,MZM12,ACP14,GV16} using, e.g., data from the European Banking Authority. 

\paragraph{Analysis of \eqref{eq:EN-fixedpt}.}
The Eisenberg--Noe model for clearing payments inherently codifies three key financial constructs: 
\begin{enumerate}
\item \emph{priority of debt over equity}: a firm must first pay off its debts in full before it accumulates any equity; 
\item \emph{limited liabilities}: no firm pays more than their contractual obligations; and 
\item \emph{pro-rata repayment}: there is no seniority structure of debt.
\end{enumerate}
This last assumption, on pro-rata repayment, has been weakened to allow for varying seniority structures and prioritized repayment in, e.g.,~\cite{E07,feinstein2017currency}.  A fourth assumption is also considered in the Eisenberg--Noe model, though it is relaxed in many subsequent works.  That is,
\begin{enumerate}\setcounter{enumi}{3}
\item \emph{full recovery in default}: a firm has no costs associated with defaulting on its obligations.
\end{enumerate} 
This was studied in~\cite{RV13,GY14,AW_15} as a strict extension of the Eisenberg--Noe model.  The model of~\cite{GK10} can also be viewed as an extension~\eqref{eq:EN-fixedpt} with 0 recovery in default, i.e., no payment is made in case of default.
Given that these financial constructs are all rules that define the value of assets and liabilities, the Eisenberg--Noe model and its extensions are often described as a balance sheet description of the financial system.

Mathematically, as the clearing payment problem inherent to the Eisenberg--Noe model is a fixed point equation~\eqref{eq:EN-fixedpt}, the question of existence and uniqueness is of the paramount importance.  Under the first three financial constructs, and thus allowing for bankruptcy costs, there exists a lattice of clearing payments via the application of Tarski's fixed point theorem.  In particular, this implies that there exists a greatest clearing payment vector; such a payment scheme would always be chosen as all institutions in the system have the greatest possible equity under this scheme, i.e., it is the Nash equilibrium of all clearing payments.  In addition, if we introduce the fourth financial construct (full recovery of assets) then, under very simple assumptions (e.g., all banks in the system hold some initial endowment), the clearing payment is unique.  In addition, in the Eisenberg--Noe setting, the sensitivity of the clearing payments to changes in the system parameters has been studied in~\cite{LS10} for the endowments and~\cite{feinstein2017sensitivity} for consideration of the relative liabilities.

The \emph{fictitious default algorithm}, first presented in~\cite{EN01}, is used to efficiently find the greatest clearing solution.  Briefly, this algorithm initially assumes no banks are in default and determines the clearing payments under such an assumption. If any banks default in that scenario then in the greatest clearing solution they must also be in default. Fixing only those banks as defaulting, a new clearing payment is computed under such a setting.  This process of checking for defaults and determining new clearing payments under fixed set of insolvent banks is repeated until there are no new defaults.  As the Eisenberg--Noe model consists of finite number $n$ of banks, this process is guaranteed to converge in at most $n$ iterations.  Though this algorithm is efficient, and the underlying problem is mathematically well-structured, analytical results are typically not feasible to provide.

\paragraph{Random graph approach.}
In order to obtain some analytical results in this network framework, prior works consider passing to the $n \to \infty$ limit of banks in the system.  These network asymptotics are considered in settings in which the interbank liabilities are described by a random graph.  The simplest random graph model is the Erd\"os--R\'enyi network in which a fixed valued connection is made between any two banks based on a fixed probability.  In many of these asymptotic studies the model of~\cite{GK10} is utilised, i.e., there is no recovery in case of default.  Such an all-or-nothing payment setting leads to tractable formulae for the probability of defaults in the graph.  We refer to~\cite{hurd2016}, and references therein, for a detailed survey of this random graph approach.

The analytical results from this random graph approach allow for further considerations of system stability as well.  For the finite network setting, systemic risk measures to determine acceptable capital requirements for the entire financial system have been proposed in, e.g.,~\cite{chen2013axiomatic,kromer2013systemic,feinstein2014measures,fouque2015systemic}, but these objects require Monte Carlo simulation for any computation.  In contrast,~\cite{ACM10,ACM12,amini_minca,detering2019inhomogeneous} are able to define a resiliency metric and determine capital requirements for banks to make the system acceptable to regulators.  This asymptotic framework thus provides for simple comparative statics.

\paragraph{Dynamic network models of default contagion.}
Considerations given thus far are solely in a static setting.  However, banking balance sheets are highly dynamic and subject to fluctuations due to, e.g., market movements.
Indeed the conclusion of \cite{EN01} gives a discussion of how to include multiple clearing dates and time dynamics, which has been studied in \cite{CC15,ferrara16}. Additionally, \cite{KV16} considers a similar approach to a financial model with multiple maturities.  

As \cite{CC15} presented in a discrete time setting: A firm is \emph{liquid and solvent} at some time $t$ if it has positive equity and has not previously been insolvent.  Due to the assumptions inherent in \eqref{eq:EN-fixedpt}, if a firm is liquid then it must pay in full.  A firm is \emph{illiquid and solvent} at some time $t$ if it has negative cash account, but is able to obtain a loan to cover its deficits, and has not previously been insolvent.  As will be described in this work, and as undertaken in~\cite{BBF18,sonin2017}, these loans will take the form of rolling forward unpaid debts from solvent firms to their obligees.  This is the key distinction that cannot exist in the pure Eisenberg-Noe framework as there is no future time point to repay the loan.  Finally a firm is \emph{insolvent} at time $t$ if either it has negative equity or it was previously deemed insolvent.

Most prior works on dynamic network models consider a discrete time setting~\cite{CC15,ferrara16,KV16}.  As far as the authors are aware, the only two extensions of the Eisenberg--Noe framework to continuous time are~\cite{BBF18,sonin2017}.  Neither of those works considers the mark-to-market equity of a bank in order to determine insolvency.  That is the key innovation being provided in Section~\ref{sec:finite} of this work.

\subsection{Mean Field Approaches}\label{sec:intro-meanfield}


In the dynamic mean field approaches to systemic risk, the starting point is to identify each bank in a large financial system with some notion of its financial robustness or distance-to-default at any given time.
Next, the financial system is then modelled as a system of interacting stochastic processes, whose values represent the current robustness. Typically, these models start from some form of Brownian dynamics, but colloquially one could say  that they are built on the following premise: in contrast to a risk-neutral Black--Scholes world, the modelling of systemic risk calls for room to play with the drift and other aspects of the coefficients, in a way that takes into account the system as a whole.

\paragraph{A concrete particle system.}
To fix ideas, let us consider a particular system of $n$ banks, described by their distances-to-default $X_i$ and corresponding default times
\[
\tau_i:=\inf\{t\geq0:X_i(t) < 0\} \qquad \text{for} \quad i=1,\ldots,n.
\]
Letting $X_i^{\tau}\!(t):=X_i(t\land\tau_i)$, a simple `structural' model of systemic risk  (inspired by \cite{fouque2015meanfield}) could then be based on dynamics of the form
\begin{equation}\label{dist_to_def_1}
dX_i^{\tau}\!(t )= \Bigl( \frac{\theta}{n} \sum_{j=1}^n \bigl(X_j^{\tau}\!(t)-X_i^{\tau}\!(t)\bigr) + \mu(t) \Bigr) dt + \sigma (t) dW_i(t),\quad t\leq\tau_i, \quad i=1,\ldots,n,
\end{equation}
where $W_i(t)= \sqrt{1-\rho^2}B_i(t)+\rho B_0(t)$, for a family of independent Brownian motions $B_0,\ldots,B_n$. Here $\rho>0$ models the presence of a `common' noise, namely $B_0$, that captures exposure to common risk factors, while $\theta>0$ incorporates an element of `herding' in the drift that could, for example, be the result of banks engaging in similar strategies and other interbank connections. To capture systemic risk, the natural quantities in this model are the average distance-to-default and the proportion of defaults given, respectively, by
\[
\textbf{M}^n\hspace{-1pt}(t):=\frac{1}{n}\sum_{i=1}^nX_i^{\tau}\!(t) \quad \text{and} \quad \lossn(t):=\frac{1}{n}\sum_{i=1}^n\mathbf{1}_{t\geq\tau_i}.
\]
Of course, an obvious weakness is the inherent symmetry in this model, and this is indeed a central point to be addressed later in this work when we return to the Eisenberg--Noe approach, as discussed in Section \ref{sec:intro-network} above. For the purposes of this overview, however, we remain in the symmetric setting.

\paragraph{Mean field analysis of \eqref{dist_to_def_1}.}
Sending $n\rightarrow\infty$ in \eqref{dist_to_def_1}, one can hope to simplify the analysis and simulation of the system, provided there is a law of large numbers effect. To see that there is indeed such an effect, note that $\textbf{M}^n_t = \langle \nu_t^n , \mathrm{Id} \rangle$ and $\textbf{L}^n_t= 1- \nu_t^n(0,\infty)$, where the empirical measures $\nu^n_t:=\frac{1}{n}\sum_{i=1}^n \mathbf{1}_{t<\tau_i}\delta_{X_i^n(t)}$ are tracking the surviving banks. From \cite{HS18} it is then known that  $(\nu^n,\textbf{M}^n,\textbf{L}^n)$ converges to a unique limit $(\nu,\textbf{M},\textbf{L})$, where $\textbf{M}_t = \langle \nu_t , \mathrm{Id} \rangle$, $\textbf{L}_t= 1- \nu_t(0,\infty)$, and $\nu_t$ has a density $V_t$ which solves the nonlinear SPDE
\begin{equation}\label{SPDE_intro}
dV_{t}(x)={\textstyle \frac{\sigma^{2}}{2}}\partial_{xx}V_{t}(x)dt - \partial_{x}\bigl( [\theta(\textbf{M}(t)-x)+\mu]V_t(x) \bigr) dt -\rho\sigma \partial_{x} V_{t}(x)dB_{0}(t),
\end{equation}
for $x\in(0,\infty)$, with an absorbing boundary condition at the origin, i.e.~$V_t(0)=0$. The distribution of the limiting processes $\textbf{M}$ and $\loss$ can be used to give measures of systemic risk. For example, one can track the probability of seeing changes in $\textbf{M}$ and $\loss$ above some threshold over a short period of time. A simple observation is that, for larger $\theta>0$ and $\rho>0$, the distribution of the change in $\loss$ over a given period becomes more concentrated at the extremes with banks more likely to either survive or default together.

Translating (\ref{SPDE_intro}) to the language of McKean--Vlasov SDEs, we have $\textbf{M}(t)=\mathbb{E}[X(t)\mathbf{1}_{t<\tau} \,|\, B_0]$ and $\loss(t)=\mathbb{P}(t\geq \tau  \, | \, B_0)$, where $\tau=\inf\{t\geq 0: X(t)\leq 0\}$ and
\begin{equation}\label{MV_intro}
dX(t )=  \bigl(\theta(\textbf{M}(t)-X(t))+\mu\bigr)dt + \sigma d(\sqrt{1-\rho^2}B+\rho B_0)(t).
\end{equation}
This follows by an application of It\^o's formula, which shows that (\ref{SPDE_intro}) is the (nonlinear) stochastic Fokker--Planck equation for (\ref{MV_intro})  absorbed at the origin, conditional on $B_0$. That is, we have
$\mathbb{E}[\phi(X_t)\mathbf{1}_{t<\tau}\mid B_0]=\int_0^\infty \phi(x)V_t(x)dx$ for all $\phi\in C^2_b(\mathbb{R})$.

Building on the above, one could consider strategic interactions in (\ref{dist_to_def_1}) with costs and controls depending on $\textbf{M}^n$ and $\loss^n$. This would then yield a mean field game involving the limit processes $\textbf{M}$ and $\loss$. Without the common noise, a framework for this type of mean field game has recently been developed in the two consecutive papers \cite{campi_mfg_1, campi_mfg_2}.


\paragraph{Mean field models of contagion.}

Recently, a new line of mean field modelling has been proposed in \cite{HLS18, HS18, NS17} aimed at studying \emph{default contagion} in large financial systems. Based on a `structural' approach, these models introduce an endogenous notion of contagion in systems such as (\ref{dist_to_def_1}), by imposing that bankruptcies should cause a drop in the distances-to-default of the other banks. Mathematically, this amounts in one way or another to incorporating the proportion of defaults $\loss^n$ into the dynamics, thus leading to positive feedback loops whereby defaults can shift other banks into default. Variations of this approach and further theoretical results can be found in \cite{LS18a, LS18b, NS18}. Moreover, we note that closely related approaches to contagion (in a dynamic but finite-dimensional setting) have also been considered in \cite{Battiston_2012, Lipton2016}.

In terms of numerical implementation, \cite{KLR18c, KR18} have proposed and analysed numerical schemes for the mean field model of \cite{HLS18}, and it is noted in \cite{KLR18c} that a modified version of \cite{Lipton2016} falls within this framework. These developments can be seen as following on from \cite{Lipton2015,KLR18a,KLR18b}, where similar models are studied for systems of two or three banks, and we note that passing to the mean field limit yields a way of alleviating the curse of dimensionality arising from the couplings due (in particular) to mutual obligations in large financial systems.

In this paper we will show how a variant of these `structural' approaches to contagion is intrinsically connected to a dynamic Eisenberg--Noe model with early defaults (as developed in Section~\ref{sec:finite}). Moreover, we will show (in Sections \ref{sec:mf} and \ref{sec:well-posedness}) that the associated mean field limit can be derived and analysed rigorously by extending the techniques from \cite{HLS18,LS18a,LS18b}.

\paragraph{The broader mean field literature on systemic risk.}
If, for simplicity, the constraints on the state space in particle system (\ref{dist_to_def_1}) are dropped, then the dynamics are precisely those of the early papers \cite{fouque2015meanfield, fouque2013illustrated}, where $X_i$ now denotes the (logarithmic) cash-reserves of bank $i$ and the mean-reversion models borrowing and lending in the interbank market. In \cite{fouque2015meanfield} these dynamics emerge as a Nash equilibrium of a stochastic game (where drifts are controlled and it is costly to diverge from the mean) with a variant of (\ref{MV_intro}) without absorption arising as the equilibrium dynamics for the limiting mean field game.

Starting from \cite{fouque2013stability}, several other papers on systemic risk have studied different versions of this mean-reverting setup (mostly without the common noise). These contributions can be loosely grouped into: systems with stabilisation by a central agent \cite{garnier2013b, garnier2017}, games with delay \cite{carmona2016delay, fouque_zhang}, games with model uncertainty \cite{huang_jaimungal}, utility optimisation by the  individual banks and a central bank \cite{sarantsev_maheshwari},  methods for introducing heterogeneity \cite{chong_kluppelberg, spilio_hetero}, jump-diffusion dynamics \cite{bo_capponi, pascucci_sysrisk, Benazzoli_campi_persio} and connections to the theory of risk measures \cite{biagini_fouque}. Still focusing on mean-reversion, constraints on the state space have been considered via Feller type square root diffusions in \cite{bo_capponi, fouque2013stability,misha_ichiba, sun_interbank} (with various additional features) and, in such a framework, \cite{capponi_clusters} has recently proposed a network structure with finitely many clusters of banks, where each cluster mean-reverts around different predetermined levels modelling the presence of target leverage ratios.

In addition to the `structural' approaches to contagion discussed earlier, there is a separate literature on contagion in large financial systems, wherein defaults are dictated by exponential clocks as in the `reduced-form' approach to credit risk. This leads to more implicit notions of contagion occurring at the level of the intensities. Firstly, \cite{giesecke_2013, giesecke_2015} propose a system of interacting intensities that are self-exciting via dependence on the proportion of defaults. Secondly, somewhat closer in spirit to (\ref{dist_to_def_1}), \cite{ichiba_2018} identifies the financial health of each bank in a large system with a geometric Brownian motion, but with default dictated by an exponential clock whose intensity can depend on the banks own health and the average healthiness of the system. Contagion amounts to each default causing a drop in the healthiness of the other banks by a random fraction, which in turn increases the default intensities.

\section{Dynamic Model of $n$ Banks}\label{sec:finite}

As mentioned above, the primary goal of this section is to introduce a dynamic network model to study default contagion in a finite system of banks.  To do so, we seek to extend the dynamic network model of \cite{BBF18} to incorporate early defaults due to negative (accounting) capital.  
This extension to include early defaults is novel and important in its own right.
This will be utilised in Section~\ref{sec:mf} to consider a comparison with a mean field limit.  More details on the reasons for undertaking that analysis are provided in Section~\ref{sec:mf}, and we also refer to the brief discussion in the introduction above.

This section is broken into two subsections.  First, in Section~\ref{sec:balance-sheet}, we describe the stylized balance sheet of all banks in the system.  This is used to define a general dynamic model for default contagion in the vein of \cite{EN01,RV13}.  Second, we simplify the parameters so as to study a specialized network setup that facilitates the later analyses of this work. As the primary goal of this work is to merge the network and mean field approaches in the literature, we find this specialized network setup is instructive. For the analysis of this section, none of these additional assumptions are required for the theoretical results.

Briefly, before undertaking the analysis, we wish to consider some notation utilised throughout this work.   
Consider a financial system with $n \in \bbn$ financial institutions. This system does not include the central bank or other financial entities not included within this system; we will consider such an entity, called the ``societal node'' and denote it by node $0$.  Notationally, let $\ncal = \{1,2,...,n\}$ be the set of banks and $\ncal_0 = \ncal \cup \{0\}$ include the societal node.
As we are considering a dynamic network model, consider a continuous set of clearing times $\bbt = [0,T]$ for some (finite) terminal time $T < \infty$.  For simplicity, assume throughout this work that the risk-free rate is 0 ($r = 0$).
Finally, we will use the notation  $Z(t)$ for the value at time $t\in \bbt$ of a process $Z: \bbt \to \bbr^n$.
We will now consider a model akin to the continuous-time setting of \cite{BBF18} in that we allow for liabilities to change over time and for firms to have stochastic cash flows.

\subsection{The Balance Sheet}\label{sec:balance-sheet}
\begin{figure}[t]
\centering
\begin{tikzpicture}[x=\linewidth/8]
\draw[draw=none] (0,9.5) rectangle (6,10) node[pos=.5]{\bf Balance Sheet @ $t$};
\draw[draw=none] (0,9) rectangle (3,9.5) node[pos=.5]{\bf Assets};
\draw[draw=none] (3,9) rectangle (6,9.5) node[pos=.5]{\bf Liabilities};

\filldraw[fill=blue!20!white,draw=black] (0,6.5) rectangle (3,9) node[pos=.5,style={align=center}]{External (Mark-to-Market) \\ $\bbe[x_i(T) \; | \; \fcal_t]$};
\filldraw[fill=yellow!20!white,draw=black] (0,4) rectangle (3,6.5) node[pos=.5,style={align=center}]{Interbank (Solvent) \\ $\sum_{j \in \acal_t} L_{ji}(T)$};
\filldraw[fill=orange!20!white,draw=black] (0,0) rectangle (3,4) node[pos=.5,style={align=center}]{Interbank (Insolvent) \\ $\sum_{j \in \ncal \backslash \acal_t} \left(\begin{array}{l}(1 - R_2) L_{ji}(\tau_j)\\ + R_2 L_{ji}(T)\end{array}\right)$};

\filldraw[fill=red!20!white,draw=black] (3,3) rectangle (6,9) node[pos=.5,style={align=center}]{Total \\ $\sum_{j \in \ncal_0} L_{ij}(T)$};
\filldraw[fill=green!20!white,draw=black] (3,0) rectangle (6,3) node[pos=.5,style={align=center}] (t) {Capital \\ $K_i(t)$};
\end{tikzpicture}
\caption{Stylized balance sheet for firm $i \in \ncal$ at time $t \in \bbt$.}
\label{fig:balance-sheet}
\end{figure}

In order to construct a continuous-time model we will begin by considering the stylized balance sheet for a generic bank $i \in \ncal$ in our system.  This balance sheet comes from a dynamic version of~\cite{EN01}.  Throughout time, all assets are of only two types: interbank assets and external assets.  All liabilities are either interbank (and thus assets for another bank $j \in \ncal$ in the system) or external and owed to the societal node $0$.

In order to construct a continuous-time model we will begin by considering our network parameters of cash flows and nominal liabilities.
We will now consider a banking system with stylized balance sheet as depicted in Figure~\ref{fig:balance-sheet}.

Let $x_i(T)$ be the value of the external assets for firm $i \in \ncal_0$ at the terminal time $T$.  This will often be denoted in vector notation as $x(T)$.  In mark-to-market accounting, at time $t \in \bbt$, these external assets should be valued in (risk-neutral) expectation, i.e., $\bbe[x(T) \; | \; \fcal_t] = x(0) + \int_0^t dx(s) + \bbe[\int_t^T dx(s) \; | \; \fcal_t]$.
The value of the external assets can, equivalently be described by the (marginal) cash flows external to the system, e.g., from depositors at the banks, as utilised in~\cite{BBF18} for a dynamic version of~\cite{EN01}.  In this context, we describe $dx(t)$ to be the marginal change in the external assets at time $t \in \bbt$, i.e., firm $i \in \ncal_0$ has incoming external cash flows $\int_{t_1}^{t_2} dx_i(t)$ between times $t_1 < t_2$.  Throughout this work we will take the external assets to follow a non-negative (It\^o) process.

In contrast, we will assume the total nominal liabilities matrix $L$ is a deterministic process of time as these obligations are contractually generated and have fixed repayment schedule.  In other words, by looking at all outstanding contracts at time $0$, the total amount that is owed between any two institutions (and externally) up to any time can be determined exactly.
Generally we will consider $dL(t)$ to be the marginal change in nominal liabilities matrix at time $t$, i.e., the liabilities owed from time $t_1$ to $t_2$ are defined by $\int_{t_1}^{t_2} dL(t) \in \bbr^{(n+1) \times (n+1)}_+$. By assumption $dL_{ij}(t) \geq 0$ for all firms $i,j \in \ncal_0$ as, without any payments made, total liabilities should accumulate over time. Additionally, $dL_{ii}(t) = 0$ for all firms $i \in \ncal_0$ to remove the possibility of self-dealing. 
This nominal liabilities matrix appears on both the asset and liabilities side firms.  The liabilities for firm $i$ is the total amount owed over $\bbt$, i.e., $\sum_{j \in \ncal_0} L_{ij}(T) = \sum_{j \in \ncal_0} \int_0^T dL_{ij}(s)$.
To simplify notation, we will define $d\bar p(t) := dL(t) \vec{1}$ for any time $t \in \bbt$ to denote the marginal change in the total liabilities vector where $\vec{1} = (1,\dots,1)^\tr \in \bbr^{n+1}$; correspondingly, the total liabilities owed by firm $i$ over $\bbt$ are given by $\bar p_i(T)$.
The interbank assets require consideration of historical price accounting since the interbank assets are generally nonmarketable.  As such, firm $i \in \ncal_0$ will give full value to all obligations (both past and future) $\sum_{j \in \acal_t} L_{ji}(T) = \sum_{j \in \acal_t} \int_0^T dL_{ji}(s)$ from solvent institutions $\acal_t$ at time $t$; for insolvent firms $j \in \ncal \backslash \acal_t$, firm $i$ will give full value up to the insolvency time $\tau_j \in \bbt$ (discussed further below), but only a fixed recovery rate $R_2 \in [0,1]$ on obligations from $j$ after insolvency, i.e., $\int_0^{\tau_j} dL_{ji}(s) + R_2 \int_{\tau_j}^T dL_{ji}(s) = (1 - R_2) L_{ji}(\tau_j) + R_2 L_{ji}(T)$.

\begin{assumption}\label{ass:balance-sheet}
The modeling assumptions expressed above can be summarized thusly: 
\begin{enumerate}
\item the external assets of each bank follow a stochastic process (which can be correlated to each other) and (being marketable) are marked-to-market with risk-neutral measure $\bbp$;
\item the interbank assets and liabilities are solely based on contracts written prior to time $0$ and have fixed repayment schedule; and
\item interbank assets (being nonmarketable) are valued using historical price accounting, i.e., priced at face value prior to a default event and reevaluated with the true recovery rate after default.
\end{enumerate}
These three key modeling assumptions lead to a contagion mechanism in which defaults come as a shock to the system and cause a jump in the capital of any connected institution.
\end{assumption}
The shocks due to default outlined above are realistic since the interbank assets are nonmarketable.  If, however, banks attempted a counterparty or network valuation adjustment (see, e.g., \cite{barucca2016valuation,BF18comonotonic}) default shocks would still be expected due to the assymetric and incomplete information available to the different banks.  We also wish to note that the historical price accounting rule undertaken herein provides the greatest possible value for interbank assets and thus provides a bound on any other valuation system.

The balance sheet capital for firm $i$ at time $t \in \bbt$ is exactly the difference on its balance sheet between assets and liabilities, i.e.,
\begin{equation}
\label{eq:Kapital-gen} K_i(t) = \bbe[x_i(T) \; | \; \fcal_t] + \sum_{j \in \acal_t} L_{ji}(T) + \sum_{j \in \ncal \backslash \acal_t} \left[(1-R_2) L_{ji}(\tau_j) + R_2 L_{ji}(T)\right] - \bar p_i(T).
\end{equation}
Insolvency for bank $i$ occurs at the first time that it has negative capital, i.e.,
\[\tau_i = \inf\{t \in \bbt \; | \; K_i(t) < 0\}\]
and the set of solvent firms at time $t$ is given by $\acal_t := \{i \in \ncal \; | \; \tau_i > t\}$.

\begin{remark}\label{rem:early-default}
The stochastic structure introduced herein is necessary for consideration of early defaults.  Without it, the capital of banks would be deterministic and all defaults would be known at the initial time $0$.  Though~\cite{BBF18} introduces a stochastic system for financial networks, it does not consider endogenous early defaults.  That is an innovation of this work.
\end{remark}

\begin{remark}\label{rem:EN}
In the balance sheet approach considered herein, due to the full recovery of interbank assets prior to default and a fixed recovery after default, the details of the Eisenberg--Noe~\cite{EN01} are only subtly utilised in the background.  That is, the constant recovery implies a pro-rata repayment scheme as in the Eisenberg--Noe framework; the difference between this repayment scheme and that of Eisenberg--Noe and Rogers-Veraart \cite{RV13} is that recovery is on the liability side rather than the asset side.  We take this as a simplification to ease the discussion and mathematics to focus primarily on the stylized contagion in this work.

Additionally, we can consider this balance sheet framework as akin to the dynamic network models of~\cite{BBF18,sonin2017}, but adding in notions from the discrete-time model of \cite{CC15} in which firms can default before the terminal time.  In that work there is a detailed discussion on an auction model for determining the recovered assets in case of default from which the remaining debts are paid; this is in contrast to the simplified exogenous recovery rates.  If we take the approach from~\cite{BBF18} in which firms pay off debts as they arrive and may have unpaid prior liabilities, the construction of the system dynamics requires further considerations.  
Briefly, let $V_i(t)$ denote the cash holdings of firm $i$ at time $t \in \bbt$.  Let $\pi_{ij}(t)$ be the relative liabilities at time $t$.  This is constructed in detail in a continuous-time setting in~\cite{BBF18}; we refer to that paper for a detailed discussion of the construction of the relative liabilities in this general setting.  It is possible that a firm has positive capital $K_i(t) > 0$ but insufficient funds to cover short term liabilities; in such a setting we assume that the debts roll-forward when they go unpaid by a \emph{solvent} firm as in~\cite{BBF18,sonin2017}.  When a firm defaults, we consider a recovery rate $R_1 \in [0,1]$ on the unpaid previous debts and $R_2 \in [0,1]$ on future obligations.  As such we have the cash holdings and (modified) capital equations at time $t \in \bbt$ as:
\begin{align*}
V(t) &= x_i(t) + \sum_{j \in \ncal} \left[\left(L_{ji}(t) - \pi_{ji}(t) V_j(t)^-\right) \mathbf{1}_{t < \tau_j}\right. \\
	&\qquad \left. + \left((1-R_2) L_{ji}(\tau_j) + R_2 L_{ji}(T) - (1-R_1) \pi_{ji}(\tau_j) V_j(\tau_j)^-\right) \mathbf{1}_{t \geq \tau_j}\right] - \bar p_i(t)\\
K_i(t) &= V_i(t) + \bbe[x_i(T) \; | \; \fcal_t] - x_i(t) + \sum_{j \in \ncal} \left(\pi_{ji}(t) V_j(t)^- + L_{ji}(T) - L_{ji}(t)\right) \mathbf{1}_{t < \tau_j} - [\bar p_i(T) - \bar p_i(t)].
\end{align*}
Much of the results of this work can be undertaken in this setting with a liability structure defined in Assumption~\ref{ass:finite}.  
For simplification and from financial interpretation: we will be interested in the setting where $R_1 = 1$.  We assume this from the idea that, prior to the default even though a firm may be illiquid, it is solvent and thus some \emph{lender of last resort} will guarantee these obligations that rolled forward. 
\end{remark}

\subsection{The Simplified Model}\label{sec:model}
\begin{figure}[t]
\centering
\includegraphics[width=0.6\textwidth]{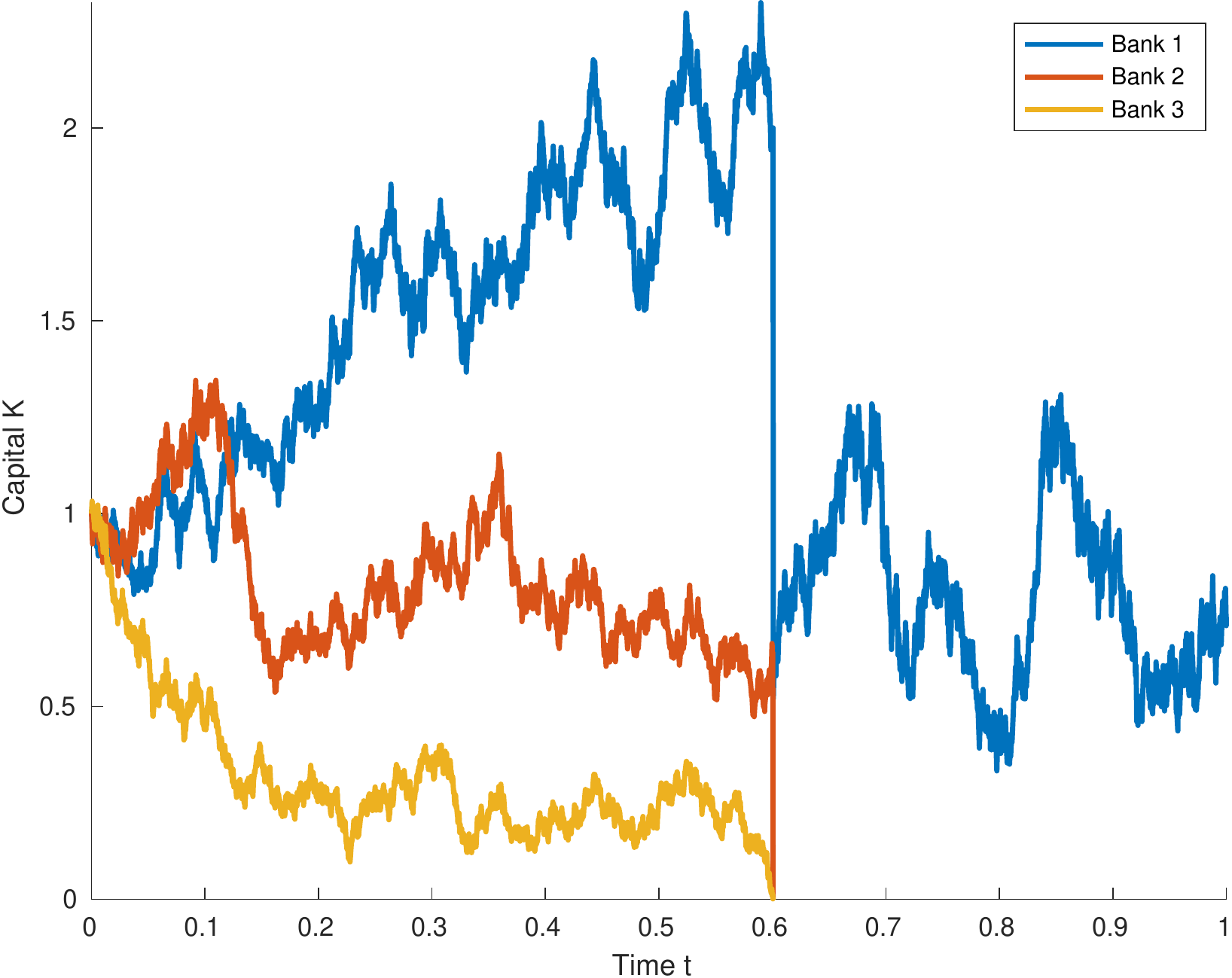}
\caption{A realisation of the 3 bank system described in Example~\ref{ex:3bank} with a contagious default at $t \approx 0.6$.}
\label{fig:3bank}
\end{figure}

We will make the following assumptions for the remainder of this paper.  These can be relaxed as in~\cite{BBF18} and discussed briefly in Remark~\ref{rem:EN}, but as the goal of this work is to demonstrate a simple and clear comparison between a dynamic Eisenberg--Noe model and default contagion in the mean field limit we will consider this simplification.
\begin{assumption}\label{ass:L-finite}
Throughout this work we consider a short time horizon $T$ model wherein the liability repayment schedule is constant over time, i.e., $dL_{ij}(t) = \lambda_{ij}dt$ for $i,j \in \ncal_0$ with $\lambda_{ii} = 0$ and $\lambda_{0j}(t) = 0$, and $L(0) = 0$.  
\end{assumption}
The constant nature of the network, as defined in Assumption~\ref{ass:L-finite}, is valid for a short time frame. Since financial crises occur over short time horizons, this fixed nature is therefore appropriate.  Further, this allows us to study how the initial network topology can cause default contagion and, ultimately, a systemic crisis.

For the remainder of this paper we will introduce the notation 
\[\bar{L}_i(T) := \sum_{j \in \ncal_0}[L_{ij}(T) - L_{ji}(T)] = T \lambda_{i0} + T \sum_{j \in \ncal} [\lambda_{ij} - \lambda_{ji}]\]
to denote the difference between obligations (external and interbank) and interbank assets.  Typically we will assume $\bar{L}_i(T) > 0$ for all firms $i$, i.e.,  bank $i$ has liabilities that cannot be offset solely by interbank assets.  Under such an assumption, every bank is a net borrower overall (in the sense that \emph{total} obligations are larger than \emph{interbank} assets); this is true even if a bank is not a net borrower in the interbank system.

\begin{assumption}\label{ass:finite}
For simplification and ease of use, we will assume for the remainder of this paper that the external cash flows follow (possibly time-dependent) correlated geometric Brownian motions, i.e., 
\[dx_i(t) = x_i(t)[\mu_i(t) dt + \sigma_i(t) dW_i(t)]\]
for vector of correlated Brownian motions $W$.
\end{assumption}

Under the setting of Assumptions~\ref{ass:L-finite} and~\ref{ass:finite}, we can compute the capital process $K(t)$ from~\eqref{eq:Kapital-gen} as:
\begin{align}
\label{eq:Kapital} K_i(t) 
&= x_i(t) e^{\int_t^T \mu_i(s) ds} - \bar{L}_i(T) - (1-R_2)\sum_{j \in \ncal\backslash\acal_t} (T-\tau_j)\lambda_{ji}.
\end{align}
In order to determine~\eqref{eq:Kapital}, we take advantage of the external assets following a geometric Brownian motion to find $E[x_i(T) \; | \; \fcal_t] = x_i(t) e^{\int_t^T \mu_i(s) ds}$.

\begin{assumption}\label{ass:initial-default}
We wish to assume that no banks are in default at time $t = 0$.  This is equivalent to bounding the initial external assets from below, i.e., $x_i(0) > T[\lambda_{i0} + \sum_{j \in \ncal}[\lambda_{ij} - \lambda_{ji}]]e^{-\int_0^T \mu_i(s) ds}$ almost surely for every bank $i \in \ncal$.
\end{assumption}

\begin{lemma}\label{lemma:exist}
If Assumptions~\ref{ass:L-finite}-\ref{ass:initial-default} are satisfied, there exists a greatest and least clearing capital $K^\uparrow \geq K^\downarrow$ (component-wise and for every time $t$).
\end{lemma}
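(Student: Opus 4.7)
The plan is to recast the clearing condition as a fixed-point equation on the vector of candidate default times and then apply Tarski's theorem pathwise. First, working $\omega$-by-$\omega$, for any candidate vector $\tau = (\tau_1,\ldots,\tau_n) \in [0,T+1]^n$ (where the value $T+1$ is used to encode ``no default within $\bbt$''), I would define the hypothetical capital process obtained by substituting $\tau$ into \eqref{eq:Kapital}, namely
\[
K_i^\tau(t) := x_i(t)\,e^{\int_t^T \mu_i(s)\,ds} - \bar{L}_i(T) - (1-R_2)\sum_{j=1}^n (T-\tau_j)\lambda_{ji}\mathbf{1}_{\{\tau_j \leq t\}},
\]
together with the operator $\Phi_i(\tau) := \inf\{t \in \bbt : K_i^\tau(t) < 0\}$ (with the convention $\inf\emptyset := T+1$). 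Its fixed points on $[0,T+1]^n$ are exactly the consistent clearing default-time vectors, from which the clearing capital is recovered as $K^{\tau}$.

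Second, I would verify that $\Phi$ is order-preserving on the complete lattice $[0,T+1]^n$ equipped with the componentwise order. If $\tau \leq \tilde\tau$ componentwise (i.e., $\tau$ has weakly earlier defaults), a short case-check on whether $\tau_j$ and $\tilde\tau_j$ lie before or after $t$ shows $(T-\tau_j)\mathbf{1}_{\{\tau_j \leq t\}} \geq (T-\tilde\tau_j)\mathbf{1}_{\{\tilde\tau_j \leq t\}}$ for every $j$ and $t$. Combined with $\lambda_{ji}\geq 0$ and $1-R_2\geq 0$, this gives $K_i^\tau(t) \leq K_i^{\tilde\tau}(t)$ for all $t$ and $i$, and hence $\Phi(\tau)\leq \Phi(\tilde\tau)$.

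Tarski's fixed-point theorem, applied pathwise, then produces a non-empty complete sublattice of fixed points of $\Phi$, containing a greatest element $\tau^\uparrow$ and a least element $\tau^\downarrow$ with $\tau^\downarrow \leq \tau^\uparrow$. Setting $K^\uparrow := K^{\tau^\uparrow}$ and $K^\downarrow := K^{\tau^\downarrow}$, the monotonicity established in the previous step immediately yields $K^\uparrow(t) \geq K^\downarrow(t)$ componentwise at every $t\in\bbt$, as required. Concretely, $\tau^\uparrow$ can be realised as the decreasing limit of the iteration $\tau^{(0)}\equiv T+1$, $\tau^{(k+1)} := \Phi(\tau^{(k)})$, and $\tau^\downarrow$ as the increasing limit of the analogous iteration started at $\tau^{(0)}\equiv 0$.

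The main obstacle will be to identify the iteration limit as a genuine fixed point of $\Phi$ and to confirm it consists of stopping times. Each iterate is an $\fcal$-stopping time, being the first entry of the right-continuous adapted process $K^{\tau^{(k-1)}}$ into the open set $(-\infty,0)$, and monotone limits of stopping times remain stopping times in the right-continuous augmented filtration. Passing to the limit inside the first-entry operator exploits the specific cadlag structure of $t\mapsto K_i^\tau(t)$---whose only discontinuities are downward jumps at the times $\tau_j$---together with the stability of strict first-passage times under such pathwise monotone approximation of the driving process. Assumption~\ref{ass:initial-default} is used only to ensure that no bank is already in default at $t=0$, so that all candidate default times indeed lie in $(0,T+1]$.
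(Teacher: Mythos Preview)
Your proposal is correct and follows essentially the same route as the paper: both reduce the clearing problem to a monotone fixed-point map (the paper phrases it on $K$, you phrase it equivalently on the default-time vector $\tau$) and invoke Tarski's theorem on a complete lattice. Your write-up is considerably more explicit than the paper's terse sketch, and the additional paragraph on iteration limits and stopping times goes beyond what is strictly needed for the existence claim, since Tarski delivers the greatest and least fixed points directly without any continuity or iterative construction.
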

\begin{proof}
First, recall that the default times are defined by $\tau_i = \inf\{t \in \bbt \; | \; K_i(t) < 0\}$ for bank $i$.  The fixed point problem for the capital process $K$ only depends on itself through the default times $\tau$.
Now, consider the fixed point in the capital process.  Note that as the capital process $K$ decreases the default times $\tau$ all decrease.  Further, as banks default, the entire system's wealth drops as well since $R_2 < 1$.  With this, we are able to complete this proof through a use of Tarski's fixed point theorem.
\end{proof}

Throughout the remainder of this work, we will focus on the greatest clearing solution $K^\uparrow$.  In the Eisenberg--Noe framework, this is computed using a \emph{fictitious default algorithm}.  Briefly, such an algorithm assumes that at time $t \in \bbt$, any bank that was solvent prior to $t$ ($\acal_{t-}$) is assumed to still be solvent; this is the best case scenario for all banks due to the downward stresses from a default.  Solvency ($K_i(t) \geq 0$) of all banks is then checked under this scenario; if no banks default we can move forward in time, otherwise any new defaults may cause a domino effect of further defaults.  In the case of defaults, we update the balance sheet of all solvent firms to determine if this shock causes a cascade of failures. This sequential testing for new defaults and updating the balance sheets continue until no new defaults occur.  In practice this algorithm is run using an event finding algorithm to determine the time of the initial default, at that time the cascading defaults are determined until the system re-stabilises at a new set of solvent institutions, and the stochastic processes evolve normally until the next default event.  This is demonstrated in Figure~\ref{fig:3bank} where the insolvency of one bank causes another bank to default as well.  If desired, the least clearing solution $K^\downarrow$ could be found analogously with a fictitious solvency algorithm instead.

We wish to preview a consideration of the \emph{cascade condition} of Section~\ref{subsubsec:stoch_evolution} and Section~\ref{sec:jumps_loss} which is a rule for determining the size of default cascades (tailored to our later reformulation of the Eisenberg--Noe banking system as a stochastic interacting particle system). We do this to highlight the similarity between this condition and the fictitious default algorithm widely used in the (finite) network setting. Further, we wish to emphasise that, in the mean field framework, a cascade needs to be defined more carefully, as liabilities between institutions are infinitesimals and it is no longer meaningful to talk about defaults of individual banks. The details underlying the definition of the cascade condition are provided in Section~\ref{Sec:determine_cascades}, and we emphasise already here that the iterations of this cascade condition correspond analogously with the fictitious default algorithm (see, in particular, Section~\ref{subsect:rule_cascade} for the precise description of the resolution of default cascades). As far as the authors are aware, a connection between the fictitious default algorithm and the cascade condition presented herein, or its predecessors in the mean field literature, has not previously been investigated.

We conclude this section with consideration of two numerical examples.  The first is the description of the small 3 bank system shown in Figure~\ref{fig:3bank}.  We then consider a larger system with a core-periphery structure.  As noted in~\cite{CP14,FL15}, many real world financial systems exhibit a core-periphery structure.

\begin{example}\label{ex:3bank}
This example will be constructed to demonstrate the primary features of the model, namely early defaults and the contagion thereof.
Consider a $n = 3$ bank system over a time horizon $\bbt = [0,1]$ for simplicity.  
These banks are connected to each other through interbank obligations
\[\lambda = \left(\begin{array}{cccc} 0 & 2 & 2 & 1 \\ 2 & 0 & 2 & 1 \\ 2 & 2 & 0 & 1 \end{array}\right)\]
where the last column denotes the obligations $\lambda_{\cdot 0}$ to the societal node.
These banks hold identical, correlated external assets following the geometric Brownian motion $dx_i(t) = x_i(t) [dt + \frac{1}{2} dW_i(t)]$ with correlations $dW_i(t) dW_j(t) = \frac{1}{2} dt$ for $i \neq j \in \{1,2,3\}$.  The initial value of these external assets are chosen so that the initial capital of all banks is 1, i.e., set $x_i(0) = 2 e^{-1}$ (which results in $K_i(0) = 1$) for all banks $i \in \{1,2,3\}$.  By setting the parameters in this way we satisfy Assumptions~\ref{ass:L-finite}-\ref{ass:initial-default}.  Finally, we fix the recovery rate for defaulted assets to be $R_2 = 0.1$.
One realisation of this system is shown in Figure~\ref{fig:3bank}.  In that realisation, bank $3$ defaults due to its own investments at time $\tau_3 \approx 0.6$.  This causes bank $2$ to default at that time $\tau_2 \approx 0.6$ as well due to the shock to its own capital from re-marking its interbank assets.  Bank $1$ remains solvent for the studied time frame $\bbt$ but a large negative shock is exhibited on its capital due to the default of both banks $2$ and $3$.
\end{example}

The following example is a 10 bank system exhibiting the core-periphery structure.  This network topology is discussed in more detail in~\cite{BE00,FL15}. Of particular note, empirical studies (see, e.g., \cite{CP14,FL15}) have demonstrated that real-world financial systems exhibit this structure.  In undertaking this study, two network structures will be considered: first, a highly connected networks with small obligations from peripheral to peripheral institutions; second, we consider the same network but with no obligations between peripheral institutions.  The second, approximate, network has a limit to its rank, i.e., the rank of the second network obligation matrix is bounded by twice the number of core institutions.  This approximating network, with a much sparser network, is more tractable computationally and, as such, will be used to motivate a rank decomposition structure in the mean field limit of the dynamic network model discussed within this section.  For more details, we refer to the next section.

\begin{example}\label{ex:core-periphery}
Consider a $n = 10$ bank core-periphery system with $2$ core banks and $8$ peripheral ones.  These banks are interconnected with the (randomized) interbank liabilities
\[\lambda = \left(\begin{array}{cccccccccc} 0 & 15.01 & 0 & 0 & 0 & 0 & 3.43 & 2.87 & 2.87 & 2.80 \\ 45.35 & 0 & 3.08 & 2.36 & 2.78 & 2.80 & 1.13 & 0.94 & 0.94 & 0.92 \\ 4.54 & 2.23 & 0 & 0.04 & 0 & 0.05 & 0.06 & 0.05 & 0.05 & 0 \\ 5.90 & 2.90 & 0 & 0 & 0 & 0.06 & 0.07 & 0 & 0 & 0 \\ 4.67 & 2.29 & 0.05 & 0.04 & 0 & 0 & 0 & 0.05 & 0 & 0.05 \\ 4.40 & 2.16 & 0.05 & 0.04 & 0 & 0 & 0.05 & 0.05 & 0.05 & 0 \\ 3.64 & 4.47 & 0 & 0.04 & 0 & 0.05 & 0 & 0 & 0 & 0.05 \\ 3.41 & 4.18 & 0 & 0.04 & 0.04 & 0.04 & 0.05 & 0 & 0 & 0.04 \\ 3.25 & 3.99 & 0 & 0 & 0 & 0 & 0.05 & 0.04 & 0 & 0 \\ 4.31 & 5.29 & 0 & 0.05 & 0 & 0 & 0 & 0 & 0 & 0 \end{array}\right)\]
and all banks owe \$1 to the societal node.  As in real financial systems, there are sparse, and small, obligations between peripheral firms.  For a comparison, consider a reduced system of obligations so that the obligations between peripheral firms are zeroed out, i.e.,
\[\hat\lambda = \left(\begin{array}{cccccccccc} 0 & 15.01 & 0 & 0 & 0 & 0 & 3.43 & 2.87 & 2.87 & 2.80 \\ 45.35 & 0 & 3.08 & 2.36 & 2.78 & 2.80 & 1.13 & 0.94 & 0.94 & 0.92 \\ 4.54 & 2.23 & 0 & 0 & 0 & 0 & 0 & 0 & 0 & 0 \\ 5.90 & 2.90 & 0 & 0 & 0 & 0 & 0 & 0 & 0 & 0 \\ 4.67 & 2.29 & 0 & 0 & 0 & 0 & 0 & 0 & 0 & 0 \\ 4.40 & 2.16 & 0 & 0 & 0 & 0 & 0 & 0 & 0 & 0 \\ 3.64 & 4.47 & 0 & 0 & 0 & 0 & 0 & 0 & 0 & 0 \\ 3.41 & 4.18 & 0 & 0 & 0 & 0 & 0 & 0 & 0 & 0 \\ 3.25 & 3.99 & 0 & 0 & 0 & 0 & 0 & 0 & 0 & 0 \\ 4.31 & 5.29 & 0 & 0 & 0 & 0 & 0 & 0 & 0 & 0 \end{array}\right).\]
This reduced system $\hat\lambda$ has rank $4$ instead of full rank for the original network.
In Table~\ref{table:core-periphery}, the default time for each bank is reported under the original (full) and reduced networks.  Notably, though these default times are not identical, they capture the general behavior quite accurately.  We will take advantage of this notion of the reduced system in the following sections.
\begin{table}
\centering
\begin{tabular}{r|*{10}{c}}
\textbf{Bank} & 1 & 2 & 3 & 4 & 5 & 6 & 7 & 8 & 9 & 10 \\ \hline
Full Network & -- & 0.173 & 0.173 & 0.042 & 0.179 & 0.173 & -- & -- & -- & 0.173 \\
Reduced Network & -- & 0.172 & 0.172 & 0.041 & 0.189 & 0.172 & -- & -- & -- & 0.172
\end{tabular}
\caption{Default times for institutions in Example~\ref{ex:core-periphery} under a single realisation of the external assets. Normed difference between these default times is $0.0097$.}
\label{table:core-periphery}
\end{table}
\end{example}
In the next section, we start from the above example and discuss the mean field limit that results from sending the number of banks to infinity in a suitable way.

\section{A core-periphery mean field model}\label{sec:mf}

As already mentioned in the introduction of this paper, the main motivation for the present section is theoretical in nature: the aim being to close a gap between the network literature on systemic risk and recent mean field contagion models. Specifically, we will relate the finite interbank system from Section \ref{sec:finite} to a mean field limit described by a conditional McKean--Vlasov problem akin to the problems studied in \cite{HLS18, HS18, LS18a, NS18, NS17}.

Aside from this theoretical perspective, there are several good practical reasons for studying the mean field limit of the model proposed in Section 3. First of all, the mean field limit rigorously facilitates a low parameter space, which can allow for a clearer identification of the main mechanisms at work, and which may serve as a vehicle for defining macroscopic events. Secondly, the mean field limit can allow for more efficient numerical simulations by replacing a large system of coupled SDEs with a single limiting object. Thirdly, one is unlikely to have precise data for the liabilities matrix, but the mean field limit makes a rigorous case for working with an approximate distribution. Finally, the lower parameter space can facilitate calibration to the average of a large sample of banks, as opposed to the unfeasible task of fitting fully heterogeneous parameters in the finite dynamic system.

In the present section, we focus on the financial motivation and thus restrict attention to the core-periphery structure discussed at the end of Section \ref{sec:finite}. This leads us to introduce a particular intuitive and tractable mean field point of view on the Eisenberg--Noe style interbank system from Section \ref{sec:finite}. While we give a careful presentation of the mathematical results for the mean field limit, along with a numerical example, the theoretical details are left to Section \ref{sec:well-posedness}, which treats a more general framework.

\subsection{A simple model of core-periphery interbank networks}\label{subsec:mean_fiel_model}

Before addressing the mean field setup, consider a finite financial system of size $n=m_0$ consisting of $m_\mathrm{c}$ core banks and $m_\mathrm{p}:=m_0-m_\mathrm{c}$ peripheral banks, where the peripheral banks are defined by not having any liabilities towards each other. In other words, the liabilities matrix for the system can be written in the block form
\begin{equation}\label{eq:lambda_block}
\lambda_{m_0\times m_0} = \left( \begin{array}{cc}
A & B\\{}
C & 0
\end{array}\right) = \left(\begin{array}{cc}
A_{ m_\mathrm{c} \times m_\mathrm{c}} & B_{m_\mathrm{c}  \times m_\mathrm{p} }\\{}
C_{m_\mathrm{p}\times m_\mathrm{c} } & 0
\end{array}\right).
\end{equation}
We could also work with sparse connections between the peripheral banks, but the idea here is to keep the model simple and focus on the core-periphery interactions, so we simply zero out the periphery-to-periphery interactions in line with the discussion in Section \ref{sec:model} above.

In general, the $\lambda_{ij}$'s can be completely different for each pair of banks $(i,j)$, but it is natural to suppose that they are nonetheless representative of some underlying structure in terms of how the core and peripheral banks interact. One tractable way of capturing this is to declare that
\begin{equation}\label{eq:lambda_noise}
\lambda_{ij}=(1+\epsilon_i)(1+\delta_j)\hat{\lambda}_{ij},
\end{equation}
where the $(\epsilon_i,\delta_i)$'s are random samples from $P\otimes P$, for some distribution $P$ with mean zero and support in $[-1,1]$ (or similar), and the $\hat{\lambda}_{ij}$'s are the fixed entries of a nicer matrix $\hat{\lambda}_{m_0\times m_0}$, which defines the underlying structure of the network. For concreteness, let us consider the specific example
\begin{equation}\label{eq:lambda_hat_concrete}
\hat{\lambda}_{m_0\times m_0} 
:= \left(\begin{array}{cc}
\hat{A} & \hat{B}\\{}
\hat{C} & 0
\end{array}\right) :=
\left(\begin{array}{ccc} \hspace{-20pt}

\left(\begin{array}{cc}
0 & 15\\{}
45 & 0
\end{array}\right)

& \left(\begin{array}{ccc}
0 & \cdots & 0\\{}
3 & \cdots & 3
\end{array}\right)_{2\times m_{\mathrm{p},1}} 

& \left(\begin{array}{ccc}
3 & \cdots & 3\\{}
1 & \cdots & 1
\end{array}\right)_{2\times m_{\mathrm{p},2}}\\[12pt]

\left(\begin{array}{cc}
5 & 2\\{}
\vdots  & \vdots \\{}
5 & 2
\end{array}\right)_{m_{\mathrm{p},1}\times 2} & 0 & 0 \\[20pt]

\left(\begin{array}{cc}
4 & 3\\{}
\vdots & \vdots \\{}
4& 3
\end{array}\right)_{m_{\mathrm{p},2} \times 2}

& 0 & 0
\end{array}\right).
\end{equation}
In this case, there are \emph{two core banks} (i.e., $m_\mathrm{c}=2$) and the \emph{peripheral banks} can be divided into \emph{two groups} (of size $m_{\mathrm{p},1}$ and $m_{\mathrm{p},2}$ with $m_{\mathrm{p}}=m_{\mathrm{p},1}+m_{\mathrm{p},2}$)  in terms of how they interact with the core. Nevertheless, the real connections are subject to noise---modelled by \eqref{eq:lambda_noise}---and hence $\lambda_{m_0\times m_0}$ can feature much more asymmetry in the core-periphery interactions.

\subsubsection{Growing the number of banks to infinity}\label{subsec:growing_number_banks}

Starting from the above system of size $m_0$, we now introduce a natural way of growing it to infinity. Based on the `initializing' system of size $m_0$, for each $m\geq1$, the idea is to construct a system of size $n=mm_0$ according to the following procedure:
\begin{itemize}
	\item multiply each of the $m_\mathrm{c}$ core banks into $m$ analogous entities (that can be seen as sub-entities comprising a core bank of $m$ times the size of the original), for a total of $mm_\mathrm{c}$ core entities
	\item multiply each of the $m_\mathrm{p}$ peripheral banks into $m$ analogous peripheral entities, for a total of $mm_\mathrm{p}$ peripheral entities
	\item let the external assets of each entity have an i.i.d.~copy of the same initial condition as well as the same drift and volatility as the original bank up to an i.i.d.~noise.
	\item impose that the $m$ sub-entities of a given core bank do not have liabilities towards each other (which is enforcing no self-dealing within the core bank)
	\item impose that, up to noise, the liability positions between a given core and peripheral entity are the same as those between the original core and peripheral bank only scaled by $m^{-1}$ (meaning that the underlying network structure is preserved and, the noise aside, each entity has the same total liabilities as the original bank of which it is a copy)
\end{itemize}

To be precise, starting from an underlying matrix $\hat{\lambda}_{m_0\times m_0}$ as in the example \eqref{eq:lambda_hat_concrete}, we construct the $n=mm_0$'th system by first fixing the \emph{underlying} network structure through the mapping
\begin{equation}\label{update_lambda_hat}
\hat{\lambda}_{m_0\times m_0} 
= \left(\begin{array}{cc}
\hat{A} & \hat{B}\\{}
\hat{C} & 0
\end{array}\right) \quad \longmapsto \quad \hat{\lambda}_{mm_0\times mm_0} := \frac{1}{m}\left(\begin{array}{cccccc}
\hat{{A}} & \cdots & \hat{{A}} & \hat{{B}} & \cdots & \hat{{B}}\\
\vdots & \ddots & \vdots & \vdots & \ddots & \vdots\\
\hat{{A}} & \cdots & \hat{{A}} & \hat{{B}} & \cdots & \hat{{B}}\\
\hat{{C}} & \cdots & \hat{{C}} & 0 & \cdots & 0\\
\vdots & \ddots & \vdots & \vdots & \ddots & \vdots\\
\hat{{C}} & \cdots & \hat{{C}} & 0 & \cdots & 0
\end{array}\right),
\end{equation}
and then the liabilities matrix $\lambda_{mm_0\times mm_0}$ is defined by setting
\begin{equation}\label{eq:lambda_random}
\lambda_{ij}:=(1+\epsilon_i)(1+\delta_j)\hat{\lambda}_{ij}
\end{equation}
as in \eqref{eq:lambda_noise}. Here the $\hat{\lambda}_{ij}$'s are now the entries of $\hat{\lambda}_{mm_0\times mm_0}$ given in \eqref{update_lambda_hat}, and the $(\epsilon_i,\delta_i)$'s are random samples drawn from the distribution $P\otimes P$, for a given probability measure $P$.

We stress that, due to the noise, the entries of $\lambda_{mm_0\times mm_0}$ can be entirely heterogeneous both across and within groups. In particular, a given sub-entity of the first core bank may interact differently with all entities representing the second core bank, and any given core entity may interact differently with all peripheral entities across the two groupings. Nevertheless, by passing to the mean field limit we may hope to discover the underlying structure as defined by the `initializing' matrix $\hat{\lambda}_{m_0\times m_0}$. The remaining subsections illustrate this for the specific example provided by \eqref{eq:lambda_hat_concrete}.

\subsection{The dynamics of the finite interbank system}\label{subsubsec:dynamics_finite_sys}
Returning to the Eisenberg--Noe model from Section \ref{sec:finite}, consider a finite system of size $n$, and note that the capital (\ref{eq:Kapital}) of each bank $i=1,\ldots,n$ can be written as a coupled system
\begin{align}\label{1st_K_i}
K_i(t) = x_i(t) e^{\int_t^T \! \mu_i (s)ds  } - \bar{L}_i(T) - T (1-R_2) \int_0^t (1-{\textstyle\frac{s}{T}}) d\loss^{\!n}_i(s),
\end{align}
where 
\begin{equation}\label{eq:ith_loss_process}
\loss^{\!n}_i(t):= \sum_{j=1}^{n} \lambda_{ji} \mathbf{1}_{t\geq \tau_j} \quad \text{with} \quad \tau_j=\inf\{ t\geq0 : K_i(t) < 0  \}.
\end{equation}
For simplicity, we will assume that $\bar{L}_i(T)=T\Lambda_i $ for some constants $\Lambda_1,\ldots,\Lambda_{n}>0$, meaning that, for each bank $i$, its total liabilities net of interbank assets over the period $[0,T]$ is given by the positive amount $T\Lambda_i$. In particular, if a bank is a net lender in the interbank market, then the surplus is more than offset by external liabilities, which is in line with what is observed in practice. Recalling that each $x_i(t)$ is a geometric Brownian motion, it is convenient to work with the following logarithmic `distances-to-default' defined by
\begin{equation}\label{eq:dist-to-def}
X_i(t) := \log \biggl\{    \frac{x_i(t) \exp\{ \int_t^T\!\mu_i(s)ds\}  }{\Lambda_i T + T(1-R_2)  \int_0^t (1-{\textstyle\frac{s}{T}}) d\loss^{\!n}_i(s) } \biggr\},
\end{equation}
for $i=1,\ldots,n$. This transforms the system \eqref{1st_K_i}-\eqref{eq:ith_loss_process} into the equivalent system
\begin{equation}\label{dist_to_default_system}
\left\{ \begin{aligned}
\loss^{\!n}_i(s) &= \sum_{j=1}^{n} \lambda_{ji} \mathbf{1}_{t\geq\tau_j}, \quad   \tau_j=\inf \{ t \geq 0 :  X_j(t) < 0  \} 
\\
dX_i(t) &= -\frac{\sigma_i(t)^2}{2}dt + \sigma_i(t) dW_i(t) - d\log\Bigl\{ 1 + \frac{(1-R_2)}{\Lambda_i} \int_0^t (1-{\textstyle\frac{s}{T}}) d\loss^{\!n}_i(s) \Bigr\} \\
X_i(0)&=\log \{ x_i(0) \} -\log \{\Lambda_i T\} +\int_0^T\!\mu_i(t)dt,
\end{aligned} \right.
\end{equation}
where we recall that $x_i(0)e^{\int_0^T\!\mu_i(t)dt}>\bar{L}_i(T)=\Lambda_i T$, by Assumption \ref{ass:initial-default}, which guarantees $X_i(0)>0$.
Moreover, we will assume that the Brownian motions are only correlated through a common noise, meaning that we can write $W_i(t)=\rho B_0(t) + \sqrt{1-\rho^2}B_i(t)$ for independent Brownian motions $B_0,\ldots B_n$.

Since the default times $\tau_i$ are part of the equations for the distances-to-default $X_i$, one has to be careful that there can be several solutions to \eqref{dist_to_default_system} depending on how one decides if a bank is in default at time $t$. For example, even if $X_i(t\shortminus)>0$ for all the banks, one may succeed in defaulting a few---or even all---of them at time $t$, provided the corresponding increase of the $\mathbf{L}^n_i(t)$'s make $X_i(t)$ drop below zero for precisely the banks we decided to default, where $X_i(t):=X_i(t\shortminus)-\{\text{jump from increase in } \mathbf{L}^n_i(t)\}$. Moreover, if it is indeed the case that $X_i(t\shortminus)\leq 0$ for some bank $i$, then we need to decide (in a way that is consistent with the equations) how this propagates as it may start a cascade of defaults at the same time $t$, for which there can again be multiple possible choices (much in line with the previous example).

The solution we choose to work with here amounts to picking the solution that gives the greatest clearing capital in the Eisenberg--Noe framework (see Lemma \ref{lemma:exist}) with any instantaneous default cascades resolved by an analogue of the Eisenberg--Noe fictitious default algorithm. In Section \ref{Sec:determine_cascades} we show how this corresponds to amending the particle system \eqref{dist_to_default_system} with what we call the \emph{cascade condition}---see \eqref{eq:finite_PJC_fragile} for its precise definition and derivation, albeit in a more general setting than the specific example considered here. This condition is intrinsic to the particle system formulation of our interbank model, and it uniquely determines the loss processes $\loss^{\!n}_i$ at `time $t$' given the state of the system immediately before, namely at `time $t\shortminus$' in the sense of taking a left limit. In particular, this ensures that \eqref{dist_to_default_system} has a unique strong c\`adl\`ag solution, as argued in the proof of Proposition \ref{prop:particle_sys_well-posed}. We will not discuss this condition any further here, but we briefly present its mean field analogue in Section \ref{subsubsec:concrete_cascades} below.

\subsection{The dynamics of the mean field limit}\label{subsec:dyn_mean_field}

For any given $m\geq1$, and a fixed initial size $m_0$, we will now consider the interbank system of size $n=mm_0$ modelled by \eqref{dist_to_default_system}, where the liabilities matrix and the other parameters are noisy realisations of the underlying core-periphery network structure defined by the concrete example \eqref{eq:lambda_hat_concrete}. Specifically, we impose that:
\begin{itemize}
	\item the liabilities matrix $\lambda_{mm_0\times mm_0}$ is constructed  from \eqref{eq:lambda_hat_concrete} via random samples $(\epsilon,\delta)$ from the distribution $P\otimes P$ as outlined in the previous subsection.
	\item the $i$'th set of parameters $(x_i(0),\sigma_i,\mu_i,\Lambda_i)$ is given as a function of the $i$'th random sample $\delta_i$, where the function is the same for all entities of the same type (out of the two core types and two peripheral types defined by  \eqref{eq:lambda_hat_concrete}).
\end{itemize}

Based on the analysis in Section \ref{sec:well-posedness}, it follows that the system we just described has a well-defined mean field limit as $n\rightarrow\infty$ (see, in particular, Section \ref{subsec:example_sect4}). This limit captures the coupled evolution of the four underlying types of banks (two core and two peripheral) after averaging over the infinitely many entities within each type. Let $I \subset[-1,1]$ denote the support of the distribution $P$, and let $\theta\mapsto (\sigma_{l,\theta},\mu_{l,\theta}, \Lambda_{l,\theta})$  denote the parameter function for each of the four types $l=1,\ldots,4$. Let us say that $l=1,2$ are the two core types and $l=3,4$ are the two peripheral types (in correspondence with $m_\mathrm{c}=1+1$ and $m_\mathrm{p}=m_\mathrm{p,1}+m_\mathrm{p,2}$ in \eqref{eq:lambda_hat_concrete}). To simplify the presentation of the mean field limit below, we write
\[
Y_{l,\theta}(t) :=  - \int_0^t \!\frac{\sigma_{l,\theta}(s)^2}{2}ds + \int_0^t\!\sigma_{l,\theta}(s) d(\rho B_0(s) + \sqrt{1-\rho^2}B_l(s)),
\]
and
\[
C_{l,\theta}:= (1+\theta)\frac{1-R_2}{\Lambda_{l,\theta}},
\]
for $l=1,\ldots,4$, and $\theta \in I$, where $B_0$ and $B_1,\ldots,B_4$ are independent Brownian motions. 

As the number of banks grows to infinity (in accordance with Section \ref{subsec:growing_number_banks}), the results of Section \ref{sec:well-posedness} below show that mean field limit of the finite interbank system is given by the coupled McKean--Vlasov problem
\begin{equation}\label{eq:concrete_mean_field_limit}
\begin{cases}
\displaystyle{\widetilde{\mathbf{L}}_{l}(t)  = \int_{I} \int_{0}^\infty   \mathbb{P}( t\geq \tau^x_{l,\theta} \mid B_0) V_{0}^l(x\,|\,\theta) dxdP(\theta)}, \;\;\; \tau^x_{l,\theta} =\inf \{ t\geq 0 : X^x_{l}(t) \leq 0 \},   \\[10pt]
\displaystyle X^{x}_{l,\theta}(t) = x + Y_{l,\theta}(t) - \log \Bigl( 1 + C_{l,\theta} \sum_{i=1}^4\tilde{\lambda}_{il}  \int_0^t \!(1-\tfrac{s}{T})d\widetilde{\mathbf{L}}_i(s)  \Bigr),  \end{cases}
\end{equation}
for $l=1,\ldots,4$, where the strength of the core-core and core-periphery interactions are fully captured by the simplified liabilities matrix
\begin{equation}\label{eq:lambda_tilde}
\tilde{\lambda}_{4 \times 4} = \left( \begin{array}{cccc}
0 & 15 &   0 m_{\mathrm{p},1} &  3m_{\mathrm{p},2}\\{}
45 & 0 &  3 m_{\mathrm{p},1} &  1m_{\mathrm{p},2}\\{}
5m_{\mathrm{p},1}  & 2m_{\mathrm{p},1} & 0 & 0\\{}
4m_{\mathrm{p},2}  &  3m_{\mathrm{p},2} & 0 & 0
\end{array}\right),
\end{equation}
and $V_{0}^l(\cdot\,|\,\theta)$ are the  initial densities for the distances-to-default of the four types $l=1,\ldots,4$ conditional on $\theta\in I$. See \eqref{eq:initial_condition} below for how these initial conditions relate to the parameters and the initial laws of the external assets.

Note that the contagion in \eqref{eq:concrete_mean_field_limit} is no longer felt as the result of a single default event. Instead, there are now four `infinite collections' of entities (corresponding to the four underlying types) who feel the contagion through the mutual exposures $\tilde{\lambda}_{ij}$ in relation to the \emph{proportion of defaults} within each infinite collection (given by the loss processes $\widetilde{\mathbf{L}}_{l}$, for $l=1,\ldots,4$). In the McKean--Vlasov formulation \eqref{eq:concrete_mean_field_limit}, these \emph{proportions} of default are really `average' \emph{probabilities} of default for the entities of each type, but see also the SPDE formulation \eqref{eq:loss_SPDE}-\eqref{eq:system_SPDE} below which makes the interpretation in terms of \emph{proportions} more explicit.

\begin{remark} The relative number of core and peripheral entities are specified by $m_0=m_\mathrm{c}+m_\mathrm{p}$, where $m_\mathrm{c}=2$ and $m_\mathrm{p}=m_\mathrm{p,1}+m_\mathrm{p,2}$. For a finite system of any size, the $m_\mathrm{c}=2$ collections of core sub-entities each comprise a fraction $\frac{1}{m_0}$ of the system, while a fraction $\frac{m_\mathrm{p,1}}{m_0}$ makes up the first collection of peripheral banks, and the final fraction $\frac{m_\mathrm{p,2}}{m_0}$ makes up the second collection of peripheral banks. As a result, in the mean field limit we have that: (i) the core feels the contagion from a given \emph{proportion} of defaults within the two peripheral groups at a strength multiplied by $m_\mathrm{p,1}$ and $m_\mathrm{p,2}$, respectively, and, similarly, (ii)  the peripheral groups feel contagion from the core at a strength multiplied by $m_\mathrm{p,1}$ and $m_\mathrm{p,2}$.
\end{remark}

\subsubsection{Stochastic evolution equations for the densities}\label{subsubsec:stoch_evolution}
Consider, for simplicity of presentation, the case where $P$ is a Dirac mass at zero (so $\theta$ drops from the equations), meaning that there is no additional heterogeneity within each of the four types (for practical purposes, one can think of having replaced the parameters by their mean values averaged over $P$). By applying It\^o's formula, and taking expectations conditional on $B^0$, we can reformulate \eqref{eq:concrete_mean_field_limit} as a system of four coupled (nonlinear and nonlocal) stochastic partial differential euqations (SPDEs). These SPDEs govern the (stochastic) densities of the distances-to-default for the four infinite collections of banks of a given type (conditional on the common noise $B^0$). This is arguably the more natural point of view for the dynamics of the mean field limit. Specifically, we have
\begin{equation}\label{eq:loss_SPDE}
\widetilde{\mathbf{L}}_{l}(t)= 1- \int_0^\infty V^l_t(x)dx, \quad \mathrm{for} \quad l=1,\ldots,4,
\end{equation}
where $V=(V^1,\ldots,V^4)$ solves a coupled system of SPDEs on the positive half-line of the form
\begin{align}\label{eq:system_SPDE}
dV_t^l(x) =  \frac{\sigma_l^2}{2}  \bigl( \partial^2_{xx} V^l_t(x) - \partial_x V^l_t(x) \bigr)dt - \sum_{i=1}^4 \tilde{\lambda}_{il} f_l(t)  \partial_x V^l_t(x) d \widetilde{\mathbf{L}}_{i}(t)+ \rho\sigma_l \partial_x V^l_t(x) dB_0(t),
\end{align}
with the Dirichlet boundary condition $V_t^l(0)=0$, for each $l=1,\ldots,4$. Note that this point of view makes clear the precise nature of the contagion: namely a nonlinear transportation of mass towards the origin, at a rate that is proportional to the current rates of default within each infinite collection of banks (as mediated by the mutual exposures $\tilde{\lambda}_{ij}$ between the four infinite collections). Indeed, in $dt$ amounts of time, the proportion of defaults within the $l$'th collection of banks is precisely $d\widetilde{\mathbf{L}}_{l}(t)$, since $\widetilde{\mathbf{L}}_{l}(t)$ gives the total loss of mass for the $l$'th collection of banks up to and including time $t$ (i.e., the accumulated proportion of defaults).

We note that, due to the irregularity in time of the common noise $B^0$, the time derivative of $\widetilde{\mathbf{L}}_{l}(t)$ does not exist if $\rho>0$, but the process is increasing, so the integrals against it are well-defined. Still, in order for the SPDE formulation \eqref{eq:system_SPDE} to make sense globally, as it is, we are implicitly relying on each $\widetilde{\mathbf{L}}_l$ being continuous. As we already discussed above, this may be violated, meaning that one (or more) of the loss processes $\widetilde{\mathbf{L}}_l$ can undergo a jump discontinuity, corresponding to an instantaneous macroscopic default cascade within the infinite collection of the $l$'th type (or types). Nevertheless, one can still attach a rigorous meaning to the SPDE, as long as it is understood to only hold on the random intervals between jump times in the following sense: at a jump time $t$, the densities are shifted according to the jump size, and thus the system of SPDE is restarted from the new set of initial conditions
\begin{equation}\label{eq:new_initial_SPDE}
V_t^l(x):=V_{t\shortminus}^l\bigl(x+\Theta_l(t,\Delta \mathbf{L}_l(t)) \bigr), \qquad x\in\mathbb{R}_{+},
\end{equation}
where $V_{t\shortminus}^l$ is the pointwise left-limit of $V_{s}^l$ as $s\uparrow t$,
\[
\Theta_l(t,z):=  \log \Bigl( 1 + C_{l}\! \int_0^{t\shortminus} \!(1-\tfrac{s}{T})d\mathbf{L}_l(s) + C_{l}(1-\tfrac{t}{T})z \Bigr) - \log \Bigl( 1 + C_{l}\! \int_0^{t\shortminus} \!(1-\tfrac{s}{T})d\mathbf{L}_l(s) \Bigr),
\]
and
\[
\mathbf{L}_l(t):= \sum_{i=1}^4 \tilde{\lambda}_{il}  \widetilde{\mathbf{L}}_{i}(t), \qquad \text{for} \quad l=1,\ldots,4.
\]
Note that we must allow the Dirichlet boundary condition to be violated when restarting at a jump time (and, as Remark \ref{eq:dirichlet_vs_jump} below points out, it is also a loss of the Dirichlet condition that leads to a jump). As concerns the timing and the sizes of the jumps, these are defined (in a c\`adl\`ag fashion) by what we call the \emph{mean field cascade condition}, namely
\begin{equation}\label{eq:PJC_SPDE}
\begin{cases}
\Delta \mathbf{L}_l(t) = \lim_{\varepsilon\downarrow 0} \lim_{m \uparrow \infty} \Delta^{\!(m,\varepsilon)}_{t,l}, \\[5pt] \Delta^{\!(m,\varepsilon)}_{t,l}= \Xi_l\bigl(t, \varepsilon +  \Delta^{\!(m-1,\varepsilon)}_{t,\hspace{0.5pt}\cdot}\bigr), \quad m\geq1, \\[5pt]
\Delta^{\!(0,\varepsilon)}_{t,l}= \Xi_l(t,\varepsilon),
\end{cases}
\end{equation}
where
\[
\Xi_l(t,z)=\Xi_l(t,z_1,\ldots,z_4):= \sum_{i=1}^4 \tilde{\lambda}_{il} \int_0^{\Theta_i(t,z_i)} V_{t\shortminus}^i(x)dx.
\]
This condition for the jumps is the mean field analogue of the cascade condition for the finite system discussed at the end of Section \ref{subsubsec:dynamics_finite_sys}. Intuitively, it amounts to subjecting the system to an arbitrarily small shock that ignites a fictitious default cascade and then keeping track of how it propagates in relation to the size of the initial shock: as we send the size of the initial shock to zero, either the size of the fictitious cascade goes to zero, and there is then no jump, or it converges to something positive, and this positive value is then the size of the jump corresponding to a bona fide instantaneous default cascade. The mean field cascade condition is carefully developed and motivated in Section \ref{sec:well-posedness}. It is a special case of the condition \eqref{eq:limit_PJC} in Section \ref{the_mean_field_limit}, which addresses a more general framework than the one considered in this section.

\begin{remark}\label{eq:dirichlet_vs_jump}
	As we note in Section \ref{subsubsec:concrete_cascades} below, the cascade condition gives $\Delta{\mathbf{L}}_l(t)=0$ for every $l=1,\ldots,4$, whenever each left-limit density $V^l_{t\shortminus}(x)$ vanishes as $x\downarrow 0$. More generally, there is no jump at time $t$ provided $\Xi_l(t,\epsilon)<\epsilon$ for small enough $\epsilon>0$, for each $l=1,\ldots,4$, as follows by the same arguments as in Section \ref{criterions_jump_or_not}. To see how this condition being violated can lead to a jump, consider the case where, at some time $t$, we have
	\begin{equation}\label{eq:condition_for_jump}
	\int_0^{\Theta_i(t,\epsilon)} V_{t\shortminus}^i(x)dx \geq \tilde{\lambda}_{ij}^{-1} \epsilon \quad \text{and} \quad 	 \int_0^{\Theta_j(t,\epsilon)} V_{t\shortminus}^j(x)dx \geq \tilde{\lambda}_{ji}^{-1} \epsilon,
	\end{equation}
	for small enough $\epsilon>0$, for some pair of banks $i,j\in\{1,\ldots,4\}$ with $\tilde{\lambda}_{ij}>0$ and $\tilde{\lambda}_{ji}>0$, meaning that banks in the $i$'th and $j$'th groups are exposed to each other (with $i=j$ being a possibility, provided banks within the same group are exposed to each other in a way that is significant in the mean field limit). Now suppose for a contradiction that $\Delta \mathbf{L}(t)=0$. Then the cascade condition implies that we can make $\lim_{m \uparrow \infty} \Delta^{\!(m,\varepsilon)}_{t,\cdot}$ as small as we like (since it vanishes as $\epsilon\downarrow0$). Thus, \eqref{eq:condition_for_jump} together with the dominated convergence theorem gives
	\[
	\lim_{m \uparrow \infty} \Delta^{\!(m,\varepsilon)}_{t,i}= \Xi_i\bigl(t, \varepsilon + \lim_{m \uparrow \infty} \Delta^{\!(m,\varepsilon)}_{t,\cdot} \bigr) \geq \epsilon + \lim_{m \uparrow \infty} \Delta^{\!(m,\varepsilon)}_{t,j},
	\]
	for small enough $\epsilon>0$, and the same conclusion holds with $i$ and $j$ interchanged. Together, these two inequalities yield a contradiction, and hence we conclude that there must indeed be a jump. With $(i,j)=(1,4)$ this corresponds to the situation at the jump time in Figure \ref{fig:heat_plots}. Of course, there is nothing sacred about the size $n=4$, and, unlike the particular interactions in \eqref{eq:lambda_tilde}, we could in general have a nonzero diagonal, so $i=j$ is perfectly valid if the core sub-entities within a given collection are exposed to contagion from each other.
\end{remark}

In order to better illustrate the dynamics of the mean field limit, we present a numerical simulation of the system of SPDEs \eqref{eq:loss_SPDE}-\eqref{eq:system_SPDE} with jumps governed by the mean field cascade condition \eqref{eq:PJC_SPDE} via \eqref{eq:new_initial_SPDE}. The outcome is plotted in Figure  \ref{fig:heat_plots}, which shows a heat plot for each of the four solutions $(t,x)\mapsto V^l_t(x)$ to the coupled system of SPDEs. The simulation is performed using an adaptation of the numerical scheme proposed in \cite[Sect.~4.2]{LS18a}.

\begin{figure}[H]
	\centering
	\begin{minipage}[b]{0.49\linewidth}
		\includegraphics[width=\textwidth]{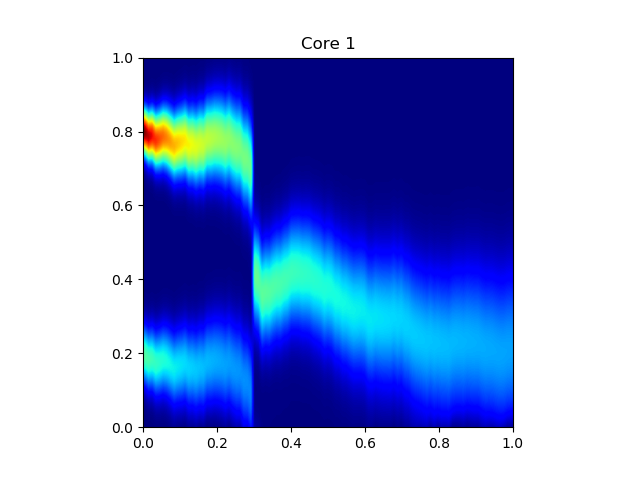}
	\end{minipage}\hfill
	\begin{minipage}[b]{0.49\linewidth}
		\includegraphics[width=\textwidth]{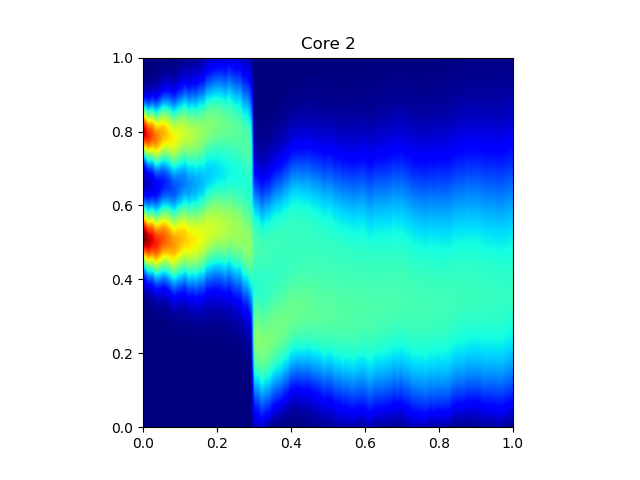}
	\end{minipage}
	\begin{minipage}[b]{0.49\linewidth}
		\includegraphics[width=\textwidth]{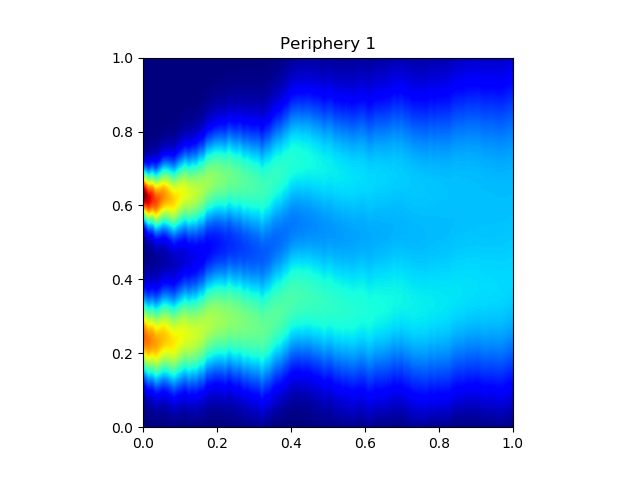}
	\end{minipage}\hfill
	\begin{minipage}[b]{0.49\linewidth}
		\includegraphics[width=\textwidth]{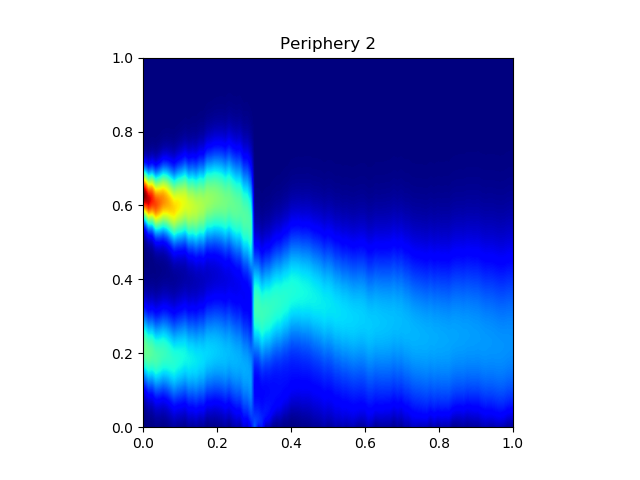}
	\end{minipage}
	\caption{The figure shows four heat plots for the distances-to-default of the four infinite collections (for a given realisation of the common noise $B^0$), where the horizontal axis is time, and the vertical axis is the distance-to-default. The interactions are given by $\tilde{\lambda}_{4\times4}$ in \eqref{eq:lambda_tilde} with $m_{\mathrm{p},1}=m_{\mathrm{p},2}=4$, as in Example \ref{ex:core-periphery}. The parameters are constant (with `Periphery 1' and `Core 2' having a more positive drift), and the initial conditions can be read off the heat plots at time $t=0$. The common noise starts out on a slight negative trend, which instigates a default cascade between the low performing fractions of `Core 1' and `Periphery 2', resulting in both fractions defaulting in their entirety. Moreover, these defaults spill into a severe downgrading of the financial health of `Core 2'. However, `Core 2' was otherwise performing well, so
		only a very small proportion of it defaults, and since `Periphery 1' is only exposed to defaults in `Core 2', this means that these events have no significant impact on `Periphery 1'.}
	\label{fig:heat_plots}
\end{figure}

In terms of the related mathematical literature, we stress that the papers \cite{DNS19, HLS18, LS18a, LS18b, NS17} are focused on `one-dimensional' variations of McKean--Vlasov problems akin to \eqref{eq:concrete_mean_field_limit}, whereas the recent paper \cite{NS18} studies a coupled system analogous to \eqref{eq:concrete_mean_field_limit} with only minor differences. In particular, \cite{NS18} provides an existence result based on a Schauder fixed point argument (but no results on uniqueness) and studies criteria under which any solution to the system must incur a blow-up. However, unlike the present paper, the results in \cite{NS18} neither address the relation to a finite particle system, nor do they consider a condition for uniquely specifying the jump sizes (in contrast to our cascade condition).

\subsubsection{On the ruling out of instantaneous default cascades}\label{subsubsec:concrete_cascades}
Due the averaging effect of passing to the mean field limit, one could reasonably expect the limiting loss processes $\widetilde{\mathbf{L}}_{l}$ to evolve continuously, and anything else would be somewhat surprising given that the McKean--Problem is driven by continuous Brownian dynamics. In many cases, it will indeed be true that the system evolves continuously. However, as we have just seen in Figure \ref{fig:heat_plots}, depending on the parameters, one or more of the loss processes may see their speed of increase diverge to infinity in a way that results in a jump discontinuity (see also \cite[Thm.~2.7]{LS18a} in a simplified setting). Naturally, such an event can be seen as defining an instantaneous `macroscopic' default cascade that survived the passage to the mean field limit.

In order to decide whether the solution is continuous or not, and in order to specify the size of a potential jump, we must amend \eqref{eq:concrete_mean_field_limit} with the \emph{mean field cascade condition} introduced above. As already mentioned, the details of this are reserved for Section \ref{sec:well-posedness}, however, it is worth taking a few moments to preview a simple result on when jumps can be ruled out, which illustrates the workings of the cascade condition.

Section \ref{criterions_jump_or_not}
presents a simple criterion for the initial densities that rules out a jump immediately after initializing the system. As above, we consider the case where there is no dependence on $\theta$, and note that the initial densities are then of the form
\begin{equation}\label{eq:initial_condition}
V_{0}^l(x)=v^l_0\bigl(\Lambda_{l} Te^{-\int_0^T\mu_{l}(s)ds}e^x\bigr)\Lambda_{l} Te^{-\int_0^T\mu_{l}(s)ds}e^x, \qquad \text{for } x>0,
\end{equation}
where $v_0^l$ is the initial density for the external asset process of banks of type $l$ (which is supported on $x>\Lambda_{l} Te^{-\mu_{l}T}$).
If, for every $l=1,\ldots,4$, there is a small $\epsilon_{l}>0$ such that
\begin{equation}\label{eq:concrete_no_intial_jump}
(1-R_2)T\sum_{j=1}^4 \tilde{\lambda}_{jl}     v^l_0 (x)e^{- \mu_{l}T} < 1, \qquad \text{for all } x \in(\Lambda_{l} Te^{-\mu_{l}T},\Lambda_{l} Te^{-\mu_{l}T} +\epsilon_{l}),
\end{equation}
then there is \emph{not} an instant jump at time $t=0$ and the solution remains continuous for a small amount of time after initialization.

\begin{remark}
	If each $x\mapsto v_0^l(x)$ is continuous near the boundary $x=\Lambda_{l} Te^{-\mu_{l}T}$ and $v_0^l(x)$ vanishes as $x\downarrow\Lambda_{l} Te^{-T\mu_{l}T}$, then  clearly \eqref{eq:concrete_no_intial_jump} is satisfied. However, if $v_0^l(x)$ converges to something strictly positive as $x\downarrow \Lambda_{l} Te^{-\mu_{l}T}$, then the  
	values of the parameters become decisive.
\end{remark}

At any given time $t\geq0$, the mean field cascade condition \eqref{eq:limit_PJC} gives the precise criterion for whether or not there is a jump, and what the size of the jump is, if there is one. However, here we only note that there is a simple (non-optimal) time-$t$ analogue of \eqref{eq:concrete_no_intial_jump} for ruling out jumps at any given time $t$ and in some short time interval thereafter. To see what this looks like, let $V^l$ be given by \eqref{eq:system_SPDE}; that is, $V_s^l(x)$ denotes the density of solvent banks of type $l$ with distance-to-default $x$ at time $s$, for a fixed realisation of the common noise $B_0$. If, for each $l=1,\ldots,4$, there is a small $\epsilon_l$ such that
\begin{equation*}
(1-R_2)(T-t)\sum_{j=1}^4 \tilde{\lambda}_{jl}     V_{t\shortminus}^l (x)< 1, \qquad \text{for all } x\in(0,\epsilon_v),
\end{equation*}
then there is no jump at time $t$ and the solution is guaranteed to remain continuous for a short time thereafter. Note that the criterion involves the left limit $V_{t\shortminus}^l(x)=\lim_{s\uparrow t}V_s^l(x)$, meaning that it is based on the state of the system strictly before time $t$ (where the state of the system is given by the distance-to-default densities for the solvent banks of the four types). The reader is referred to Section \ref{the_mean_field_limit} for further details.

\section{Convergence and well-posedness of the mean field}\label{sec:well-posedness}
Recall that we transformed the capital (\ref{eq:Kapital}) of each bank into an interacting particle system \eqref{dist_to_default_system} based on the notion \eqref{eq:dist-to-def} of their logarithmic distances-to-default. The remaining part of the paper is dedicated to a careful analysis of this particle system and its mean field limit. In relation to the previous section, we carry out the analysis under a more general assumption on the coefficients and the structure of the liabilities matrix (as $n\rightarrow \infty$). We then show in Section \ref{subsec:example_sect4} how to obtain the core-periphery model of Section 4 as a special case of this framework.

\subsection{The finite interbank system}
To streamline the presentation, we will work with a general version of the system of interacting distances-to-default \eqref{dist_to_default_system}. That is, we will focus on general particle systems of the form
\begin{equation}\label{general_dist_to_def_sys}
\left\{ \begin{aligned}
X_i(t) &= X_i(0) -\int_0^t b_i(s)ds + \int_0^t\sigma_i(s) dW_i(s) - F \Bigl( \int_0^t g(s) d\loss^{\!n}_i(s) \Bigr)  \\
\loss^{\!n}_i(s) &= \sum_{j=1}^{n}  \lambda_{ji} \mathbf{1}_{t\geq\tau_j}, \quad   \tau_j=\inf \{ t\geq 0 :  X_j(t)\leq0  \},
\end{aligned} \right.
\end{equation}
where each $X_i$ denotes the distance-to-default of `bank $i$' as derived from the expression for  bank $i$'s capital \eqref{eq:Kapital-gen} in the dynamic Eisenberg--Noe framework of Section \ref{sec:finite}. We recall that the transformation from \eqref{eq:Kapital-gen} to an interacting system of distances-to-default was carried out in Section \ref{subsubsec:dynamics_finite_sys}. The precise assumptions for the particle system are outlined in what follows.

First of all, we will assume that, for large $n$, the rank of the liabilities matrices $\lambda_{n\times n}$ is bounded by some value $k$ (uniformly in large $n>k$). Then, for large $n>k$, we have a factorization of the form
\begin{equation}\label{eq:rank_fac}
n\lambda_{n\times n} = U_{n\times k}V_{k \times n}.
\end{equation}
Here the natural choice of factorization comprises the matrices $U_{n\times k}:=(\tilde{u}_{ij})$ and $V_{k\times n}:=(\varsigma_i\tilde{v}_{ij})$ built from the singular value decomposition
$n\lambda_{n \times n}= \tilde{U} \textrm{diag}(\varsigma) \tilde{V}$, where $\textrm{diag}(\varsigma)$ is the $n\times n$ diagonal matrix with the singular values $\varsigma_1,\ldots,\varsigma_n$ on the diagonal (out of which no more than the first $k$ values are nonzero, since the rank is bounded by $k$). 

Spectral decompositions and low rank structures are omnipresent in statistical analysis and the applied sciences more generally. In relation to financial networks and systemic risk, simple aspects of this has, e.g.,~been utilised in contagion models \cite{amini_minca, cont_schaanning} and statistical methods for detecting core-periphery network structures \cite{porter_2016}. More recently, the preprint \cite{spiliopoulos_2019} studies a reduced form model for default clustering (based on interacting default intensities), using a singular value decomposition of the adjacency matrix in a way that is completely analogous to what we do here; namely to study the large population limit of the system under a bounded rank assumption which allows for a more tractable reformulation of the interactions.

\begin{example}
Suppose the liabilities matrix $\lambda_{n\times n}$ is constructed from an underlying matrix $\hat{\lambda}_{m_0\times m_0}$, as in \eqref{update_lambda_hat}--\eqref{eq:lambda_random}, where $\hat{\lambda}_{m_0\times m_0}$ is of the block form \eqref{update_lambda_hat}. Then the rank of $\hat{\lambda}_{m_0\times m_0}$ is at most $2m_0$, and one easily verifies that the rank of $\lambda_{n\times n}$ also stays bounded by $2m_0$ for any system of size $n=mm_0$, for all multiples $m\geq1$. This yields a particular example where the rank remains bouned as $n\rightarrow \infty$. We return to this in Section \ref{subsec:example_sect4}, where we detail how the model in Section 4 appears as a special case of the analysis presented here.
\end{example}

Note that \eqref{eq:rank_fac} amounts to 
\begin{equation*}
n\lambda_{ij}=\sum_{l=1}^k u_{il} v_{lj}, \qquad \text{for every} \quad i,j=1,\ldots,n,
\end{equation*}
where $u_{il}$ is $(i,l)$-entry of $U_{n \times k}$ and $v_{lj}$ is the $(l,j)$-entry of $V_{k\times n}$. Based on this, the utility of \eqref{eq:rank_fac} lies in the simple fact that
we can now decompose the processes $\loss^{\!n}_i$ from \eqref{eq:ith_loss_process} as
\begin{equation}\label{eq:k_loss_processes}
\loss^{\!n}_i(t) = \sum_{l=1}^k v_{li} \mathcal{L}^n_l(t), \quad \text{where} \quad \mathcal{L}^n_l(t):= \frac{1}{n} \sum_{j=1}^{n} u_{jl} \mathbf{1}_{t\geq\tau_j}, \quad \text{for} \quad i=1,\ldots,n.
\end{equation}
Crucially, these new \emph{loss processes} $\mathcal{L}^n_l$, for $l=1,\ldots,k$, do not depend on $i$ and, equally important, the number of them, namely $k$, is fixed as $n\rightarrow \infty$.

In order to make precise the financial meaning of \eqref{eq:k_loss_processes}, we interpret the  entries of $U_{n\times k}$ and $V_{k\times n}$ as latent factors identifying $k$ underlying \emph{channels of contagion} in the network structure (independently of the size $n$):
\begin{itemize}
	\item $u_{jl}$ captures how strongly bank $j$ contributes to the contagion of channel $l$, and
	\item $v_{li}$ captures how exposed bank $i$ is contagion from channel $l$
\end{itemize}

Let $\textbf{u}_1,\ldots,\textbf{u}_n \in \mathbb{R}^k$ denote the $n$ row vectors of $U_{n \times k}$ and $\textbf{v}_1,\ldots,\textbf{v}_n \in \mathbb{R}^k$ denote the $n$ column vectors of $V_{k\times n}$. That is,
\begin{equation}\label{eq:vectors_u_v}
\textbf{u}_i:=(u_{i1},\ldots,u_{ik}) \quad \text{and} \quad \textbf{v}_i:=(v_{1i},\ldots,v_{ki}), \qquad \text{for} \quad i=1,\ldots,n.
\end{equation}
Then bank $i$ is characterized by the pair of $k$-dimensional vectors $\mathbf{u}_i$ and $\mathbf{v}_i$, detailing, respectively, how it contributes to each of the $k$ (latent) channels of contagion and how it is impacted by them. Nevertheless, once we have identified the $k$ channels of contagion, the vector $\mathbf{v}_i$ alone can be seen as identifying bank $i$  in terms of how it is hit by contagion: if two banks have similar $\mathbf{v}_i$'s, they are similar in this crucial sense (although they may of course be dissimilar in terms of how strongly they contribute to contagion overall and to each of the various channels).

\begin{remark} In Section 4 we considered a specific core-periphery structure where the peripheral groups could be identified strictly by how they interact with the core (via the underlying matrix $\hat{\lambda}_{m_0\times m_0}$). In practice, the interbank liabilities may comprise a perturbation of this structure which is more heterogeneous (in addition to the noisiness) but nonetheless still of low rank (e.g.~due to asymmetric but sparse periphery-to-periphery connections). Thus, we may not have a small number of clear-cut groups as in Section 4, but the low rank (uniformly in $n$) would still allow the system to be decomposed into a small number of latent channels of contagion.
\end{remark}

Relying on the decomposition \eqref{eq:k_loss_processes}, the particle system \eqref{general_dist_to_def_sys} is transformed to take the form

\begin{equation}\label{general_system_rank-k}
\left\{ \begin{aligned}
X_i(t) &= X_i(0) +\int_0^t b_i(s)ds  + \int_0^t \sigma_i(s)dW_i(t) -F\Bigl( \sum_{l=1}^{k} v_{li} \int_0^t g(s) d\mathcal{L}^n_l(s) \Bigr) \\
\mathcal{L}^n_l(t) &= \frac{1}{n} \sum_{j=1}^{n} u_{jl} \mathbf{1}_{t\geq\tau_j}, \;\;   \tau_j=\inf \{ t\geq 0 :  X_j(t)\leq0  \}, \quad  l=1,\ldots,k.
\end{aligned} \right.
\end{equation}
Here, and it what follows, we assume that the Brownian motions $W^i$ are correlated through a single common Brownian motion. That is, for each $i=1,\ldots,n$, we have $W_i(t) =\rho B_0(t)+\sqrt{1-\rho^2}B_i(t)$ for a family of independent Brownian motions $B_0,\ldots,B_n$. In terms of the coefficients in \eqref{general_system_rank-k} we impose the following structural conditions which are motivated by the desire to include the original system \eqref{dist_to_default_system} and keep the analysis as simple as possible.
\begin{assumption}\label{MV_assump}
	$F$ is Lipschitz continuous and increasing with $F(0)=0$, while $g$ is continuous, non-negative, and decreasing. Furthermore, the asymmetry of the drifts and the volatilities is of the form $b_i(s)=b_{\mathbf{u}_i,\mathbf{v}_i}(s)$ and $\sigma_i(s)=\sigma_{\mathbf{u}_i,\mathbf{v}_i}(s)$. Finally, $ b_{u,v}$ and $ \sigma_{u,v}$ are deterministic functions of time, and we ask that $|\rho|<1$ and $\epsilon\leq \sigma_{u,v}\leq \epsilon^{-1}$, for a uniform constant $\epsilon>0$, as well as $\rho\sigma_{u,v} \in\mathcal{C}^{\kappa}(\mathbb{R})$, for some $\kappa>1/2$. 
\end{assumption}

Recall that the pair of $k$-dimensional vectors $\mathbf{u}_i$ and $\mathbf{v}_i$ from \eqref{eq:vectors_u_v} characterize bank $i$ in relation to interbank contagion. Together with the (random) initial conditions $X_i(0)$, for $i=1,\ldots,n$, this describes the asymmetry in the interbank market. In order to obtain something meaningful as the number of banks goes to infinity, we need to impose some structure through the convergence of their joint empirical measures defined by
\begin{equation}\label{empirical_measures_parameters}
\varpi^n:= \frac{1}{n} \sum_{i=1}^n \delta_{\textbf{u}_i}\otimes \delta_{\textbf{v}_i} \otimes \delta_{X_i(0)}, \qquad \text{for} \quad n\geq1.
\end{equation}
\begin{assumption}\label{assump:paramters_emperical}
	First of all, we assume $|\mathbf{u}_i|+|\mathbf{v}_i|\leq C$, for some $C>0$, uniformly in $i=1,\ldots,n$ and $n\geq 1$. Secondly, we ask that each $(\mathbf{u}_i,\mathbf{v}_i,X_i(0))$ is independent of the driving Brownian motions. Thirdly, we ask that $\varpi^n$ converges weakly to a probability measure $\varpi\in \mathcal{P}(\mathbb{R}^k\times \mathbb{R}^k \times \mathbb{R}_+)$, which we write as a joint law $\varpi=\mathrm{Law}(\mathbf{u},\mathbf{v},X(0))$ with $d\varpi(u,v,x)=d\nu_0(x|u,v)d\hat{\varpi}(u,v)$, where $\hat{\varpi}=\mathrm{Law}(\mathbf{u},\mathbf{v})$ and $\nu_0(\cdot|u,v)$ is the regular conditional law of $X(0)$ given $(\mathbf{u},\mathbf{v})=(u,v)$.
	Finally, letting $S(\mathbf{u}):=\mathrm{supp}(\mathrm{Law}(\mathbf{u}))$ and likewise for $\mathbf{v}$, we assume $S(\mathbf{u})$ and $S(\mathbf{v})$ are compact, and that for any $v\in S(\mathbf{v})$ we have $\sum_{l=1}^k u_lv_l \geq 0$ for all $u\in S(\mathbf{u})$, as in the finite system of size $n$ where $\sum_{l=1}^k{u_lv_l}=n\lambda_{ij}\geq0$ for every $u=\mathbf{u}_i$ and $v=\mathbf{v}_j$.
\end{assumption}

As we already pointed out in Section \ref{subsubsec:dynamics_finite_sys}, the particle system \eqref{general_system_rank-k} needs to be amended with a condition for how to resolve defaults (that is consistent with the equations and corresponds to the greatest clearing capital solution in Lemma \ref{lemma:exist}). This is achieved by insisting on the \emph{cascade condition} \eqref{eq:finite_PJC_fragile} which is the subject of the next subsection (Section \ref{Sec:determine_cascades}).

\begin{proposition}[Well-posedness of the particle system]\label{prop:particle_sys_well-posed}
	Let Assumption \ref{MV_assump} be in place. Equipped with the cascade condition \eqref{eq:finite_PJC_fragile}, as defined in Section \ref{Sec:determine_cascades} below, the system (\ref{general_system_rank-k}) has a unique strong c\`adl\`ag solution.   
\end{proposition}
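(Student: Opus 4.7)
The plan is to build a \cadlag solution by forward induction along the sequence of times at which new defaults occur, invoking classical strong SDE theory on the intervals between events and the cascade condition \eqref{eq:finite_PJC_fragile} to pin down the jumps at each event. Because the system contains only $n$ banks and at least one new default must be produced at each event, the induction terminates in at most $n$ steps, which is the crucial simplification compared with the mean field setting of Section \ref{the_mean_field_limit}.

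Set $\tau_{(0)}:=0$, $D_0:=\emptyset$ and assume inductively that the solution, the default set $D_r\subset\{1,\ldots,n\}$ and the values $\mathcal{L}_l^n(\tau_{(r)})$ for $l=1,\ldots,k$ have been constructed up to time $\tau_{(r)}$. Each $\mathcal{L}_l^n$ is piecewise constant between events, so on $(\tau_{(r)},T]$ the equations \eqref{general_system_rank-k} reduce, for the surviving banks $i\notin D_r$, to the SDE system
\[ X_i(t) = X_i(\tau_{(r)}) + \int_{\tau_{(r)}}^{t} b_i(s)\,ds + \int_{\tau_{(r)}}^{t} \sigma_i(s)\,dW_i(s), \]
which, under Assumption \ref{MV_assump}, has a pathwise unique strong continuous solution (the $F$-term is frozen on this interval since the loss processes do not move). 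Letting
\[ \tau_{(r+1)} := \inf\bigl\{t>\tau_{(r)} \,:\, X_i(t)\leq 0 \text{ for some } i\notin D_r\bigr\} \wedge T, \]
the construction is complete if $\tau_{(r+1)}=T$; otherwise at least one previously solvent bank reaches zero and the cascade condition \eqref{eq:finite_PJC_fragile} prescribes how to resolve the simultaneous cascade. Concretely, it starts from an infinitesimal candidate increment to the loss vector, adds to $D_r$ every bank whose capital is pushed non-positive by the currently accumulated losses, updates the losses, and iterates. Under Assumption \ref{MV_assump} together with $F$ increasing, $g\geq 0$ and $\sum_l u_{jl}v_{li}\geq 0$ from Assumption \ref{assump:paramters_emperical}, the underlying map is monotone in the candidate default set, so a Tarski-type argument (as already used in the proof of Lemma \ref{lemma:exist}) yields a least fixed point, reached in at most $n-|D_r|$ iterations. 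This fixed point uniquely defines $D_{r+1}\supsetneq D_r$, the jumps $\Delta\mathcal{L}_l^n(\tau_{(r+1)}) = n^{-1}\sum_{j\in D_{r+1}\setminus D_r} u_{jl}$, and the post-jump values $X_i(\tau_{(r+1)})$, with $X_i(\tau_{(r+1)})>0$ for all $i\notin D_{r+1}$ by definition of the fixed point.

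The induction then restarts at $\tau_{(r+1)}$ with default set $D_{r+1}$, and since $|D_{r+1}|>|D_r|$ whenever $\tau_{(r+1)}<T$, we reach $\tau_{(r+1)}=T$ after no more than $n$ iterations. Strong uniqueness of the assembled \cadlag process is then immediate from pathwise uniqueness on each inter-event interval together with the deterministic, uniquely defined cascade response at each event. The only genuinely delicate point, and the one where the bulk of the content of the argument lies, is verifying that \eqref{eq:finite_PJC_fragile} really does select one canonical resolution of each event; this is exactly the monotone Tarski step sketched above, and the careful derivation of the cascade condition in Section \ref{Sec:determine_cascades} is there precisely to justify that this selection is consistent with the greatest clearing capital in Lemma \ref{lemma:exist} and with the \cadlag nature of the resulting path.
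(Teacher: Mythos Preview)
Your proposal is correct and follows essentially the same approach as the paper's proof: induct along the (at most $n$) default times, use classical strong SDE theory on the intervals in between, and invoke the cascade condition \eqref{eq:finite_PJC_fragile} to uniquely resolve the jump at each event. The paper's own proof is just a terse two-sentence version of the same recursion.

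One small inaccuracy worth flagging: you describe the cascade condition as ``start[ing] from an infinitesimal candidate increment to the loss vector''. That is the mean field version \eqref{eq:limit_PJC}; in the finite system the iteration \eqref{eq:finite_PJC_fragile} is initialized at $\Delta^{0}_{t,\mathbf{v}}=\Xi(t;0,\mathbf{v})$, i.e.\ at the actual loss caused by those banks $i$ with $X_i(t\shortminus)=0$, which is a strictly positive (not infinitesimal) quantity whenever the event is non-trivial. The $\varepsilon$-perturbation is only needed in the limit because there $\Xi(t;0,\cdot)\equiv0$. This does not affect the validity of your argument, since the monotone iteration from the correct starting point still terminates at the least fixed point in at most $n-|D_r|$ steps, exactly as you say.
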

\begin{proof}Up until the first default time, the system trivially has a unique strong solution that is continuous in time. Since we insist on the cascade condition \eqref{eq:finite_PJC_fragile}, the number of defaulting banks at the first default time is uniquely specified, and this then uniquely determines how to restart the system. Defining the solution recursively, for each of at most $n$ stopping times,  we obtain a unique strong solution with c\`adl\`ag  paths.
\end{proof}

\subsection{Characterizing the size of default cascades}\label{Sec:determine_cascades}
In this section we make precise when the loss processes $\mathcal{L}^n_l(t)$ should jump and what the size of each jump should be given the possibility of a default cascade---that is, when the default of one bank immediately forces more banks into default at the same instance of time. This takes some care in order to ensure the consistency with the system \eqref{general_system_rank-k}, but ultimately the situation is resolved by identifying the correct fixed point of an iterated mapping (as presented in Sections \ref{subsect:rule_cascade} and \ref{subsect:cascade_cond}). At first sight, the notation we introduce may appear a little abstract, but it leads to the convenient formulation \eqref{eq:finite_PJC_fragile} of what we call the \emph{cascade condition}, which characterizes the jump sizes of the particle system in an intrinsic way, and which guides the identification of the analogous condition for the mean field limit.  As discussed in Section~\ref{sec:model}, this cascade condition is conceptually related to the fictitious default algorithm of~\cite{EN01}.

Fix $n\geq1$ and consider the mapping from the type vector $\mathbf{v}_i$ to the total losses felt by bank $i$ given by
\[
\mathbf{v}_i \mapsto\mathbf{L}^{\!n}_{\textbf{v}_i}(t) := \sum_{l=1}^k v_{li} \mathcal{L}^n_l(t), \qquad \text{for} \quad i=1,\ldots,n.
\]
Recalling the decomposition \eqref{eq:k_loss_processes}, we simply have $\mathbf{L}^{\!n}_{\textbf{v}_i}(t)=\loss^{\!n}_i(t)$, but the point is to isolate how the asymmetric $i$-dependence arises strictly as a function of the vector $\mathbf{v}_i$ (whose $k$ components capture how significant banks of `type' $l=1,\ldots,k$ are to bank $i$). Notice that, while each $t\mapsto\mathcal{L}_l^n(t)$ in principle need not be increasing (depending on the rank factorization), the full process $t\mapsto \mathbf{L}^{\!n}_{\textbf{v}_i}(t) = \loss^{\!n}_i(t) $ is by definition increasing.

Given a c\`adl\`ag path $t\mapsto\eta(t)$, we write $\Delta\eta(t):=\eta(t)-\eta(t\shortminus)$. Then we can observe that, at any time $t\geq0$, the jump sizes of the loss processes must satisfy
\begin{align}\label{eq:jump_condition_step1}
\Delta \mathcal{L}^{n}_l(t) &=  \frac{1}{n} \sum_{i=1}^{n} u_{il} \mathbf{1}_{t = \tau_i} = \frac{1}{n} \sum_{i=1}^{n} u_{il} \mathbf{1}_{ \{ X_i(t\shortminus ) \in [0,  \Delta F_i^n (t)  ]  \} } \mathbf{1}_{t\leq \tau_i},
\end{align}
for each $l=1\ldots,k$, where $\Delta F_i^n(t)$ is the amount by which the $i$'th distance-to-default (or particle) is shifted down at time $t$ (due to losses from defaults at time $t$), namely
\[
\Delta F_i^n(t):= F\biggl( \sum_{j=1}^k v_{ji} \!\int_0^{t\shortminus}\!\!g(s)d\mathcal{L}^{n}_j(s) + g(t) \Delta \mathbf{L}^{\!n}_{\textbf{v}_i}(t) \biggr) - F\biggl(\sum_{j=1}^k v_{ji} \!\int_0^{t\shortminus}\!\!g(s)d\mathcal{L}^{n}_j(s) \biggr).
\]
Consequently, once we have identified the correct sizes of the jumps $\Delta \mathbf{L}^{\!n}_{\textbf{v}_i}(t)$, for $i=1,\ldots,n$, all the jump sizes $\Delta \mathcal{L}^{n}_l(t)$, for $l=1,\ldots,k$, are automatically uniquely specified by (\ref{eq:jump_condition_step1}). Indeed, the events $\{ t\leq \tau_i \}$ and the values $\int_0^{t\shortminus}\!\!g(s)d\mathcal{L}^{n}_i(s)$, for $i=1\ldots,n$, are all fixed at time $t$, since they are given as left-limits of the evolution of the system strictly before time $t$. On the other hand, the values $\Delta\textbf{L}^n_{\textbf{v}_i}(t)$ are to be determined at time $t$, and they will involve a choice, amounting to how we choose to resolve default cascades.

\subsubsection{Fixed point constraints and the cascade condition}\label{subsect:cascade_cond}

As for $\textbf{L}^n_{\textbf{v}_i}(t)$ above, it is important to realise that the $i$-dependence of $\Delta F_i^n (t)$ is again a function of $\textbf{v}_i$ alone. In order to make this clear (and to streamline the mathematical presentation), we introduce the random map $\Theta:\mathbb{R}_+ \times   (\mathbb{R}_+)^{\mathbb{R}^k} \times \mathbb{R}^k \rightarrow \mathbb{R}_+$ given by
\begin{equation}\label{eq:shift_down_map}
\Theta(t;f,y) :=  F\biggl( \sum_{j=1}^k y_j \!\int_0^{t\shortminus}\!\!g(s)d\mathcal{L}^{n}_j(s) + g(t) f(y) \biggr) - F\biggl(\sum_{j=1}^k y_j \!\int_0^{t\shortminus}\!\!g(s)d\mathcal{L}^{n}_j(s) \biggr).
\end{equation}
Clearly, we then have $\Delta F_i^n(t)= \Theta(t;\Delta \textbf{L}^n, \textbf{v}_i )$, so we can rewrite \eqref{eq:jump_condition_step1} as
\begin{align}\label{eq:jump_constraint1}
\Delta \mathcal{L}^{n}_l(t) &= \frac{1}{n} \sum_{i=1}^{n} u_{il} \mathbf{1}_{ \{ X_i(t\shortminus ) \in [0, \Theta^n(t;\Delta \textbf{L}^n, \hat{\textbf{v}}_i ) ]  , \, t\leq \tau_i \} }, \qquad l=1\ldots,k.
\end{align}
As already alluded to above, this shows that: once we pin down the mapping ${\textbf{v}} \mapsto \Delta \textbf{L}^n_{\textbf{v}}$, then the correct jumps of  $\mathcal{L}^n_1,\ldots,\mathcal{L}^n_k $ are automatically specified by the constraint \eqref{eq:jump_constraint1}.

Looking at \eqref{eq:jump_constraint1}, we immediately obtain a constraint for $\textbf{v} \mapsto \Delta \textbf{L}^n_{\textbf{v}}$ by simply summing over $l=1,\ldots,k$ weighted by the $v_{il}$'s, which yields the identity

\begin{equation}\label{eq:jump_constraint2}
\Delta \textbf{L}^n_{\textbf{v}_i}(t) = \sum_{l=1}^k v_{li} \biggl( \frac{1}{n} \sum_{j=1}^{n} u_{jl}  \mathbf{1}_{ \{ X_j(t\shortminus ) \in [0, \Theta^n(t,\Delta \textbf{L}^n, \textbf{v}_j )  ]  , \, t\leq \tau_j \} } \biggr), \qquad i=1,\ldots,n.
\end{equation}
Note that this is precisely saying that $\textbf{v} \mapsto \Delta \textbf{L}^n_{\textbf{v}}$ arises as a fixed point $\Xi(t;\Delta\textbf{L}^n,\cdot)=\Delta\textbf{L}^n_{(\cdot)}(t)$, where the random map $\Xi:\mathbb{R}_+ \times   (\mathbb{R}_+)^{\mathbb{R}^k} \times \mathbb{R}^k \rightarrow \mathbb{R}_+$ is defined by

\begin{equation}\label{eq:jump_map}
\Xi(t;f,x):= \sum_{l=1}^k x_l \biggl( \frac{1}{n} \sum_{j=1}^{n} u_{jl}  \mathbf{1}_{ \{ X_j(t\shortminus ) \in [0, \Theta(t; f, x) ]  , \, t\leq \tau_j \} } \biggr).
\end{equation}
However, the mapping $f\mapsto \Xi(t;f,\cdot)$ can have multiple fixed points, so the above fixed point constraint alone is not enough to determine the jump sizes $\Delta\textbf{L}^n$. That is, the system (\ref{general_system_rank-k}) is a priori ill-posed without a selection rule.

Based on the natural step-by-step resolution of default cascades (explained in detail in Section \ref{subsect:rule_cascade}  below), the correct selection rule simply amounts to a (suitably initialized) iterative application of the mapping $\Xi$. This can be formulated succinctly as
\begin{equation}\label{eq:finite_PJC_fragile}
\begin{cases}
\Delta \textbf{L}^n_{\textbf{v}}(t) := {\displaystyle \lim_{m\rightarrow n} \Delta^{m}_{t,\textbf{v}}} \quad \text{for} \quad \textbf{v}=\textbf{v}_1\ldots,\textbf{v}_n,    \\[7pt]
\Delta^{m}_{t,\textbf{v}}:= \Xi(t, \Delta^{m-1}_{t,\textbf{v}}, \textbf{v} ) \quad \text{and} \quad  \Delta^{0}_{t,\textbf{v}}:= \Xi(t;0,\textbf{v}),
\end{cases}
\end{equation}
which we will refer to as the \emph{cascade condition} for the jump sizes. As concerns the notion of a step-by-step resolution, the number of `steps' or `rounds' in the cascade is given by the smallest $\bar{m}$ such that $\lim_{m\rightarrow n} \Delta^{m}_{t,\cdot}= \Delta^{\bar{m}}_{t,\cdot}$. For clarity, further details on this are presented in the separate Section \ref{subsect:rule_cascade} below, where we give a precise mathematical definition of instantaneous default cascades leading to this condition.


\subsubsection{Detailed description of the resolution of cascades} \label{subsect:rule_cascade}

Let $\mathcal{A}_{t\shortminus} $ denote the (random) set of indices $i\leq n$ such that $t\leq\tau_i$ for each $i\in \mathcal{A}_{t\shortminus}$, meaning that bank $i$ was (and may still be) solvent---or more colloquially, `alive'---strictly before time $t$. Note that there is at least one default at time $t$ precisely when $X_{i_0}(t\shortminus)=0$ for some bank indexed by $i_0\in \mathcal{A}_{t\shortminus}$. Therefore, we define the (random) set of indices $\mathcal{D}^{0}_t$ as precisely those $i\in \mathcal{A}_{t\shortminus}$ for which $X_{i}(t\shortminus)=0$, corresponding to the initial set of defaults at time $t$ (which came about without any role played by contagion).

Supposing that $\mathcal{D}^{0}_t \neq \emptyset$, we now need to make it mathematically precise how to decide if a \emph{cascade} is initiated by the contagious effects from these initial defaults---and then we need to make precise how to resolve the total size of the cascade if it occurs.

Recalling the definition of $\Xi$ in \eqref{eq:jump_map}, the isolated effect of the initial defaults (which we recall are indexed by $\mathcal{D}^{0}_t$) is to increase each $\textbf{L}^n_{\textbf{v}_i}(t\shortminus)$ by an initial jump of size
\[
\Delta^{0}_{t,\textbf{v}_i }  :=   \Xi (t; 0, \textbf{v}_i )
= \frac{1}{n} \,\sum_{l=1}^k v_{li} \sum_{j\in\mathcal{D}_t^{(0)}}u_{j l }, \qquad \text{for} \quad i=1,\ldots,n.
\]
Next, we define the (random) set of indices $\mathcal{D}^1_t\subset {\mathcal{A}_{t\shortminus}}/ \mathcal{D}^{0}_t$ as precisely those $i\in {\mathcal{A}_{t\shortminus}}/ \mathcal{D}^{0}_t$ for which
\[
X_{i}(t\shortminus)-\Theta(t; \Delta^{0}_{t,(\cdot)}, \textbf{v}_{i} ) \leq 0.
\]
In other words, the members of $\mathcal{D}^{1}_t$ are precisely those  banks that enter into default at time $t$ on account of the contagion from the initiating set of defaults $\mathcal{D}^{0}_t$. Supposing that $\mathcal{D}_t^{1}\neq \emptyset$, we must update the losses to include this first round of contagion, and check if this induces another round of contagion. This amounts to considering a jump in $ \textbf{L}^n_{\textbf{v}_i}(t\shortminus)$ of size
\[
\Delta^{1}_{t,\textbf{v}_i } := \Xi (t;\Delta^{0}_{t,(\cdot)}, \textbf{v}_i  ) = \frac{1}{n}  \, \sum_{l=1}^k v_{li} \sum_{j\in \mathcal{D}_t^{0}\cup\mathcal{D}_t^{1}} u_{j l }, \qquad \text{for} \quad {i=1,\ldots,n},
\]
and thus defining $\mathcal{D}_t^{2}\subset \mathcal{A}_{t\shortminus}/\mathcal{D}_t^{1}$ as precisely those $i\in \mathcal{A}_{t\shortminus}/\mathcal{D}_t^{1}$ for which
\[
X_{i}(t\shortminus)-\Theta(t; \Delta^{1}_{t,(\cdot)}, \textbf{v}_{i} ) \leq 0.
\]
For any $m\leq n$, $\Delta^{m}_{t,(\cdot)} $ and $\mathcal{D}_t^{m}$ are defined analogously. Recalling that $n^{-1}\sum_{l=1}^k v_{li} u_{jl} = \lambda_{ji} \geq 0$, it is always the case that
\[
\Delta^{m}_{t,(\cdot)}  =  \Xi (t;\Delta^{m-1}_{t,(\cdot)}, \cdot ) \geq \Delta^{m-1}_{t,(\cdot)}
\]
in the pointwise sense. In particular, each $(\Delta^{m}_{t,\mathbf{v}})$ is an increasing sequence in $m=0,1,\ldots,n$, and it is immediate that $\mathcal{D}^{\bar{m}}_t = \emptyset$ implies 
\begin{equation*}
\Xi (t;\Delta^{\bar{m}}_{t,(\cdot)}, \textbf{v} ) = \Delta^{\bar{m}}_{t,\textbf{v}}, \qquad \text{for all} \quad \textbf{v}=\textbf{v}_1\ldots, \textbf{v}_n,
\end{equation*}
so the sequence eventually reaches a fixed point after the $\bar{m}$'th round of contagion-induced defaults (since there are only $n$ banks in total, note that we must have $\mathcal{D}_t^n=\emptyset$). Once a fixed point is reached at the $\bar{m}$'th iteration, there are no further defaults, and hence the default cascade is fully resolved with
\[
\Delta \textbf{L}^n_{\textbf{v}}(t):= \Delta^{\bar{m}}_{t,\textbf{v}} = \sum_{l=1}^k v_{li} \sum_{ j\in \mathcal{D}_t^{0}\cup \cdots \cup \mathcal{D}_t^{\bar{m}} } u_{j l }, \qquad \text{for} \quad \textbf{v}=\textbf{v}_1,\ldots, \textbf{v}_n.
\]
By construction, each sequence  $(\Delta^{m}_{t, \mathbf{v}}) $ stays fixed after this $\bar{m}$'th iteration, so the value $\Delta \textbf{L}^n_{\textbf{v}}(t)$ is indeed the limit of $\Delta^{m}_{t, \mathbf{v}}$ as $m\rightarrow n$, in agreement with the \emph{cascade condition} defined in \eqref{eq:finite_PJC_fragile}. This formulation of the jump size as a limit of iterated steps in the default cascade is instructive for the formulation of the mean field analogue, which follows in the next subsection.

\begin{remark} If all banks have interbank liabilities, then, for any $j$, there is an $i$ such that $\lambda_{ji}>0$. Hence, we get $\mathcal{D}^{m}_t \neq \emptyset$ if and only if $ \Xi (t;\Delta^{m}_{t,\cdot}, \textbf{v})\geq \Delta^{m}_{t,\textbf{v}} $ for all $\textbf{v} \in \{ \textbf{v}_1\ldots, \textbf{v}_n \}$ and $\Xi (t;\Delta^{m}_{t, \cdot} , \textbf{v}) > \Delta^{m}_{t,\textbf{v}}$ for at least one $\textbf{v} \in \{\textbf{v}_1\ldots, \textbf{v}_n \}$.
\end{remark}

\subsection{The mean field limit}\label{the_mean_field_limit}

Provided there is a suitable averaging effect, we can send $n\rightarrow\infty$ in the particle system (\ref{general_system_rank-k}) and thereby capture the `systemic' or macro-level properties of the finite system through its mean field limit. As we show in Appendix \ref{convergence_appendix}, the insistence on Assumptions \ref{MV_assump} and \ref{assump:paramters_emperical} is sufficient to ensure such a law of large numbers, and it then follows that the resulting mean field limit is given by a McKean--Vlasov problem of the form
\begin{equation}
\begin{cases}\label{CMV}
\mathcal{L}_l(t)  = \int_{\mathbb{R}_+\times \mathbb{R}^k \times \mathbb{R}^k}  u_l \mathbb{P}( t\geq \tau^x_{u,v} \mid B_0) d\varpi(x,u,v), \qquad l=1,\ldots,k, \\[6pt]
\tau^x_{u,v} =\inf \{ t\geq 0 : X^x_{u,v}(t) \leq 0 \},\\[6pt]
X^{x}_{u,v}(t) = x + \int_0^t \!b_{u,v}(s)ds + \int_0^t\!\sigma_{u,v}(s) dW_s - F\Bigl( \sum_{l=1}^k v_l \int_0^t \!g(s)d\mathcal{L}_l(s)  \Bigr),
\end{cases}
\end{equation}
for $t\in[0,T]$, where $W=\rho B + \sqrt{1-\rho^2}B_0$, for two independent Brownian motions $B \perp B_0$.

In general, one would expect the above problem to be continuous in time, due to the `averaging' effect of passing to the mean field limit, and, as long as this is the case, the system is fully specified by \eqref{CMV}. However, as we saw already in Section 4, the loss processes $t\mapsto \mathcal{L}_l(t)$ may in fact undergo jump discontinuities (in particular, \cite[Thm 1.1]{HLS18} can be adapted to show that such jumps must occur for some parameters). When accounting for this, one needs to be careful that the jump sizes are \emph{not} pinned down uniquely by the formulation \eqref{CMV}. Similarly to the cascade condition for the finite system, a concrete choice must be made that allows the system to be c\`adl\`ag and uniquely determines the jumps. This is the topic of the next subsection.

\subsubsection{Determining the jump sizes}\label{sec:jumps_loss}

Our first task is to show that the jump sizes must obey certain fixed point constraints. By analogy with the analysis of cascades in the finite system, we therefore define the mapping $\mathbf{L}:S(\mathbf{v})\rightarrow L^\infty(0,T)$ by
\[
\mathbf{L}_v(t):= \sum_{l=1}^k v_l\mathcal{L}_l(t),
\]
where $L^\infty(0,T)=L^\infty([0,T],\mathbb{R})$ is the space of bounded real-valued functions on $[0,T]$ under the equivalence relation of being equal almost everywhere. By Assumption \ref{assump:paramters_emperical} and the definition of each $\mathcal{L}_l$, the process $t\mapsto \mathbf{L}_v(t)$ is indeed bounded, for each $v\in S(\mathbf{v})$, and crucially the assumptions also imply that $t\mapsto \mathbf{L}_v(t)$ is increasing. Similarly to the constraint (\ref{eq:jump_constraint1}) for the finite system, we can infer directly from (\ref{CMV}) that the jumps of each $\mathcal{L}_l$ must satisfy
\begin{equation}\label{1st_fp_constraint}
\Delta \mathcal{L}_l(t) = \int_{\mathbb{R}_+\times \mathbb{R}^k \times \mathbb{R}^k}  u_l \mathbb{P}\bigl( X^x_{u,v}(t\shortminus ) \in [0, \Theta(t; \Delta \mathbf{L} , v)  ]  , \, t\leq \tau \mid B_0 \bigl) d\varpi(x,u,v),
\end{equation}
where
\[
\Theta(t;f,v):= F\biggl( \int_0^{t\shortminus}\!\!g(s)d\mathbf{L}_v(s) + g(t) f(v) \biggr) - F\biggl( \!\int_0^{t\shortminus}\!\!g(s)d\mathbf{L}_v(s) \biggr).
\]
Once the jumps of $\mathbf{L}$ are pinned down, the constraint \eqref{1st_fp_constraint} uniquely determines the jumps of each loss process $\mathcal{L}_l$.
Furthermore, from \eqref{1st_fp_constraint} and the definition of $\mathbf{L}$, it follows that the (random) value of $\Delta \mathbf{L}$ must be a fixed point of the (random) mapping $f\mapsto \Xi(t;f,\cdot)$, where
\begin{equation}\label{the_map_Xi}
\Xi(t;f,v):= \sum_{l=1}^k v_l \int_{\mathbb{R}_+\times\mathbb{R}^k\times\mathbb{R}^k} \hat{u}_{l}  \mathbb{P} \bigl( X^x_{\hat{u},\hat{v}}(t\shortminus ) \in [0, \Theta(t; f, \hat{v} ) ]  , \, t\leq \tau  \mid B_0\bigr) d\varpi(x,\hat{u},\hat{v}).
\end{equation}

In general, the map \eqref{the_map_Xi} can have  several, even infinitely many, fixed points (for example, $f\equiv0$ is always a fixed point, but this is not compatible with jumps), so $\Delta \mathbf{L}$ is not uniquely specified a priori. This situation is resolved by the selection rule \eqref{eq:limit_PJC} introduced below, mimicking the cascade condition \eqref{eq:finite_PJC_fragile} for the jumps in the finite system. However, unlike the finite system, the mean field limit always satisfies $\Xi(t;0,\cdot)=0$, so the occurrence and size of a potential default cascade must be identified by artificially exposing the system to an arbitrarily small shock. Specifically, we shift the system down by a small amount $\Theta(t;\epsilon,v)$ and then keep track of how the resulting losses propagate as $\varepsilon \downarrow 0$, meaning that the size of the initial shift $\Theta(t;\epsilon,v)$ vanishes. Mathematically, this means that, for $v\in S(\textbf{v})$, the jump size $\Delta \mathbf{L}_v(t)$ is given by the \emph{mean field cascade condition}
\begin{equation}\label{eq:limit_PJC}
\begin{cases}
\Delta \mathbf{L}_v(t) = \lim_{\varepsilon\downarrow 0} \lim_{m \uparrow \infty} \Delta^{\!(m,\varepsilon)}_{t,v}, \\[5pt] \Delta^{\!(m,\varepsilon)}_{t,v}= \Xi(t, \varepsilon +  \Delta^{\!(m-1,\varepsilon)}_{t,\hspace{0.5pt}\cdot}, v), \quad m\geq1, \\[5pt]
\Delta^{\!(0,\varepsilon)}_{t,v}= \Xi(t,\varepsilon , v),
\end{cases}
\end{equation}
where the equalities hold almost surely (recall that $\Xi$ is conditional on the common noise $B_0$).
We stress that the limit is well-defined, since $(\Delta^{(m,\varepsilon)}_{t,v})$ forms a bounded sequence that increases as $m\uparrow\infty$ and decreases as $\varepsilon\downarrow0$. Moreover, dominated convergence shows that the (random) map $v\mapsto \Delta \mathbf{L}_v(t)$ given by \eqref{eq:limit_PJC} is a fixed point of  $f\mapsto \Xi(t,f,\cdot)$, so this choice for the jump sizes is indeed consistent with the McKean--Vlasov problem \eqref{CMV}.
\begin{remark}
	We emphasise that the iterative structure of \eqref{eq:limit_PJC} lends itself easily to numerical implementation, and this is indeed the starting point for the algorithm behind the simulations in Figure \ref{fig:heat_plots}. Moreover, we note that, in the case of a symmetric network of obligations, \cite[Proposition 2.4]{HLS18} gives that the mean field cascade condition \eqref{eq:limit_PJC}  agrees with the corresponding notion of a `physical' jump condition considered in \cite{HLS18}.
\end{remark}

\subsubsection{A simple criterion for ruling out jumps}\label{criterions_jump_or_not}

We now present a simple criterion, namely \eqref{no_jump_condition1}, that rules out a jump discontinuity at a given time $t>0$ and guarantees the solution stays continuous in some small amount of time thereafter. Of course, the mean field cascade condition \eqref{eq:limit_PJC} already gives the precise criterion for whether or not the system undergoes a jump at time $t$, but  our aim here is to provide some intuition for the workings of this condition.

At any given time $t>0$, and for any pair $(u,v)$, we let $V_t(\cdot|u,v)$ denote the random density of the random sub-probability measure
\begin{equation}\label{eq:the_nu_t_measure}
A \mapsto \nu_t(A|u,v) := \int_{\mathbb{R}_+} \mathbb{P} \bigl( X^x_{u,v}(t) \in A , \, t < \tau^x_{u,v}  \,|\, B_0\bigr) V_0(x|u,v)dx.
\end{equation}
For the purposes of this subsection, we think of having fixed a realisation of $B_0$, so the below criteria \eqref{no_jump_condition1} should be understood as holding for this particular realisation: thus, the conclusion is that there is no jump for this particular realisation of $B_0$. Of course, if the criteria holds for all realisations of $B^0$, then it is an almost sure conclusion.

Recalling the definition of $\Theta$, we have $\Theta(t;f,v) \leq \Vert  F \Vert_{\text{Lip}} g(t) f(v)$,
so
\begin{equation*}
\Xi(t; f , v) \leq \sum_{l=1}^k v_l  \int_{\mathbb{R}^k\times\mathbb{R}^k}    \int_0^{\Vert  F \Vert_{\text{Lip}} g(t) f(\hat{v})} \hat{u}_{l}  V_{t\shortminus }(y|\hat{u},\hat{v}) dy d\hat{\varpi}(\hat{u},\hat{v}),
\end{equation*}
where $V_{t\shortminus}$ is the pointwise left limit of $V_{s}$ as $s\uparrow t$.
Now fix a time $t>0$, and suppose $V_{t\shortminus}$ and the joint distribution $\hat{\varpi}$ of $(\mathbf{u},\mathbf{v})$ satisfy the following criterion:~there is a small $\delta>0$ such that, for each $v\in S(\mathbf{v})$,
\begin{equation}\label{no_jump_condition1}
\Vert F \Vert_{\text{Lip}} g(t) \sum_{l=1}^k v_l  \int_{\mathbb{R}^k\times\mathbb{R}^k}    \hat{u}_{l}  V_{t\shortminus }(y|\hat{u},\hat{v}) d\hat{\varpi}(\hat{u},\hat{v}) \leq 1-\delta  \qquad \textrm{for all }  y\in(0,\epsilon_v),
\end{equation}
for some small enough $\epsilon_v>0$. Then we get
\begin{equation*}
\Delta^{\!(0,\varepsilon)}_{t,v}= \Xi(t,\varepsilon , v) \leq  (1-\delta) \varepsilon,
\end{equation*}
for all $\varepsilon>0$ sufficiently small such that $\Vert F \Vert_{\text{Lip}} g(t)\varepsilon < \epsilon_v$. In turn, for all $\varepsilon>0$ such that $\Vert F \Vert_{\text{Lip}} g(t)(1+(1-\delta))\varepsilon<\epsilon_v$, we have
\[
\Delta^{\!(1,\varepsilon)}_{t,v}= \Xi(t, \varepsilon +  \Delta^{\!(0,\varepsilon)}_{t,\hspace{0.5pt}\cdot}, v) \leq (1-\delta)\varepsilon + (1-\delta)^2\varepsilon,
\]
and, by recursion, for any given $m\geq1$ we thus have
\[
\Delta^{\!(m,\varepsilon)}_{t,v}= \Xi(t, \varepsilon +  \Delta^{\!(m-1,\varepsilon)}_{t,\hspace{0.5pt}\cdot}, v) \leq \varepsilon \sum_{l=0}^{m} (1-\delta)^{l+1},
\]
for all $\varepsilon>0$ such that $\Vert F \Vert_{\text{Lip}} g(t)(\sum_{l=0}^{m} (1-\delta)^{l})\varepsilon<\epsilon_v$.
Since $1-\delta<1$, the sum forms a geometric series that converges as $m\rightarrow\infty$,  so provided \eqref{no_jump_condition1} is satisfied we conclude that $\Delta \mathbf{L}_v(t)=0$ for all $v\in S(\mathbf{v})$, since
\[
\Delta \mathbf{L}_v(t) \leq  \lim_{\varepsilon\downarrow0}\lim_{m\uparrow\infty}\varepsilon \sum_{l=0}^{m} (1-\delta)^{l+1} = \lim_{\varepsilon\downarrow 0} \varepsilon \frac{1-\delta}{\delta} = 0.
\]
In particular, each $s\mapsto \mathcal{L}_l(s)$ must indeed continuous at time $t$. In other words, after the system takes an artificial hit of order $\varepsilon$, the induced rounds of contagion quickly become negligible with the
total effect being at most of order $\varepsilon$ (i.e., of the same order as the initial `artificial' shock), and so they disappear as the size of the initial shock is sent to zero. Furthermore, now that we know there is not a jump, a straightforward adaptation of \cite[Prop.~6.4.3]{sojmark_2019} shows that a bound of the form \eqref{no_jump_condition1} holds for some small amount time, so the solution remains continuous on this nonzero time interval.

On the other hand, Remark \ref{eq:dirichlet_vs_jump} from Section 4 provides a simple example where the condition \eqref{no_jump_condition1} is violated, for some $v\in S(\mathbf{v})$, and where it is proved that the cascade condition must therefore result in a jump discontinuity. In addition to the argument provided there, we remind the reader of Figure \ref{fig:heat_plots}, which illustrates the occurrence of the jump. In this respect, let us also stress that there can of course be cases where $t\mapsto \mathbf{L}_v(t)$ only jumps for some $v\in S(\mathbf{v})$ and not for others, provided there are certain types which are not exposed to the types experiencing default cascades. As a particular example of this, we could amend the example behind Figure \ref{fig:heat_plots} by imposing that `Core 2' is not exposed to losses in `Core 1' and `Periphery 2': then we obtain an example where $\mathbf{L}_l$ jumps for $l=1,4$ (`Core 1' and `Periphery 2') while it does not jump for $l=2,3$ (`Core 2' and `Periphery 1').
\begin{remark}
	Notice that if the Dirichlet boundary condition $V_{t-}(0|u,v)=0$ is satisfied (meaning that $\lim_{x\downarrow0}V_{t-}(x|u,v)=0$), then \eqref{no_jump_condition1} is definitely true. More generally, the criterion amounts to $y\mapsto V_{t\shortminus}(y|u,v)$ being sufficiently small relative to $\Vert F \Vert_{\textrm{Lip}}^{-1}g(t)^{-1}$ near $y=0$, depending on the joint distribution $\hat{\varpi}$. Starting from a nice initial condition, we will have $V_{t}(0|u,v)=0$ for some amount of time, but if the contagious feedback becomes too strong it may transport the density so fast towards the origin that there is a blow-up time $t_\star$: that is,  the derivative of $t\mapsto \mathbf{L}(t)$ diverges as $t\uparrow t_\star$, and the left limit density $V_{t_{\star}\shortminus}$ fails to vanish at zero, in a way such that the cascade condition enforces a jump discontinuity $\Delta \mathbf{L}(t_\star)>0$.
\end{remark}

\subsubsection{Idiosyncratic noise: well-posedness and regularity of the loss}

In this subsection we consider the McKean--Vlasov problem (\ref{CMV}) when $\rho=0$, meaning that there is no common noise. This makes it more tractable to get a handle on the regularity of the loss processes. Under a mild assumption on the initial profile of the system, we are able to generalise the arguments from \cite{HLS18} and thus show that the system is well-posed up until the $L^2$ norm of the gradient of the losses, namely  $(\partial_t\mathcal{L}_1,\ldots,\partial_t\mathcal{L}_k)$, explodes.

The assumption on the initial profile amounts to controlling the decay of the mass near the origin. Specifically, using the notation from Assumption \ref{assump:paramters_emperical}, we require that the initial condition $\nu_0$ satisfies $d\nu_0(x|u,v)=V_0(x|u,v)dx$ with 
\begin{equation}\label{eq:cond_V_0_decay}
V_0(x|u,v)\leq C_\star x^\beta \mathbf{1}_{x<x_\star} + D_\star \mathbf{1}_{x\geq x_\star}, \qquad \text{for all} \quad x>0,
\end{equation}
for constants $C_\star,D_\star,x_\star>0$ and a power $\beta\in(0,1]$, uniformly in $u\in S(\mathbf{u})$ and $v\in S(\mathbf{v})$.

\begin{theorem}[Well-posedness up to explosion]\label{MV_thm} 
	Suppose the initial condition $\nu_0$ satisfies \eqref{eq:cond_V_0_decay} and let Assumption \ref{MV_assump} be in place. Then there is a regular (i.e., differentiable) solution $(\mathcal{L}_1\ldots,\mathcal{L}_k)$ to the McKean--Vlasov problem \eqref{CMV} up to the explosion time
	\[
	t_{\star}:= \sup \Bigl\{ t>0 :  \sum_{l=1}^k \Vert \partial_t\mathcal{L}_l(\cdot) \Vert_{L^2(0,t)} < \infty   \Bigr\} \in (0,\infty],
	\]
	with the property that, for all $t<t_\star$, $\partial_s \mathcal{L}_l(s) \leq K s^{-(1-\beta)/2}$ on $[0,t]$ for some constant $K> 0$. Moreover, the solution is unique on $[0,t_\star]$ in the broadest possible sense: any generic c\`{a}dl\`{a}g solution to \eqref{CMV} satisfying the cascade condition \eqref{eq:limit_PJC} must agree with the regular solution on $[0,t_\star]$.
\end{theorem}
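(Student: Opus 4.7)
The plan is to mimic the strategy of \cite{HLS18}, where the key tool is a self-consistent gradient bound of the form $\partial_t \mathcal{L}_l(t)\leq K t^{-(1-\beta)/2}$ that is stable under Picard iteration on a short time interval. The whole argument then bootstraps up to the maximal time $t_\star$ at which this bound (and hence the $L^2$-norm of the gradient) can be propagated. First I would set up the map $\Phi:(\mathcal{L}_1,\ldots,\mathcal{L}_k)\mapsto(\tilde{\mathcal{L}}_1,\ldots,\tilde{\mathcal{L}}_k)$ that takes an admissible candidate (nondecreasing, c\`adl\`ag, with $\partial_t\mathcal{L}_l \in L^2$) to the loss profile defined by \eqref{CMV}: for each $(u,v)$, solve the SDE for $X^x_{u,v}$ using the given $(\mathcal{L}_l)$ in the drift-like term $F(\sum_l v_l \int_0^{\cdot}g(s)d\mathcal{L}_l(s))$---which, because $\rho=0$, is a deterministic shift---and then read off $\tilde{\mathcal{L}}_l(t)=\int u_l\, \mathbb{P}(t\geq\tau^x_{u,v})d\varpi$. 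A fixed point of $\Phi$ (in the continuous category) is precisely a regular solution.

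The technical engine is a Gaussian upper bound on the Dirichlet heat kernel for Brownian motion with bounded time-dependent drift/volatility on $(0,\infty)$, uniformly in $(u,v)\in S(\mathbf{u})\times S(\mathbf{v})$, which is compact by Assumption \ref{assump:paramters_emperical}. Combining such a bound with the initial decay \eqref{eq:cond_V_0_decay} yields a pointwise estimate $V_t(x|u,v)\leq C x\, t^{-(3-\beta)/2}\wedge C x^\beta$ near the origin (after subtracting the cumulative deterministic shift). This, through the identity $\partial_t\tilde{\mathcal{L}}_l(t)= \int u_l\,\tfrac{\sigma_{u,v}^2}{2}\partial_x V_t(0|u,v)d\hat\varpi(u,v)$ (plus a drift-shift contribution controlled by $F$ being Lipschitz), delivers the desired self-improving bound $\partial_t\tilde{\mathcal{L}}_l(t)\leq K t^{-(1-\beta)/2}$. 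Absorbing the $K$ into a weighted norm $\|\cdot\|_{\beta,T_0}$ defined by $\sup_{t\leq T_0}t^{(1-\beta)/2}|\partial_t\mathcal{L}_l(t)|$ (or the corresponding $L^2$-norm in \cite{HLS18}), and comparing the hitting times via the Lipschitz dependence on the deterministic shift $F(\sum_l v_l\int g\,d\mathcal{L}_l)$, gives a contraction of $\Phi$ on a small initial interval, hence existence and uniqueness of a regular solution locally. Restarting at any $t_0>0$ with $V_{t_0}(\cdot|u,v)$ as the new initial profile (which, by the gradient bound, again satisfies a decay estimate of the form \eqref{eq:cond_V_0_decay} with possibly updated constants) and iterating, one extends the regular solution up to $t_\star$; the maximality of $t_\star$ is automatic from its definition as the supremum of times with finite $L^2$-gradient, since the local existence radius is controlled by that very quantity.

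For uniqueness in the broadest sense, suppose $\tilde{\mathcal{L}}$ is any generic c\`adl\`ag solution to \eqref{CMV} respecting the cascade condition \eqref{eq:limit_PJC}. The first step is to show $\tilde{\mathcal{L}}$ is continuous on $[0,t_\star)$. This will follow by contradiction: if $\tilde{\mathcal{L}}$ had a jump at some $t<t_\star$, then \eqref{no_jump_condition1} must fail for some $v\in S(\mathbf{v})$; however, the gradient estimate derived above (applied to the left-limit density $V_{t\shortminus}$, which inherits the decay from the regular construction as long as we are below the explosion time) would force $V_{t\shortminus}(y|u,v)\to 0$ as $y\downarrow 0$ with the right rate, contradicting the failure of \eqref{no_jump_condition1}. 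Once continuity is established, $\tilde{\mathcal{L}}$ is admissible for the fixed-point framework, and the contraction argument from Step 3 (used now as a uniqueness argument on growing intervals $[0,T_0],[T_0,2T_0],\ldots$ stopped before $t_\star$) forces $\tilde{\mathcal{L}}=\mathcal{L}$ on $[0,t_\star)$. Extending to the closed endpoint $t_\star$ uses monotonicity and left-continuity at the maximal time.

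The main obstacle I expect is the parameter-dependent heat kernel step, i.e.\ obtaining a Dirichlet heat kernel estimate for the absorbed diffusion $X^x_{u,v}$ that is \emph{uniform} in $(u,v)$ and \emph{stable} under the Lipschitz perturbation induced by changing the loss input. In \cite{HLS18} this is carried out in a one-dimensional, parameter-free setting; here one has to track dependence on $(u,v)$ through $b_{u,v},\sigma_{u,v}$ (which, fortunately, are deterministic and uniformly elliptic with a H\"older regularity hypothesis in Assumption \ref{MV_assump}) as well as through the $k$-dimensional contagion coupling $F(\sum_l v_l\int g\,d\mathcal{L}_l)$. The compactness of $S(\mathbf{u}),S(\mathbf{v})$ together with the Lipschitz/continuity assumptions on $F,g$ should make these dependencies manageable, but producing a clean, coupling-stable Gaussian estimate---good enough to both close the Picard contraction and rule out jumps in the uniqueness part---is where the bulk of the work will lie.
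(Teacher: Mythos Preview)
Your existence argument is essentially the paper's: a Picard scheme on a space of candidates carrying the self-improving bound $\partial_t\mathcal{L}_l(t)\leq K t^{-(1-\beta)/2}$, closed by heat-kernel type estimates (which is what the paper's appeal to \cite[Sect.~4]{HLS18} unpacks to), followed by bootstrapping up to $t_\star$. That part is fine.

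The uniqueness part has a genuine gap. Your plan is: (i) show the generic c\`adl\`ag solution $\tilde{\mathcal{L}}$ cannot jump before $t_\star$ because its left-limit density $V_{t\shortminus}$ vanishes at the origin; (ii) once it is continuous, feed $\tilde{\mathcal{L}}$ into the contraction to conclude $\tilde{\mathcal{L}}=\mathcal{L}$. Both steps fail as stated. For (i), the boundary decay of $V_{t\shortminus}(\cdot\,|\,u,v)$ is a consequence of the gradient bound $\partial_t\mathcal{L}_l\leq K t^{-(1-\beta)/2}$, which you have derived only for the \emph{regular} solution, not for the generic $\tilde{\mathcal{L}}$; invoking it for the density driven by $\tilde{\mathcal{L}}$ is circular. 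For (ii), the Banach fixed point yields uniqueness only inside the space $\mathcal{S}(\tfrac{1-\beta}{2},A,t_0)$ of candidates with that gradient bound; a merely continuous solution need not lie there, so the contraction does not see it. Concretely, the one-sided comparison (Lemma~\ref{lemma1_HLS}) with the regular $\mathcal{L}$ in the first slot gives $\mathcal{L}\leq\tilde{\mathcal{L}}$ on a short interval, but the reverse inequality would require control of $\int_0^t (t-s)^{-1/2}\,d\tilde{\mathcal{L}}_v(s)$, which you do not have for a generic solution.

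The paper closes this gap with the $\varepsilon$-deleted solutions $\mathbf{L}^\varepsilon$ (Lemma~\ref{lemma2_HLS}): these solve a perturbed problem, are \emph{regular with uniform gradient bounds in $\varepsilon$}, and strictly dominate any generic continuous solution (Lemma~\ref{lemma3_HLS}). One then squeezes $\mathcal{L}\leq\tilde{\mathcal{L}}<\mathbf{L}^\varepsilon$ and shows the envelope $\Vert\mathbf{L}^\varepsilon-\mathcal{L}\Vert_{t_1}^*\to 0$ as $\varepsilon\downarrow 0$, forcing $\tilde{\mathcal{L}}=\mathcal{L}$ on $[0,t_1]$. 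Iterating this and using the cascade condition only at the boundary of the interval of agreement (where now the relevant density \emph{is} the regular one) rules out jumps of $\tilde{\mathcal{L}}$ and extends the equality up to $t_\star$. This upper-envelope/trapping mechanism is the missing ingredient in your proposal.
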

\begin{proof}
	See Section \ref{appendix_proof1} in the Appendix.
\end{proof}

We do not attempt to address general results on global uniqueness here, but it is natural to conjecture that there is indeed uniqueness under the cascade condition \eqref{eq:limit_PJC}. One would then expect to have a regularity result analogous to Theorem \ref{MV_thm}  holding on the intervals in between blow-ups. In the case of a symmetric network of liabilities and constant coefficients with $F(x)=x$ and $g(x)=1$, it follows from \cite[Prop.~2.4]{HLS18} that the McKean--Vlasov problem considered here (with only idiosyncratic noise) simplifies to that of \cite[Eqn.~(1.1)]{HLS18} with the physical jump condition \cite[Eqn.~(1.8)]{HLS18} in place of the cascade condition. For this problem, global uniqueness and regularity was recently established in the preprint \cite{DNS19}.

\subsubsection{Common noise: global well-posedness  with weak feedback}

For a given initial profile $\nu_0$ and feedback functions $F$ and $g$, we already argued above that solutions to the conditional McKean--Vlasov problem (\ref{CMV}) exist, by virtue of arising as limit points of the finite particle system \eqref{general_system_rank-k}, provided Assumptions \ref{assump:paramters_emperical} and \ref{MV_assump} are satisfied. 
For the details of this, we refer to Section \ref{convergence_appendix} in the appendix.

Existence aside, general results on global as well as local uniqueness remain a challenge in the presence of the common noise, even for simpler versions of the problem. Nevertheless, we have the following result when a `smallness condition', namely  (\ref{smallness_cond}), is imposed on the feedback functions $F$ and $g$ in relation to the initial profile of the system. This condition guarantees that the feedback from contagion is too weak to generate blow-ups in the mean field limit, independently of the different realisations of the common noise.
\begin{theorem}[Global well-posedness in the weak feedback regime]\label{thm_weak_feedback} Let Assumption \ref{MV_assump} be in place. If there is a $\delta>0$ such that
	\begin{equation}\label{smallness_cond}
	\Vert F \Vert_{\mathrm{Lip}} g(0) \sum_{l=1}^k v_l  \int_{\mathbb{R}^k\times\mathbb{R}^k}    \hat{u}_{l} \Vert V_{0}(\cdot|\hat{u},\hat{v}) \Vert_{\infty} d\hat{\varpi}(\hat{u},\hat{v}) \leq 1-\delta ,
	\end{equation}
	for all $v\in S(\mathbf{v})$, then there is a globally unique solution to (\ref{CMV}), and this solution is continuous in time.
\end{theorem}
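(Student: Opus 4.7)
The plan is to exploit the weak-feedback condition \eqref{smallness_cond} to first rule out any jumps on $[0,T]$, and then to obtain uniqueness through a pathwise contraction whose contraction constant is exactly $1-\delta$. Existence does not need a separate treatment, as it is already provided by the limit-point construction from Section \ref{convergence_appendix} under Assumptions \ref{MV_assump} and \ref{assump:paramters_emperical}; the work to be done is to show that any such limit point is actually continuous and is uniquely determined.

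The central ingredient is the pointwise density bound $V_t(y|u,v) \leq \Vert V_0(\cdot|u,v)\Vert_{\infty}$ for every $t \in [0,T]$, $y>0$ and $(u,v) \in S(\mathbf{u})\times S(\mathbf{v})$. To justify it, let $t_\star \in (0,T]$ be the first putative jump time of any given solution (set $t_\star=T$ if the solution is continuous). On $[0,t_\star)$, the loss processes are continuous, so conditional on $B_0$ the density $V_t(\cdot|u,v)$ satisfies a linear Fokker--Planck equation on $(0,\infty)$ with absorbing Dirichlet condition at the origin and drift given by $b_{u,v}$ plus the common-noise transport plus the deterministic shift coming from $F(\sum_l v_l \int_0^t g \, d\mathcal{L}_l)$. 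The parabolic maximum principle then delivers the bound on $[0,t_\star)$, since the transport and shift terms preserve $L^\infty$ while the diffusion with absorption can only decrease it; passing to the left limit extends the bound to $V_{t_\star\shortminus}$.

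Combined with $g(t)\leq g(0)$, since $g$ is decreasing by Assumption \ref{MV_assump}, the density bound propagates \eqref{smallness_cond} into the time-$t_\star$ no-jump criterion \eqref{no_jump_condition1}: for every $v \in S(\mathbf{v})$ and every $y > 0$,
\[
\Vert F \Vert_{\mathrm{Lip}} \,g(t_\star) \sum_{l=1}^k v_l \int_{\mathbb{R}^k\times \mathbb{R}^k} \hat{u}_l\, V_{t_\star\shortminus}(y|\hat{u},\hat{v})\, d\hat{\varpi}(\hat{u},\hat{v}) \leq 1-\delta.
\]
The analysis of Section \ref{criterions_jump_or_not} then forces $\Delta \mathbf{L}_v(t_\star) = 0$ for every $v \in S(\mathbf{v})$, contradicting the supposed jump at $t_\star$ unless $t_\star=T$. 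Hence any solution is globally continuous on $[0,T]$ and the density bound holds throughout.

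For uniqueness, take two continuous solutions $\mathcal{L}$ and $\tilde{\mathcal{L}}$ and couple the particle processes $X^x_{u,v}$, $\tilde{X}^x_{u,v}$ via common drivers $(B_0,B)$ and the same initial condition $x$. Integration by parts (using continuity of $\mathcal{L}_l$ and monotonicity of $g$) yields
\[
\Bigl| \sum_l v_l \int_0^t g(s)\, d(\mathcal{L}_l-\tilde{\mathcal{L}}_l)(s) \Bigr| \leq g(0) \sum_l v_l \sup_{s \leq t} |\mathcal{L}_l(s)-\tilde{\mathcal{L}}_l(s)|,
\]
and the Lipschitz property of $F$ transfers this into a uniform bound on the barrier difference $|X^x_{u,v}(s)-\tilde{X}^x_{u,v}(s)|$. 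Standard hitting-time estimates based on the density bound $\Vert V_s(\cdot|u,v)\Vert_\infty \leq \Vert V_0(\cdot|u,v)\Vert_\infty$ (as in the analogous uniqueness arguments of \cite{HLS18, LS18a}) then convert this into a bound on $|\mathbb{P}(t\geq \tau^x_{u,v}|B_0)-\mathbb{P}(t\geq \tilde{\tau}^x_{u,v}|B_0)|$ that is linear in $\sup_{s\leq t}|\mathcal{L}-\tilde{\mathcal{L}}|$, with a constant bounded by the left-hand side of \eqref{smallness_cond}. Integrating against $\varpi$ and taking $\sup_{s\leq t}$ produces
\[
\sup_{s \leq t}  \sum_l v_l |\mathcal{L}_l(s)-\tilde{\mathcal{L}}_l(s)| \leq (1-\delta) \sup_{s \leq t} \sum_l v_l |\mathcal{L}_l(s)-\tilde{\mathcal{L}}_l(s)|
\]
pathwise in $B_0$, which forces $\mathcal{L}=\tilde{\mathcal{L}}$ almost surely. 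The main obstacle is establishing the pointwise $L^\infty$ bound on $V_t$ in the presence of common noise, since both the no-jump argument and the contraction hinge on it; once this is in place, the rest is a careful consolidation of tools already developed in Section \ref{sec:well-posedness}.
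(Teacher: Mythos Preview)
Your overall architecture is correct and matches the paper: establish a global $L^\infty$ bound on the conditional density $V_t(\cdot\,|\,u,v)$ by $\Vert V_0(\cdot\,|\,u,v)\Vert_\infty$, use this to feed the no-jump criterion \eqref{no_jump_condition1} at every time, and then run a pathwise contraction with constant $1-\delta$ for uniqueness. Where you diverge is in how the density bound is obtained, and this is worth flagging.

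You propose to extract $V_t\leq \Vert V_0\Vert_\infty$ from a parabolic maximum principle for the Fokker--Planck equation. But conditional on $B_0$ the equation for $V_t$ is still an SPDE (the common-noise term $\rho\sigma\,dB_0$ is a stochastic transport), so the classical maximum principle does not apply directly; you would need to perform a random change of coordinates to remove the transport and then argue carefully about regularity. The paper bypasses this entirely with a two-line probabilistic argument: write $X^x_{u,v}(t)=x+Y(t)+Y_0(t)$, where $Y$ is the idiosyncratic Gaussian part and $Y_0$ is $B_0$-measurable; drop the absorption event (which only decreases the measure); then the conditional law of $X$ given $B_0$ is the convolution of $V_0(\cdot\,|\,u,v)$ with a Gaussian, translated by a constant, and convolution with a probability density cannot increase the $L^\infty$ norm. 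This gives the bound for all $t$ at once, without any PDE theory.

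For uniqueness, your sketch is close in spirit but again less direct than the paper's. Rather than hitting-time estimates through $V_s$, the paper invokes the running-supremum representation from \cite{LS18b}: writing $I_v(t)=\sup_{s\leq t}\{F(\int_0^s g\,d\mathbf{L}_v)-Z_s\}$, the difference $\mathbf{L}_v-\bar{\mathbf{L}}_v$ is controlled by $\int_{\bar{I}_{\hat v}}^{I_{\hat v}} V_0(x\,|\,\hat u,\hat v)\,dx$, i.e.\ an integral of the \emph{initial} density over an interval of length $\leq g(0)\Vert F\Vert_{\mathrm{Lip}}\Vert\mathbf{L}-\bar{\mathbf{L}}\Vert_t^*$. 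This yields the contraction in $\Vert\cdot\Vert_t^*$ with constant $1-\delta$ directly from \eqref{smallness_cond}, without ever needing the time-$t$ density bound in the uniqueness step. Your route via $\Vert V_s\Vert_\infty\leq\Vert V_0\Vert_\infty$ would also close, but it relies on the density bound twice (for continuity and for uniqueness) whereas the paper only needs it for continuity.
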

\begin{proof}
	See Section \ref{Proof_thm_uniq_common} of the Appendix.
\end{proof}
The smallness condition \eqref{smallness_cond} should look familiar in light of Section \ref{criterions_jump_or_not}, and indeed the proof of continuity in time amounts to verifying that the smallness condition implies the bound \eqref{no_jump_condition1}. On the other hand, by adapting the arguments from \cite[Theorem 2.7]{LS18a}, one can show that: if the smallness condition \eqref{smallness_cond} does not hold, then there is a non-trivial probability of seeing jump discontinuities (for certain realisations of the common noise). We make no attempt at treating uniqueness in that regime here.

\subsubsection{The core-periphery model from Section 4}\label{subsec:example_sect4}

In this final section we show how the core-periphery mean field model \eqref{eq:concrete_mean_field_limit} from Section \ref{sec:mf} is a special case of the mean field limit \eqref{CMV}. Following the framework of Section \ref{subsec:example_sect4}, given $m_0=m_\mathrm{c}+m_\mathrm{p}$, we write $n=mm_0$ for arbitrary multiples $m\geq1$, and let the liabilities matrix $\lambda_{n\times n}$ be defined by \eqref{update_lambda_hat}--\eqref{eq:lambda_random} via the underlying matrix $\hat{\lambda}_{m_0\times m_0}$. Due to the special structure, we have the decomposition
\[
m_0 \lambda_{m_0\times m_0} = U_{m_0 \times k} V_{k \times m_0},
\]
with entries
\[
u_{ij} = (1+\epsilon_i)\hat{u}_{ij} \quad \text{and}\quad v_{ij} =(1+\delta_j)\hat{v}_{ij}
\]
given the entries $\hat{u}_{ij}$ and $\hat{v}_{ij}$ of the underlying rank decomposition
\[
m_0\hat{\lambda}_{m_0\times m_0} = \hat{U}_{m_0 \times k} \hat{V}_{k \times m_0}.
\]
Now consider the concrete choice \eqref{eq:lambda_hat_concrete} for $\hat{\lambda}_{m_0\times m_0}$, and notice that the four types (two core banks and two peripheral groups) means that its decomposition 
is fully described by just four row vectors of $\hat{U}_{m_0 \times k}$ and four column vectors of $\hat{V}_{k \times m_0}$: namely the two `core' pairs
\begin{equation}\label{eq:core_pairs}
(\tilde{\mathbf{u}}_1,\tilde{\mathbf{v}}_1):=(\hat{\mathbf{u}}_1,\hat{\mathbf{v}}_1) \quad \text{and} \quad (\tilde{\mathbf{u}}_2,\tilde{\mathbf{v}}_2):=(\hat{\mathbf{u}}_2,\hat{\mathbf{v}}_2) 
\end{equation}
as well as the two `peripheral' pairs
\begin{align}\label{eq:peripheral_pairs}
(\tilde{\mathbf{u}}_3,\tilde{\mathbf{v}}_3)&:=(\hat{\mathbf{u}}_3,\hat{\mathbf{v}}_3)=\cdots = (\hat{\mathbf{u}}_{2+m_{\mathrm{p},1}},\hat{\mathbf{v}}_{2+m_{\mathrm{p},1}}) \nonumber \\
(\tilde{\mathbf{u}}_4,\tilde{\mathbf{v}}_4)&:=(\hat{\mathbf{u}}_{3+m_{\mathrm{p},1}},\hat{\mathbf{v}}_{3+m_{\mathrm{p},1}})=\cdots = (\hat{\mathbf{u}}_{m_0},\hat{\mathbf{v}}_{m_0}).
\end{align}
As we grow the system to arbitrarily large sizes $n=mm_0$, according to \eqref{update_lambda_hat}, this underlying structure from the rank decomposition of $\hat{\lambda}$ is preserved. Thus, the empirical measures \eqref{assump:paramters_emperical} corresponding to the decompositions
\[
n \lambda_{n\times n} = mm_0 \lambda_{mm_0\times mm_0} = U_{mm_0 \times k} V_{k \times mm_0}
\]
take the form
\begin{align*}
&\varpi^{n} =  \frac{1}{m_0} \biggr( \frac{1}{m} \sum_{i=1}^{m}\delta_{(1+\epsilon^1_i)\hat{\textbf{u}}_1}\otimes \delta_{(1+\delta^1_i)\hat{\textbf{v}}_1} \otimes \delta_{X_i^1(0)}
+  \frac{1}{m} \sum_{i=1}^{m}\delta_{(1+\epsilon^2_i)\hat{\textbf{u}}_2}\otimes \delta_{(1+\delta^2_i)\hat{\textbf{v}}_2} \otimes \delta_{X_i^2(0)} \\
&+\sum_{j=1}^{m_{\mathrm{p},1}} \frac{1}{m} \sum_{i=1}^{m} \delta_{(1+\epsilon_{ij}^{3})\tilde{\textbf{u}}_3}\otimes \delta_{(1+\delta^{4}_{ij})\tilde{\textbf{v}}_3} \otimes \delta_{X_i^3(0)}
+ \sum_{j=1}^{m_{\mathrm{p},2}} \frac{1}{m} \sum_{i=1}^{m} \delta_{(1+\epsilon^{4}_{ij})\tilde{\textbf{u}}_4}\otimes \delta_{(1+\delta^{4}_{ij})\tilde{\textbf{v}}_4} \otimes \delta_{X_i^4(0)}
\biggr)
\end{align*}
for a general $n=mm_0$, for all $m\geq1$. Since all the $(\epsilon,\delta)$'s are drawn from $P\otimes P$ in an i.i.d.~way, it follows that $\varpi^n$ is weakly convergent with limiting law 
\begin{align}\label{eq:varpi_sec4}
\varpi = \frac{1}{m_0} & \mathrm{Law}(\mathbf{u}^1, \mathbf{v}^1, X^1(0)) + \ \frac{1}{m_0}\mathrm{Law}(\mathbf{u}^2, \mathbf{v}^2, X^2(0))\nonumber \\
&+ \frac{m_{\mathrm{p},1}}{m_0}\mathrm{Law}(\mathbf{u}^3, \mathbf{v}^3, X^3(0)) +\frac{m_{\mathrm{p},2}}{m_0}\mathrm{Law}(\mathbf{u}^4, \mathbf{v}^4, X^4(0))
\end{align}
where
\begin{equation}\label{eq:varpi_pushforward}
\mathrm{Law}(\mathbf{u}^i, \mathbf{v}^i) = (P \otimes P) \circ \phi^{-1}_i , \qquad \phi_i(\theta)=\bigl((1+\theta)\tilde{\mathbf{u}}_i, (1+\theta)\tilde{\mathbf{v}}_i\bigr),
\end{equation}
for $i=1,\ldots,4$, given the four (fixed and deterministic) principal pairs $(\tilde{\mathbf{u}}_i,\tilde{\mathbf{v}}_i)\in\mathbb{R}^k\times\mathbb{R}^k$ from \eqref{eq:core_pairs}--\eqref{eq:peripheral_pairs}. It remains to observe that, for each $i,j=1\ldots,4$, the mutual exposures
\begin{equation}\label{eq:tilde_lambda_sec4}
\tilde{\lambda}_{ij}:=\frac{1}{m_0} \tilde{\mathbf{v}}_{j} \cdot \tilde{\mathbf{u}}_i = \frac{1}{m_0} \sum_{l=1}^k (\tilde{\mathbf{v}}_{j})_l (\tilde{\mathbf{u}}_i)_l   = \frac{1}{m_0} \sum_{l=1}^k \tilde{u}_{il} \tilde{v}_{lj}
\end{equation}
are precisely those of $\tilde{\lambda}_{4\times4}$ in  \eqref{eq:lambda_tilde}.
Writing $\tau^x_{l,\theta}=\tau^x_{\phi_l(\theta)}$ and $X^x_{l,\theta}=X^x_{\phi_l(\theta)}$, it therefore follows from \eqref{eq:varpi_sec4}--\eqref{eq:varpi_pushforward} and \eqref{eq:tilde_lambda_sec4} that the general formulation of the mean field limit \eqref{CMV} simplifies to that of \eqref{eq:concrete_mean_field_limit} from Section 4, as desired.

\section{Conclusion}\label{sec:discussion}

In this work we introduced the first combined model that considers an Eisenberg--Noe style framework for interbank contagion which can be recast as an interacting particle system with a well-defined mean field limit. Therefore, we are able to draw a direct connection between these previously disparate frameworks for systemic risk, focusing either on the resolution of default cascades in finite bank networks or stochastic dynamics with simple mean field interactions.

The proposed contagion mechanism considers banks with stochastic external assets which, if they drop, can cause defaults before the maturity of all claims.  This is handled first for a finite number of institutions in a purely Eisenberg--Noe style framework, thus extending \cite{BBF18} to account for early defaults. Next, we demonstrate a limiting behaviour as the number of banks increase, which provides justification for performing the analysis of contagion at the level of the mean field limit. In this way, one can significantly lower the parameter space, and it becomes more tractable to pursue analytical results for the regularity of the system's evolution in time.  Moreover, one can circumvent the curse of dimensionality and avoid the slow convergence of Monte Carlo based methods, for example by implementing an analogue of the numerical scheme from \cite{LS18a} as we did in Figure \ref{fig:heat_plots} (alternatively, one could attempt to adapt the semi-analytical approach of \cite{KLR18c}).  As regards the antecedent mean field literature, we provide a more convincing financial underpinning for \cite{HLS18,HS18,LS18a, NS17,NS18}, while also extending the analytical results of \cite{HLS18, LS18a, LS18b} to allow for heterogeneous interactions and a more general form of contagion. Furthermore,  we add to \cite{NS18} by introducing the cascade condition for the resolution of instantaneous default cascades (i.e., jump sizes) as well as establishing results on convergence and uniqueness.

The model of default cascades presented herein can be utilized to answer many questions in systemic risk that are typically intractable analytically (as well as computationally inefficient) for finite bank networks. Additionally, the mean field limit allows for a cleaner analysis of key `systemic' quantities, as exemplified by the mean field cascade condition that gives a precise characterisation of default cascades with an instantaneous `systemic' impact. One important new strand of literature is that of network valuation adjustments~\cite{barucca2016valuation,BF18comonotonic}, in which prices of securities should account for the full network effects. In this regard, the stochastic dynamics underlying the framework herein makes it well-suited for, e.g., pricing credit default swaps on the financial system. By further utilizing the mean field limit, the lower parameter space can facilitate calibration of the stochastic dynamics, and this also opens up the possibility of relying on more analytical methods. These problems are intimately related with systemic risk measures.  For instance, the value-at-risk or CoVaR \cite{adrian2011covar} of the financial system are related to mappings such as
\[
a \mapsto \bbp(\mathbf{L}(t) > a) \quad \text{ and } \quad (a,\delta) \mapsto \bbp\bigl(\mathbf{L}({t+\delta}) - \mathbf{L}(t) > a\bigr),
\]
for a given time $t$.  More specifically, an interesting modification of CoVaR in the mean field limit for core-periphery systems, as discussed in Section~\ref{sec:mf}, is for consideration of the health of the aggregate system conditional on the stress of one of the ``groups'' of institutions. In fact, such structures may allow for the tractable consideration of general systemic risk measures of~\cite{chen2013axiomatic,feinstein2014measures,fouque2015systemic} as well.
Additionally, rather than applying these network valuation adjustments for measuring systemic risk in exogenously provided network structures, the pricing of risk in such a way may allow for considerations of endogenous network formation.  In such a setting, each financial institution would choose to invest in external projects or engage in interbank markets so as to solve some portfolio optimization problem.  Only with a consideration of credit pricing in a financial network would such endogenous network formation be tractable, and we believe this points towards an important avenue of future research.

\appendix
\section{Appendix}\label{sec:technical_appendix}

This appendix contains the proofs of the main results from Section \ref{sec:well-posedness} and is organised into three subsections. The first two subsections address the proofs of Theorems \ref{MV_thm} and \ref{thm_weak_feedback}, respectively, while the final subsection is focused on the identification of the mean field problem \eqref{CMV} as the large population limit of the finite particle system \eqref{general_dist_to_def_sys}. Throughout the appendix, we will be working under Assumptions \ref{MV_assump} and  \ref{assump:paramters_emperical}.

\subsection{Proof of Theorem \ref{MV_thm}}\label{appendix_proof1}

We introduce the notation $\Vert f \Vert_{t}:=\Vert f \Vert_{{L{}}^{\infty}(0,t)}$ and recall the notation $S(\mathbf{v})=\mathrm{supp}(\mathrm{Law}(\mathbf{v}))$ from Assumption \ref{assump:paramters_emperical}. Given this, we can consider  the space of continuous maps $v\mapsto\ell_{v}(\cdot)$ from $S(\mathbf{v})$ to $L^{\infty}(0,t)$, denoted by
\[
C^*_t:= C\bigl(S(\mathbf{v});L^{\infty}(0,t) \bigr),
\]
with respect to the supremum norm
\[
\Vert \ell \Vert_t^* := \sup_{v\in S(\mathbf{v})} \Vert \ell_v(\cdot)\Vert_t.
\]
Since the domain $S(\mathbf{v})$ is a compact subset of $\mathbb{R}^k$, by Assumption \ref{assump:paramters_emperical}, and the codomain $L^{\infty}(0,t)$ is a Banach space, this norm makes $C_t^*$ a Banach space. In order to construct a regular solution to the McKean--Vlasov problem (\ref{CMV}, $\rho=0$), until an explosion time, we will work with the map $\Gamma$, defined in (\ref{eq:Gamma_fp_map}) below. Our strategy is to identify a suitable closed subset of $C_t^*$, for small enough $t>0$, on which we can apply Banach's fixed point theorem.

\subsubsection{Comparison argument and existence of regular solutions}\label{exist_reg_soln}

Given $T>0$, we define the map $\Gamma : C_T^* \mapsto C_T^*$ by
\begin{equation}\label{eq:Gamma_fp_map}
\Gamma[\ell]_{v}(t):=\sum_{l=1}^k v_l \int_{\mathbb{R}_+ \times \mathbb{R}^k \times \mathbb{R}^k } \hat{u}_l \mathbb{P}(t\geq \tau^{x,\ell}_{\hat{u},\hat{v}}) d \varpi(x,\hat{u},\hat{v}),
\end{equation}
for all $t\in[0,T]$ and $v\in S(\mathbf{v})$, where
\begin{equation}\label{X^ell}
\begin{cases}
\tau^{x,\ell}_{u,v}=\inf\{t > 0: X^{x,\ell}_{u,v}(t)\leq0\} \\[6pt]
X^{x,\ell}_{u,v}(t)=x + \int_0^t \!b_{u,v}(s)ds + \int_0^t\!\sigma_{u,v}(s) dB_s - F\bigl( \int_0^t \!g(s)d\ell_v(s) \bigr).
\end{cases}
\end{equation}
Note that, as long as $s \mapsto \ell_v(s)$ is continuous or of finite variation, the integral of $g$ against $\ell_v$ in (\ref{X^ell}) is well-defined, since $g$ is both continuous and of finite variation by Assumption \ref{MV_assump} (see e.g.~\cite[Sect.~1.2]{stroock_integration}). Naturally, all of the results that follow are stated for inputs such that the mapping makes sense.

The cornerstone of our analysis is the next comparison argument. It leads us to the fixed point argument for existence of regular solutions, and it reappears in the generic uniqueness argument of Section \ref{generic uniqueness} which completes the full statement of Theorem \ref{MV_thm}.

\begin{lemma}[Comparison argument]\label{lemma1_HLS} Fix any two $\ell,\bar{\ell}\in C_T^*$ such that $s\mapsto\ell_v(s)$ and $s\mapsto \bar{\ell}_v(s)$ are increasing with $\ell_v(0)=\bar{\ell}_v(0)=0$ for all $v\in S(\mathbf{v})$. Fix also $t_0>0$ and suppose $s\mapsto\ell_v(s)$ is continuous on $[0,t_0)$ for all $v\in S(\mathbf{v})$. Then we have
	\[
	\bigl( \Gamma[\ell]_v(t) - \Gamma[ \bar{\ell} \, ]_v(t) \bigr )^+ \leq C \Vert (\ell - \bar{\ell}\,)^+\Vert_{t}^* \int_0^t (t-s)^{-\f{1}{2}} d \Gamma[\ell]_v(s),
	\]
	for all $t<t_0$ and all  $v\in S(\mathbf{v})$, where $C>0$ is a fixed numerical constant (i.e., it is independent of $t_0$ and $v$).
\end{lemma}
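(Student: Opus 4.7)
The plan is to mimic the HLS18 comparison argument \cite{HLS18}, adapted to the heterogeneous $(u,v)$-dependence and the more general drift/diffusion structure permitted by Assumption \ref{MV_assump}.

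First I would reduce to the case $\ell \geq \bar\ell$ pointwise. Since $\Gamma$ is monotone---a larger $\ell_v$ yields, through the increasing $F$ and non-negative $g$, a pathwise smaller $X^{x,\ell}_{u,v}$ and hence an earlier default time $\tau^{x,\ell}_{u,v}$---we have $\Gamma[\ell \wedge \bar\ell]_v(t) \leq \Gamma[\bar\ell]_v(t)$. Writing $\tilde\ell := \ell \wedge \bar\ell$ and noting $\Vert \ell - \tilde\ell \Vert_t^* \leq \Vert(\ell - \bar\ell)^+\Vert_t^*$ with $\ell \geq \tilde\ell$ pointwise, it therefore suffices to bound $\Gamma[\ell]_v(t) - \Gamma[\tilde\ell]_v(t)$. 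I then rename $\tilde\ell$ back to $\bar\ell$ and proceed under $\ell \geq \bar\ell$.

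On a common probability space, $\ell \geq \bar\ell$ gives $\tau^{x,\ell}_{u,v} \leq \tau^{x,\bar\ell}_{u,v}$ pathwise, so
\[
\mathbb{P}(\tau^{x,\ell}_{u,v} \leq t) - \mathbb{P}(\tau^{x,\bar\ell}_{u,v} \leq t) = \mathbb{P}(\tau^{x,\ell}_{u,v} \leq t < \tau^{x,\bar\ell}_{u,v}).
\]
Setting $M_{u,v}(r) := x + \int_0^r b_{u,v}(\tau)\,d\tau + \int_0^r \sigma_{u,v}(\tau)\,dB_\tau$, the $\bar\ell$-default condition at time $r$ reads $M_{u,v}(r) \leq F(L^{\bar\ell}_v(r))$, and on $\{\tau^{x,\ell}_{u,v} = s\}$ we have $M_{u,v}(s) = F(L^\ell_v(s))$. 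Integration by parts (exploiting that $g$ is non-increasing and $\ell(0) = \bar\ell(0) = 0$) yields $L^\ell_v(s) - L^{\bar\ell}_v(s) \leq g(0) \Vert(\ell-\bar\ell)^+\Vert_t^*$, so by Lipschitzness of $F$ the gap $h_s := F(L^\ell_v(s)) - F(L^{\bar\ell}_v(s))$ satisfies $h_s \leq \Vert F \Vert_{\text{Lip}}\, g(0)\, \Vert(\ell-\bar\ell)^+\Vert_t^*$. Since $r \mapsto F(L^{\bar\ell}_v(r))$ is non-decreasing, survival under $\bar\ell$ past $t$ forces $M_{u,v}(r) - M_{u,v}(s) > -h_s$ for all $r \in [s,t]$. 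Absorbing the (bounded) drift $b_{u,v}$ via Girsanov, with $L^2$ Radon--Nikodym bound that is uniform in $(u,v)$ on $[0,T]$ by Assumption \ref{MV_assump}, and applying Dambis--Dubins--Schwarz to the martingale part (using the two-sided bound on $\sigma_{u,v}$), the classical estimate $\mathbb{P}(\min_{[0,u]} W > -c) \leq c\sqrt{2/(\pi u)}$ for standard Brownian motion yields
\[
\mathbb{P}(\tau^{x,\bar\ell}_{u,v} > t \mid \mathcal{F}_s) \leq C\, h_s\, (t-s)^{-1/2} \quad \text{on } \{\tau^{x,\ell}_{u,v} = s\}
\]
for a constant $C$ depending only on the structural bounds in Assumption \ref{MV_assump} and on $T$.

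Finally, integrating against the sub-probability law of $\tau^{x,\ell}_{u,v}$, multiplying by $u_l v_l$, summing over $l$, and applying Fubini to swap the integration in $(x,u,v)$ with the one in $s$, gives
\[
\Gamma[\ell]_v(t) - \Gamma[\bar\ell]_v(t) \leq C\, \Vert(\ell-\bar\ell)^+\Vert_t^*\, \int_0^t (t-s)^{-1/2}\, d\Gamma[\ell]_v(s),
\]
and taking positive parts delivers the lemma. The main technical obstacle is the conditional hitting estimate: one must verify that the Girsanov transform and the time-change behave uniformly across the compact $S(\mathbf{u}) \times S(\mathbf{v})$ so that the resulting constant $C$ is genuinely independent of $(u,v)$ and hence of $v$, which is precisely where the structural bounds on $b_{u,v}$ and $\sigma_{u,v}$ from Assumption \ref{MV_assump} are exploited. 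The continuity of $\ell$ on $[0,t_0)$ is used only to guarantee that $s \mapsto \mathbb{P}(\tau^{x,\ell}_{u,v} \leq s)$ has no atoms, so that conditioning on $\{\tau^{x,\ell}_{u,v} = s\}$ and the subsequent use of Fubini are unambiguous.
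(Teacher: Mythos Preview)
Your overall route matches the paper's: bound $F(\int_0^t g\,d\ell_v)-F(\int_0^t g\,d\bar\ell_v)$ by $g(0)\Vert F\Vert_{\mathrm{Lip}}\Vert(\ell-\bar\ell)^+\Vert_t^*$ via integration by parts and Lipschitzness, condition on $\{\tau^{x,\ell}_{u,v}=s\}$ using continuity of $\ell$ to pin down $X^{x,\ell}_{u,v}(s)=0$, reduce to a Brownian hitting estimate with a time change on the martingale part, and finally multiply by $\sum_l \tilde v_l u_l\geq0$ and integrate against $\varpi$. Your preliminary reduction to $\ell\geq\bar\ell$ via $\tilde\ell=\ell\wedge\bar\ell$ is correct but unnecessary; the paper simply works with positive parts throughout and never orders $\ell$ and $\bar\ell$.

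The one step that does not go through as written is the Girsanov argument for the drift. An $L^2$ Radon--Nikodym bound combined with Cauchy--Schwarz only yields $\mathbb{P}(A)\leq C\,\tilde{\mathbb{P}}(A)^{1/2}$, which produces $h_s^{1/2}(t-s)^{-1/4}$ rather than the claimed $h_s(t-s)^{-1/2}$, thereby losing the crucial linear dependence on $\Vert(\ell-\bar\ell)^+\Vert_t^*$. The paper sidesteps this by citing \cite[Prop.~3.1]{HLS18} and displaying only the martingale part in the key inequality. A clean fix, with no change of measure, is to compare with constant drift: since $\int_s^r b_{u,v}\leq \Vert b\Vert_\infty(r-s)$, one has
\[
\Bigl\{\min_{r\in[s,t]}\Bigl(\int_s^r b+\int_s^r\sigma\,dB\Bigr)>-h_s\Bigr\}\subseteq\Bigl\{\min_{r\in[s,t]}\Bigl(\Vert b\Vert_\infty(r-s)+\int_s^r\sigma\,dB\Bigr)>-h_s\Bigr\},
\]
and after the DDS time change (using $\epsilon\leq\sigma\leq\epsilon^{-1}$) this is contained in $\{\min_{u\in[0,U]}(W_u+\mu u)>-h_s\}$ for a standard Brownian motion $W$, $U\asymp t-s$, and $\mu=\Vert b\Vert_\infty\epsilon^{-2}$. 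The explicit law of the running minimum of drifted Brownian motion then gives $\mathbb{P}(\min_{[0,U]}(W_u+\mu u)>-c)\leq C_{\mu,T}\,c\,U^{-1/2}$ uniformly for $U\leq T$, recovering the linear bound $Ch_s(t-s)^{-1/2}$ with a constant depending only on the structural bounds in Assumption~\ref{MV_assump}.
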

\begin{proof} Fix $t<t_0$. Recalling that $\ell_v(0)=0$, integration by parts (see e.g., \cite[Sect.~1.2]{stroock_integration}) gives
	\begin{equation*}
	\int_0^t g(s)d\ell_v(s) = g(t)\ell_v(t) + \int_0^t \ell_v(s) d(-g)(s),
	\end{equation*}
	and likewise for $\bar{\ell}_v$. Using this and the assumptions on $F$, $g$, $\ell$, and $\bar{\ell}$, we have
	\begin{align*}
	F\Bigl( \!\int_0^t g(s) & d\ell_v(s) \Bigr) - F\Bigl( \int_0^t g(s)d\bar{\ell}_v(s) \Bigr) \leq  \Vert F \Vert_{\text{Lip}} \Bigl( \int_0^t g(s)d\ell_v(s)  - \int_0^t g(s)d\bar{\ell}_v(s) \Bigr)^+\\
	&= \Vert F \Vert_{\text{Lip}} \Bigl( g(t)(\ell_v(t)-\bar{\ell}_v(t)) + \int_0^t (\ell(s)-\bar{\ell}(s)) d(-g)(s) \Bigr)^+\\
	&\leq g(t) \Vert F \Vert_{\text{Lip}} \bigl(\ell_v(t) - \bar{\ell}_v(t) \bigr)^+ +  \Vert F \Vert_{\text{Lip}} \int_0^t \bigl(\ell_v(s) - \bar{\ell}_v(s)\bigr)^+ d(-g)(s) \\
	&\leq   g(0) \Vert F \Vert_{\text{Lip}} \Vert (\ell_v - \bar{\ell}_v)^+\Vert_{t}.
	\end{align*}
	Thus, taking the difference between the two processes $X^{x,\ell}_{u,v}$ and $X^{x,\bar{\ell}}_{u,v}$, as defined in (\ref{X^ell}) coupled through the same Brownian motion, it follows that
	\[
	X^{x,\bar{\ell}}_{u,v}(t) - X^{x,\ell}_{u,v}(t) \leq   g(0) \Vert F \Vert_{\text{Lip}} \Vert (\ell_v - \bar{\ell}_v)^+\Vert_{t}.
	\]
	Therefore, using the continuity of $\ell_v$, for any $s\in[0,t]$, it holds on the event $\{ \tau^{x,\ell}_{u,v} = s  \}$ that
	\[
	X^{x,\bar{\ell}}_{u,v}(s) = X^{x,\bar{\ell}}_{u,v}(s) - X^{x,\ell}_{u,v}(s) \leq  g(0) \Vert F \Vert_{\text{Lip}} \Vert (\ell - \bar{\ell})^+\Vert_{s}^*.
	\]
	Based on this, we can replicate the arguments from \cite[Prop.~3.1]{HLS18}, by instead conditioning on the value of $\tau^{x,\ell}_{u,v}$ and using the previous inequality, to deduce that
	\begin{align*}
	\mathbb{P}(t\geq \tau^{x,\ell}_{u,v}) & - \mathbb{P}(t\geq \tau^{x,\bar{\ell}}_{u,v} )  \\ &\leq \int_0^t  \mathbb{P}\bigl( \inf_{r\in[s,t]} \textstyle\int_s^r\!\sigma_{u,v}(h) dB_h > - g(0)\Vert F \Vert_{\mathrm{Lip}} \Vert (\ell - \bar{\ell})^+\Vert_{s}^* \bigr ) d \mathbb{P}(s\geq \tau^{x,\ell}_{u,v}).
	\end{align*}
	Performing a time change in the Brownian integral, and using that there is a uniform $\epsilon>0$ such that $\epsilon \leq \sigma_{u,v} \leq \epsilon^{-1}$, by Assumption \ref{MV_assump}, it follows from the law of the infimum of a Brownian motion that
	\[
	\mathbb{P}(t\geq \tau^{x,\ell}_{u,v}) - \mathbb{P}(t\geq \tau^{x,\bar{\ell}}_{u,v} )
	\leq
	C \Vert (\ell - \bar{\ell}\,)^+\Vert_{t}^*\int_0^t (t-s)^{-\f{1}{2}} d \mathbb{P}(s\geq \tau^{x,\ell}_{u,v})
	\]
	where the constant $C>0$ is independent of $t$, $x$, $u$, and $v$. Now fix any $\tilde{v}\in S(\mathbf{v})$. Multiplying both sides of the above inequality by $\sum_{l=1}^k\tilde{v}_lu_l$ and recalling that this is non-negative for all $u$ in the support of $\varpi$, by Assumption \ref{assump:paramters_emperical}, we can then integrate both sides of the resulting inequality against $\varpi$, over $(x,u,v)\in \mathbb{R}_+ \times \mathbb{R}^k \times \mathbb{R}^k$, to arrive at
	\[
	\Gamma[\ell]_{\tilde{v}}(t) -  \Gamma[\bar{\ell}\,]_{\tilde{v}}(t)   \leq C \Vert (\ell - \bar{\ell}\,)^+\Vert_{t}^*\int_0^t (t-s)^{-\f{1}{2}} d \Gamma[\ell]_{\tilde{v}}(t),
	\]
	for all $t<t_0$, for some fixed numerical constant $C>0$ independent of $t_0$ and $\tilde{v}$. As the right-hand side is positive, this proves the lemma.
\end{proof}

For any $\gamma\in(0,1/2)$, $A>0$, and $t>0$, we define the space $\mathcal{S}(\gamma, A, t) \subset C_t^*$ by
\begin{equation}
\mathcal{S}(\gamma, A, t) := \bigl\{  \ell \in C\bigl(S(\mathbf{v});H^1(0,t) \bigr) : \ell_v^\prime(t) \leq A t^{-\gamma} \;\; \text{for a.e.} \; t\in[0,t], v\in S(\mathbf{v})     \bigr\},
\end{equation}
which is a complete metric space with the metric inherited from $C_t^*$. Moreover, we define the map
\[
\hat{\Gamma}[\ell;u,v](t):= \int_0^\infty \mathbb{P}(t\geq \tau^{x,\ell}_{u,v}) d \nu_0(x|u,v),
\]
so that $\Gamma[\ell]_{\tilde{v}}(t)=\sum_{l=1}^k \tilde{v}_l \int_{\mathbb{R}^k\times\mathbb{R}^k} u_l\hat{\Gamma}[\ell;u,v](t) d\hat{\varpi}(u,v)$. Then, for each $u$ and $v$, we can replicate the arguments from \cite[Sect.~4]{HLS18} for the function $t\mapsto \hat{\Gamma}[\ell;u,v](t)$ in place of the corresponding function considered there. Given $\hat{\varpi}$ and $V_0(\cdot|u,v)$ satisfying (\ref{eq:cond_V_0_decay}), we can thus conclude (by arguing precisely as in \cite[Prop.~4.9]{HLS18}), that there exists $A>0$ such that, for any $\varepsilon_0>0$, there is a small enough time $t_0>0$ for which
\begin{equation}\label{eq:Gamma_stable}
\Gamma : \mathcal{S}\bigl(\tfrac{1-\beta}{2}, A+\varepsilon_0, t_0 \bigr) \rightarrow  \mathcal{S}\bigl(\tfrac{1-\beta}{2}, A+\varepsilon_0, t_0 \bigr),
\end{equation}
where $A$ only depends on $C_\star$ and $x_\star$ from (\ref{eq:cond_V_0_decay}). Moreover, by analogy with \cite[Thm.~1.6]{HLS18}, we can deduce from Lemma \ref{lemma1_HLS}  that $\Gamma$ is a contraction on this space for small enough $t_0$. Therefore, the small time existence of a regular solution $\mathcal{L}_v^*(t)=\sum_{l=1}^k v_l \mathcal{L}_l(t)$ to (\ref{CMV}, $\rho=0$) now follows from an application of Banach's fixed point theorem as in the proof of \cite[Thm.~1.7]{HLS18}. Finally, by replicating the bootstrapping argument from the proof of \cite[Cor.~5.3]{HLS18}, we conclude that the regular solution extends until the first time $t_{\star}$ such that the $H^1$ norm of $(\mathcal{L}_1,\ldots,\mathcal{L}_k)$ diverges on $(0,t_\star)$. This proves the first part of Theorem \ref{MV_thm}.

\subsubsection{Generic uniqueness}\label{generic uniqueness}
It remains to verify that the general uniqueness result of \cite[Thm.~1.8]{HLS18} can be extended to the present setting, which will follow from the two lemmas below. The first lemma concerns a family of auxiliary McKean--Vlasov problems given by
\begin{equation}\label{eq:X_eps}
\begin{cases}
X^{x,\epsilon}_{u,v}(t)=x\mathbf{1}_{x\geq\varepsilon} - \frac{\varepsilon}{4} + \int_0^t \!b_{u,v}(s)ds + \int_0^t \!\sigma_{u,v}(s) dB_s - F\bigl( g(0) \lambda_v^\eps + \int_0^t \!g(s)d\mathbf{L}^\eps_v (s) \bigr)\\[6pt]
\mathbf{L}^{\eps}_{v}(t)= \sum_{l=1}^k v_l \int_{\mathbb{R}_+\times\mathbb{R}^k\times \mathbb{R}^k} \hat{u}_l \mathbb{P}(\tau^{x,\eps}_{\hat{u},
	\hat{v}}\leq t)d\varpi(x,\hat{u},\hat{v}) \\[6pt] \tau^{x,\eps}_{u,v}=\inf\{t\geq0:{X}^{x,\eps}_{u,v}(t)\leq0\},
\end{cases}
\end{equation}
for $\varepsilon>0$, where $\lambda_{v}^\eps:= \sum_{l=1}^k v_l \int_{(0,\eps)\times\mathbb{R}^k\times\mathbb{R}^k} \hat{u}_l d\varpi(x,\hat{u},\hat{v}) $.
This family of equations will serve as the equivalent of the `$\eps$-deleted solutions' introduced in \cite[Sect.~5.2]{HLS18}. By writing
\begin{equation}\label{eq:eps_loss}
\mathbf{L}_v^{\eps}(t)=\lambda_v^\eps+\tilde{\mathbf{L}}_v^{\varepsilon}(t) \quad \text{with} \quad \tilde{\mathbf{L}}_v^{\varepsilon}(t)= \sum_{l=1}^k v_l \int_{[\eps,\infty)\times\mathbb{R}^k\times \mathbb{R}^k}\hat{u}_l\mathbb{P}(\tau^{x,\eps}_{\hat{u},\hat{v}}\leq t)d\varpi(x,\hat{u},\hat{v}),
\end{equation}
we can show that these $\eps$-deleted problems are well-posed with regularity estimates that are uniform in $\eps>0$.

\begin{lemma}[Uniformly regular $\eps$-deleted solutions]\label{lemma2_HLS}
	There is an $\eps_0>0$ such that (\ref{eq:X_eps}) has a family of solutions $\{\mathbf{L}^\eps\}_{\eps\leq \eps_0}$ which are uniformly regular in the following sense: There exists $A>0$ and $t_0>0$ such that ${\mathbf{L}{}}^{\eps} \in \mathcal{S}(\tfrac{1-\beta}{2}, A, t_0)$ uniformly in $\eps\in(0,\eps_0]$. 
\end{lemma}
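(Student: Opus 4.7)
The plan is to mimic the Banach fixed point construction from Section~\ref{exist_reg_soln}, but with explicit tracking of how the bounds depend on $\varepsilon$, and to verify that the relevant constants can be chosen uniformly in $\varepsilon\in(0,\varepsilon_0]$ for some small $\varepsilon_0>0$. First, I would define the $\varepsilon$-analogue of the fixed point map by
\[
\Gamma^\varepsilon[\ell]_v(t) := \sum_{l=1}^k v_l \int_{\mathbb{R}_+\times\mathbb{R}^k\times\mathbb{R}^k} \hat{u}_l \, \mathbb{P}\bigl(t\geq \tau^{x,\varepsilon,\ell}_{\hat{u},\hat{v}}\bigr) \, d\varpi(x,\hat{u},\hat{v}),
\]
where $\tau^{x,\varepsilon,\ell}_{u,v}$ is the first hitting time of zero by
\[
X^{x,\varepsilon,\ell}_{u,v}(t) = x\mathbf{1}_{x\geq\varepsilon}-\tfrac{\varepsilon}{4} + \int_0^t b_{u,v}(s)ds + \int_0^t \sigma_{u,v}(s) dB_s - F\Bigl(g(0)\lambda_v^\varepsilon + \int_0^t g(s)d\ell_v(s)\Bigr),
\]
so that a fixed point $\ell_v=\tilde{\mathbf{L}}_v^\varepsilon$ of $\Gamma^\varepsilon$ produces a solution $\mathbf{L}_v^\varepsilon = \lambda_v^\varepsilon + \tilde{\mathbf{L}}_v^\varepsilon$ of \eqref{eq:X_eps}. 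The key preliminary observation is that particles initialised with $x<\varepsilon$ start at $-\varepsilon/4<0$ and hence default immediately, leaving the effective initial density of surviving mass
\[
\tilde V_0^\varepsilon(y|u,v) := V_0\bigl(y+\tfrac{\varepsilon}{4}\,\big|\,u,v\bigr)\mathbf{1}_{y\geq 3\varepsilon/4},
\]
which vanishes on $(0,3\varepsilon/4)$ and, by \eqref{eq:cond_V_0_decay}, is majorised by $C_\star(y+\tfrac{\varepsilon}{4})^\beta$ on its support. Crucially, $\tilde V_0^\varepsilon$ satisfies the decay bound \eqref{eq:cond_V_0_decay} with new constants $C',D',x'>0$ that can be chosen \emph{independently of} $\varepsilon\in(0,\varepsilon_0]$.

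Next, I would verify that $\Gamma^\varepsilon$ stabilises $\mathcal{S}(\tfrac{1-\beta}{2},A,t_0)$ for some $A$ and $t_0$ independent of $\varepsilon$. The pointwise density-level estimates underlying the analogue of \eqref{eq:Gamma_stable} in Section~\ref{exist_reg_soln} are driven entirely by the constants appearing in the decay bound on the initial density, so the $\varepsilon$-uniform majorisation of $\tilde V_0^\varepsilon$ just established delivers the desired stability via the same calculations as in the proof of \cite[Prop.~4.9]{HLS18}. The additive shock $g(0)\lambda_v^\varepsilon$ that appears inside $F$ is bounded, positive, and time-independent, so it only contributes an $\varepsilon$-uniform constant displacement to the feedback term and does not affect the time-decay estimates.

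Finally, I would invoke an immediate adaptation of Lemma~\ref{lemma1_HLS} for $\Gamma^\varepsilon$: its comparison argument depends only on $F$ being Lipschitz and increasing, $g$ being continuous, positive, and decreasing, and on the Gaussian law of the infimum of a Brownian motion with $\sigma_{u,v}$ bounded above and below by positive constants. The extra constant $g(0)\lambda_v^\varepsilon$ cancels in the pathwise difference $X^{x,\varepsilon,\ell}_{u,v}-X^{x,\varepsilon,\bar\ell}_{u,v}$, so the contraction bound holds for $\Gamma^\varepsilon$ with a constant independent of $\varepsilon$. After possibly shrinking $t_0$, $\Gamma^\varepsilon$ is then a contraction on $\mathcal{S}(\tfrac{1-\beta}{2},A,t_0)$ uniformly in $\varepsilon\in(0,\varepsilon_0]$, and Banach's fixed point theorem supplies a unique $\tilde{\mathbf L}^\varepsilon$ in this class; since $\lambda_v^\varepsilon$ is a non-negative constant in time, $\mathbf L^\varepsilon=\lambda^\varepsilon+\tilde{\mathbf L}^\varepsilon$ also lies in $\mathcal{S}(\tfrac{1-\beta}{2},A,t_0)$ after harmlessly enlarging $A$.

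The hard part will be the uniformity of the constants $A$ and $t_0$ in $\varepsilon$: one must check that every density-level estimate transcribed from the proof of the original existence result survives with $\varepsilon$-independent constants, in particular the boundary contribution near the origin that dictates the singular exponent $(1-\beta)/2$. This is made possible by the two facts noted above, namely that the effective initial profiles $\tilde V_0^\varepsilon$ inherit \eqref{eq:cond_V_0_decay} with uniform constants and, in fact, enjoy the additional vanishing neighbourhood $(0,3\varepsilon/4)$ near the origin, which can only improve the relevant bounds.
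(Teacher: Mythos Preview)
Your overall strategy is the same as the paper's: reduce the $\varepsilon$-perturbed problem to the existence argument of Section~\ref{exist_reg_soln} by showing that the effective initial density satisfies the decay bound \eqref{eq:cond_V_0_decay} with constants uniform in $\varepsilon$. However, there is a genuine gap in how you handle the initial feedback. At $t=0$ the process $X^{x,\varepsilon,\ell}_{u,v}$ is located at $x\mathbf{1}_{x\geq\varepsilon}-\tfrac{\varepsilon}{4}-F(g(0)\lambda_v^\varepsilon)$, not at $x\mathbf{1}_{x\geq\varepsilon}-\tfrac{\varepsilon}{4}$, because the argument of $F$ is $g(0)\lambda_v^\varepsilon>0$ at time zero. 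Consequently your ``effective initial density'' $\tilde V_0^\varepsilon(y)=V_0(y+\tfrac{\varepsilon}{4})\mathbf{1}_{y\geq 3\varepsilon/4}$ is \emph{not} the density of the surviving mass at $t=0$; it misses the additional $v$-dependent shift by $F(g(0)\lambda_v^\varepsilon)$. Your remark that this ``only contributes an $\varepsilon$-uniform constant displacement'' is precisely where the work is hidden: a priori nothing prevents $F(g(0)\lambda_v^\varepsilon)$ from being of order $\varepsilon$ or larger, in which case the true initial density would no longer be supported $\varepsilon$-far from the boundary and the uniform $y^\beta$ control would be lost.

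The paper fills this gap explicitly. First it uses \eqref{eq:cond_V_0_decay} to bound $\lambda_v^\varepsilon\leq C_\star\varepsilon^{1+\beta}/(1+\beta)$ uniformly in $v$, so that the Lipschitz property of $F$ gives $F(g(0)\lambda_v^\varepsilon)\leq \tfrac{\varepsilon}{4}$ for all $\varepsilon\leq\varepsilon_0$. It then performs the full change of variables $y=x-\tfrac{\varepsilon}{4}-F(g(0)\lambda_v^\varepsilon)$, which simultaneously absorbs the shift into the initial condition and replaces the feedback by $F(g(0)\lambda_v^\varepsilon+\cdot)-F(g(0)\lambda_v^\varepsilon)$, a function that vanishes at zero with the same Lipschitz constant as $F$. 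The resulting initial density $V_0^\varepsilon(y|u,v)=V_0(y+\tfrac{\varepsilon}{4}+F(g(0)\lambda_v^\varepsilon))\mathbf{1}_{y\geq\varepsilon/2}$ then satisfies $V_0^\varepsilon(y|u,v)\leq 2^\beta C_\star y^\beta$ uniformly in $(u,v,\varepsilon)$, and the existence argument from Section~\ref{exist_reg_soln} applies directly with uniform constants. Once you insert this preliminary bound on $\lambda_v^\varepsilon$ and correct the shift accordingly, your outline becomes complete and coincides with the paper's proof.
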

\begin{proof}First of all, we can note that $\lambda_v^\eps \leq C_\star \eps^{1+\beta}/(1+\beta)$ uniformly in $v$, for small enough $\eps>0$, by (\ref{eq:cond_V_0_decay}), and clearly $F(x)=o(x^{1/(1+\beta)})$ as $x\downarrow0$, since $F$ is Lipschitz with $F(0)=0$. Hence there exists $\eps_0>0$ such that $F(g(0)\lambda_v^\eps)\leq \eps/4$ for all $\eps\in(0,\eps_0)$.  Next, using (\ref{eq:eps_loss}) and making the change of variables $y=x-\eps/4 - F(g(0)\lambda_v^\eps)$ in (\ref{eq:X_eps}) we obtain the equivalent formulation
	\begin{equation*}
	\begin{cases}
	\tilde{X}^{y,\eps}_{u,v}(t)=y+\int_0^t b_{u,v}(s)ds + \int_0^t\sigma_{u,v}(s) dB(s) - F\bigl( g(0)\lambda_v^\eps + \int_0^t g(s)d \tilde{\mathbf{L}}_v^{\varepsilon}(s) \bigr) + F\bigl( g(0)\lambda_v^\eps \bigr)
	\\[4pt]
	\tilde{\mathbf{L}}_v^{\varepsilon}(t)= \sum_{l=1}^k v_l \int_{\mathbb{R}_+\times\mathbb{R}^k\times \mathbb{R}^k}\hat{u}_l\mathbb{P}(\tilde{\tau}^{y,\eps}_{\hat{u},\hat{v}}\leq t)V^\eps_0(y|\hat{u},\hat{v})dyd\hat{\varpi}(\hat{u},\hat{v})\\[4pt]
	V_0^\eps(y|u,v)=V_0\bigl(y+\f{\eps}{4} + F(g(0)\lambda_v^\eps) \bigr) \mathbf{1}_{\{ y + \f{\eps}{4} + F(g(0)\lambda_v^\eps)  \geq \eps  \}}
	\\[4pt]
	\tilde{\tau}^{y,\eps}_{u,v}=\inf \{ t\geq 0 : \tilde{X}^{y,\eps}_{u,v}(t) \leq 0 \} 
	\end{cases}
	\end{equation*}
	Now take $\eps_0 \leq x_\star$. Recalling that $F(\lambda_v^\eps)\leq \eps/4$ for all $\eps\in(0,\eps_0)$, we can then observe that
	\begin{align*}
	V_0^\eps(y|u,v) & \leq C_\star \bigl( y+\eps/4 + F(g(0)\lambda_v^\eps)  \bigr) \mathbf{1}_{y+\eps/4 + F(g(0)\lambda_v^\eps) \geq \eps} \\
	& \leq (y+\eps/2 )^\beta \mathbf{1}_{y \geq \eps/2} \leq 2^\beta C_\star y^\beta
	\qquad \text{for all} \quad y<x_\star/2,
	\end{align*}
	uniformly in $u$, $v$, and $\eps \in (0,\eps_0)$. Therefore, for each $\eps\in (0,\eps_0)$, we can indeed construct a regular solution $\tilde{\mathbf{L}}^{\varepsilon}$ to the above system by the first part of Theorem \ref{MV_thm} (as proved in Section \ref{exist_reg_soln}). Moreover, since the boundary control on $V_0^\eps(\cdot|u,v)$ is uniform in $\eps\in(0,\eps_0)$, uniformly in $u$ and $v$, it follows from the fixed point argument in Section \ref{exist_reg_soln} that the regularity of the solutions $\tilde{\mathbf{L}}^\eps$ is also uniform in $\eps\in(0,\eps_0)$.
	By (\ref{eq:eps_loss}), the uniform regularity of the original family $\mathbf{L}^\eps$ follows a fortiori from that of $\tilde{\mathbf{L}}^\eps$, and thus the proof is complete.
\end{proof}

Armed with Lemma \ref{lemma2_HLS}, we can now proceed to the final ingredient of the general uniqueness result, namely the `monotonicity and trapping' argument from \cite[Sect.~5.2]{HLS18}.

\begin{lemma}[Monotonicity and vanishing envelope]\label{lemma3_HLS} 
	Let $\mathbf{L{}}^*:S(\mathbf{v})\rightarrow L^\infty(0,T)$ be given by
	\[
	\mathbf{L{}}^*_v(t):=\sum_{l=1}^k v_l \mathcal{L}_l = \sum_{l=1}^k v_l  \int_{\mathbb{R}_+ \times \mathbb{R}^k \times \mathbb{R}^k} \hat{u}_l \mathbb{P}( t\geq \tau^x_{\hat{u},\hat{v}}) d\varpi(x,\hat{u},\hat{v})
	\]
	for a generic solution to \eqref{CMV} with $\rho=0$, and suppose there are no jumps of $(\mathcal{L}_1,\ldots,\mathcal{L}_k)$ on $[0,t_0)$ so $s\mapsto \mathbf{L{}}^*_v(s)$ is continuous on $[0,t_0)$ for all $v$. If $\mathbf{L}^\eps$ is a continuous `$\varepsilon$-deleted' solution on $[0,t_0)$, then $\mathbf{L}^\eps_v> \mathbf{L}_v$ on $[0,t_0)$, for all $v \neq 0$.  Moreover, if $\mathbf{L}$ is regular on $[0,t_0)$ and the family $\{ \mathbf{L}^\eps \}$ is uniformly regular on $[0,t_0)$, in the sense of Lemma \ref{lemma2_HLS}, then there is a $t_1\in(0,t_0)$ such that the envelope between the two is vanishing on $[0,t_1]$, that is, $\Vert \mathbf{L} - \mathbf{L}^\eps \Vert_{t_1}^* \rightarrow 0$ as $\varepsilon\rightarrow 0$.
\end{lemma}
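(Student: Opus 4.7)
The plan is to establish both claims by coupling the $\varepsilon$-deleted system $X^{x,\varepsilon}_{u,v}$ and the generic system $X^x_{u,v}$ through the same driving Brownian motions, so that they differ only via their initial conditions and their feedback $F(\cdot)$ terms. For the \emph{monotonicity} claim, I would introduce the stopping time
\[
\tau_\varepsilon := \inf \bigl\{ t \in [0, t_0) : \mathbf{L}^\varepsilon_v(t) \leq \mathbf{L}^*_v(t) \text{ for some } v \in S(\mathbf{v}),\, v \neq 0 \bigr\}
\]
and aim to show $\tau_\varepsilon = t_0$. By continuity of the loss processes and the initial inequality $\mathbf{L}^\varepsilon_v(0) = \lambda_v^\varepsilon \geq 0 = \mathbf{L}^*_v(0)$, we have $\mathbf{L}^\varepsilon_v \geq \mathbf{L}^*_v$ on $[0, \tau_\varepsilon]$. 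Combined with $F$ increasing and the integration-by-parts trick from the proof of Lemma \ref{lemma1_HLS}, this delivers the pathwise bound $X^{x,\varepsilon}_{u,v}(t) \leq X^x_{u,v}(t) - \varepsilon/4$ for $x \geq \varepsilon$ and $t \leq \tau_\varepsilon$, whence $\tau^{x,\varepsilon}_{u,v} \leq \tau^{x}_{u,v}$ almost surely. The nonzero $\varepsilon/4$ margin then propagates into strict inequality of the default probabilities via a Brownian hitting-time argument, giving $\mathbf{L}^\varepsilon_{v^*}(\tau_\varepsilon) > \mathbf{L}^*_{v^*}(\tau_\varepsilon)$ at any candidate equality $v^*$, which contradicts the definition of $\tau_\varepsilon$.

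For the \emph{vanishing envelope}, I would set $\Delta_v(t) := \mathbf{L}^\varepsilon_v(t) - \mathbf{L}^*_v(t)$ and exploit the monotonicity just proven to write
\[
\mathbb{P}(\tau^{x,\varepsilon}_{u,v} \leq t) - \mathbb{P}(\tau^x_{u,v} \leq t) = \mathbb{P}(\tau^{x,\varepsilon}_{u,v} \leq t < \tau^{x}_{u,v}).
\]
Splitting according to whether $x < \varepsilon$ (contributing at most $\lambda_v^\varepsilon \to 0$) or $x \geq \varepsilon$, the coupling together with the Lipschitz continuity of $F$ and integration by parts gives $|X^{x,\varepsilon}_{u,v}(s) - X^x_{u,v}(s)| \leq \varepsilon/4 + \Vert F \Vert_{\mathrm{Lip}} g(0)(\lambda_v^\varepsilon + \Vert \Delta \Vert_t^*)$, so the standard Brownian hitting-density bound controls the conditional probability of $\{t < \tau^x_{u,v}\}$ given $\{\tau^{x,\varepsilon}_{u,v} = s\}$ by $C(\varepsilon + \lambda_v^\varepsilon + \Vert \Delta \Vert_t^*)(t-s)^{-1/2}$. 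Integrating against $d\mathbb{P}(\tau^{x,\varepsilon}_{u,v} \leq s)$ and then summing against $\varpi$ (using $\sum_l v_l u_l \geq 0$ on the support), the uniform regularity $\partial_s \mathbf{L}^\varepsilon_v(s) \leq A s^{-(1-\beta)/2}$ reduces $\int_0^t (t-s)^{-1/2} d \mathbf{L}^\varepsilon_v(s)$ to $O(t^{\beta/2})$ via a beta-function computation. The resulting inequality
\[
\Vert \Delta \Vert_t^* \leq C\bigl(\varepsilon + \textstyle\sup_v \lambda_v^\varepsilon + \Vert \Delta \Vert_t^*\bigr) t^{\beta/2} + \textstyle\sup_v \lambda_v^\varepsilon
\]
admits a Gronwall-type absorption of the $\Vert \Delta \Vert_t^*$ term on the right for small enough $t = t_1$, delivering $\Vert \mathbf{L}^\varepsilon - \mathbf{L}^* \Vert_{t_1}^* \to 0$ as $\varepsilon \downarrow 0$.

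The \emph{main obstacle} I anticipate is in the monotonicity step, specifically isolating strict inequality at times when $\lambda_v^\varepsilon$ happens to vanish for some $v \in S(\mathbf{v})$ (no initial mass in $(0, \varepsilon)$ when weighted by $v$). In this case the initial jump $\lambda_v^\varepsilon$ provides no immediate separation, so the strictness must be extracted entirely from the $\varepsilon/4$ pathwise shift via a careful small-time Brownian hitting-time lower bound on the additional defaults that the shift produces. Once the monotonicity and the comparison estimate are in place, the vanishing envelope should close by the Gronwall-type absorption outlined above, relying on the uniform regularity supplied by Lemma \ref{lemma2_HLS} together with the assumed regularity of $\mathbf{L}^*$.
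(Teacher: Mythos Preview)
Your proposal is correct and follows essentially the same route as the paper: couple through common Brownians, exploit the $\varepsilon/4$ shift to derive the monotonicity via a first-crossing-time contradiction, and then condition on $\{\tau^{x,\varepsilon}_{u,v}=s\}$ together with the uniform regularity to obtain the vanishing envelope by a Gronwall-type absorption (the paper phrases this as replicating \cite[Lemmas~5.6--5.7]{HLS18}). Your flagged obstacle about $\lambda_v^\varepsilon=0$ is legitimate---the paper simply asserts $\lambda_v^\varepsilon>0$ for $v\neq0$ without justification---but, as you correctly observe, the $\varepsilon/4$ pathwise shift alone already forces strict inequality for $t>0$ via the Brownian hitting-time lower bound, so no genuinely new idea is needed.
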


\begin{proof}
	Noting that $\mathbf{L}^\eps_v(0)=\lambda_v^\epsilon > 0=\mathbf{L}_v(0)$ for $v\neq0$, towards a contradiction we let $t \in(0, t_0)$ be the first time $\mathbf{L}^\eps_v(t) = \mathbf{L}_v(t)$ for some $v\neq0$. Then it holds for any $s<t$ that
	\begin{align*}
	g(0)\lambda_v^\eps + \int_0^s \!g(r)d\mathbf{L}^\eps_v (r) & = \int_0^s \mathbf{L}^\eps_v(r) d(-g)(r) + g(s)\mathbf{L}^\eps_v(s) \nonumber\\
	& \geq  \int_0^s \mathbf{L}_v(r) d(-g)(r) + g(s)\mathbf{L}_v(s) = \int_0^s g(r)d\mathbf{L}_v(r).
	\end{align*}
	and, since $F$ is increasing, we thus have
	\begin{equation}\label{eq:eps/4_1}
	X^x_{u,v}(s)-X^{x,\eps}_{u,v}(s) = x\mathbf{1}_{x<\varepsilon}+\frac{\varepsilon}{4} +  F\Bigl( g(0) \lambda_v^\eps + \int_0^s \!gd\mathbf{L}^\eps_v \Bigr) - F\Bigl(\int_0^s \!gd\mathbf{L}_v \Bigr) \geq \frac{\varepsilon}{4},
	\end{equation}
	for all $s\in(0,t)$.
	Arguing as in the proof of \cite[Lemma 5.6]{HLS18}, it follows from (\ref{eq:eps/4_1})  that
	\begin{align*}
	\mathbf{L}_v^\eps(t) & \geq \mathbf{L}_v(t) + \sum_{l=1}^k v_l  \int_{\mathbb{R}_+ \times \mathbb{R}^k \times \mathbb{R}^k} \hat{u}_l \mathbb{P} \Bigl( \inf_{r\in[0,t] } X^x_{u,v}(s) \in (0, \eps/4 ] \Bigr) d\varpi(x,\hat{u},\hat{v}) > \mathbf{L}_v(t),
	\end{align*}
	which contradicts the definition of $t$, thus proving the first claim.
	
	For the second claim, can now rely on the fact that $\mathbf{L}^\eps_v>\mathbf{L}_v$ on $[0,t_0)$ for all $v\neq 0$. Consequently, since $X^{x,\varepsilon}_{u,v}(s)=0$ on the event $\{ \tau^{x,\varepsilon}_{u,v}=s \}$, we deduce that, on this event,
	\begin{equation}\label{eq:envelope}
	X^{x}_{u,v}(s) = X^{x}_{u,v}(s) - X^{x,\eps}_{u,v}(s) \leq \varepsilon + \frac{\varepsilon}{4} + g(0) \Vert F \Vert_{\mathrm{Lip}} \Vert \mathbf{L}^\eps_v -\mathbf{L}_v \Vert_s,
	\end{equation}
	where the inequality follows by the equality in (\ref{eq:eps/4_1}) and the same estimate as in the proof of Lemma \ref{lemma1_HLS}. From here, (\ref{eq:envelope}) allows us to replicate the proof of \cite[Lemma 5.7]{HLS18}, only with the term `$g(0) \Vert F \Vert_{\mathrm{Lip}} \Vert \mathbf{L}^\eps -\mathbf{L} \Vert_s^*$' in place of the term `$\alpha(L^{\eps}_s -L_s)$' appearing in that proof. This verifies the second claim.
\end{proof}

Based on Lemmas \ref{lemma1_HLS} and \ref{lemma3_HLS}, the uniqueness part of Theorem \ref{MV_thm} now follows immediately by retracing the proof of \cite[Thm.~1.8]{HLS18} (at the very end of \cite[Sect.~5]{HLS18}) with the cascade condition \eqref{eq:limit_PJC} taking the place of the physical jump condition \cite[(1.7)]{HLS18}.

\subsection{Proof of Theorem \ref{thm_weak_feedback} }\label{Proof_thm_uniq_common}

Let us begin by proving the continuity of a given solution satisfying the smallness condition \eqref{smallness_cond}. To this end, we fix a pair $(u,v)$ and write $X_{u,v}^x(t)=x+ Y(t) +Y_0(t)$ with $Y(t):=\int_0^t\rho\sigma_{u,v}(s)dB_s$ so $Y_0(t)$ is $B_0$-measurable. Letting $p(t,\cdot)$ denote the density of $Y(t)$, it follows from Tonelli's theorem that
\begin{align*}
\int_{\mathbb{R}_+} \mathbb{P} \bigl( X^x_{u,v}(t) &\in A , \, t < \tau^x_{u,v}  \mid B_0\bigr) V_0(x|u,v)dx \leq \int_{\mathbb{R}} \int_{A} p(t,y+x+Y_0(t)) V_0(x|u,v)dy dx \\
& =  \int_{A} \int_{\mathbb{R}} p(t,x+y+Y_0(t)) V_0(x|u,v) dx dy \leq \Vert V_0(\cdot | u,v) \Vert_\infty |A|,
\end{align*}
for all $A\in\mathcal{B}(\mathbb{R})$, since $p(t,\cdot)$ integrates to $1$. Recalling the definition of $V_t$ from \eqref{eq:the_nu_t_measure}, this shows that  $V_t(x|u,v)\leq \Vert V_0(\cdot | u,v) \Vert_\infty $ for all $x\in(0,\infty)$ and all times $t\geq0$. Therefore, the criterion \eqref{no_jump_condition1} holds for all times, by the smallness condition \eqref{smallness_cond}, and hence the given solution must be globally continuous in time.

To prove the uniqueness part of Theorem \ref{thm_weak_feedback}, we show how to extend the arguments behind \cite[Thm.~2.3]{LS18b}. Let $(X,\mathcal{L})$ and $(\bar{X},\mathcal{\bar{L}})$ be any two solutions to (\ref{CMV}) coupled through the same Brownian drivers $B$ and $B_0$. Then we define the increasing processes
\[
\mathbf{L}_v:=\sum_{l=1}^k v_l \mathcal{L}_l \quad \text{and} \quad \bar{\,\mathbf{L}}_v:=\sum_{l=1}^k v_l \bar{\mathcal{L}}_l
\]
for every $v\in S(\mathbf{v})$. Retracing the arguments of \cite[Lemma 2.1]{LS18b}, and applying Fubini's theorem, we can deduce that
\begin{align*}
\loss_v(s) - \bar{\,\loss}_v(s)  \leq \mathbb{E} & \biggl[ \sum_l^k v_l \!\int_{\mathbb{R}^k\times \mathbb{R}^k} \hat{u}_l \int_{\bar{I}_{\hat{v}}(s)}^{I_{\hat{v}}(s)} \!\!V_0(x|\hat{u},\hat{v})dx d\hat{\varpi}(\hat{u},\hat{v}) \,\Big|\, B^0\biggr],
\end{align*}
where
\[
I_v(t) := \sup_{s\leq t}\biggl\{ F\Bigl( \!\int_0^s g(r) d\mathbf{L}_v(r)  \Bigr) - Z_s \biggr\}, \quad \mathrm{and} \quad \bar{I}_v(t) := \sup_{s\leq t}\biggl\{ F\Bigl( \!\int_0^s g(r) d\bar{\mathbf{L}}_v(r)  \Bigr) - Z_s \biggr\},
\]
with
\[
Z_s:=\int_0^sb(r)dr+\int_0^s \sigma(r)d(\rho B^0 + \sqrt{1-\rho^2}B)(r).
\]
By symmetry, $\bar{\,\loss}_v(s)- \loss_v(s)$ satisfies the analogous bound with $I_v(s)$ and $\bar{I}_v(s)$ interchanged. Furthermore, by simply repeating the first estimate from the proof of Lemma \ref{lemma1_HLS}, only with $\mathbf{L}$ and $\!\bar{\,\mathbf{L}}$ in place of $\ell$ and $\bar{\ell}$, we have
\begin{align*}
F\Bigl( \!\int_0^t g(s) & d\mathbf{L}_v(s) \Bigr)  \geq F\Bigl( \int_0^t g(s)d\bar{\,\mathbf{L}}_v(s) \Bigr)   - g(0) \Vert F \Vert_{\text{Lip}} \Vert \mathbf{L} - \!\bar{\,\mathbf{L}}\Vert_{t}^*.
\end{align*}
Therefore, relying on this inequality together with the previous observation, we can retrace the arguments of \cite[Theorem 2.2]{LS18b} and conclude that
\begin{align*}
| \loss_v(s) - \bar{\,\loss}_v(s) | \leq g(0) \Vert F \Vert_{\text{Lip}}  \Vert \loss- \bar{\,\loss}\Vert_{s}^*  \sum_{l=1}^k v_l \!\int_{\mathbb{R}^k\times \mathbb{R}^k} \hat{u}_l  \Vert V_0(\cdot |\hat{u},\hat{v}) \Vert_{\infty} d\hat{\varpi}(\hat{u},\hat{v}) .
\end{align*}
At this point, the smallness condition \eqref{smallness_cond} gives
\[
| \loss_v(s) - \bar{\,\loss}_v(s) | \leq (1-\delta) \Vert \loss- \bar{\,\loss}\Vert_{s}^*,
\]
so, taking a supremum over $v\in S(\mathbf{v})$ on the left-hand side, we conclude that there is pathwise uniqueness.

\subsection{Convergence of the particle system}\label{convergence_appendix}

In this section we outline how the convergence to the conditional McKean--Vlasov problem \eqref{CMV} can be established by retracing the approach of \cite{LS18a} after some adjustments. The arguments rely heavily on specific properties of the M1-topology for the Skorokhod space of c\`adl\`ag paths. The reader is referred to \cite{whitt_2002} for an introduction to this topology. For concreteness, let us restrict to random start points $X_0^i$ satisfying \eqref{eq:cond_V_0_decay} near the absorbing boundary at zero, although it is possible to consider higher generality in these arguments. 

Let $D_\mathbb{R}$ denote the space of real-valued c\`adl\`ag paths on $[0,T]$, and let $(X_1,\ldots,X_n)$ be the unique strong solution to the particle system \eqref{general_system_rank-k} of size $n$ in $D_\mathbb{R} \times \cdots \times D_\mathbb{R}$ (recall Proposition \ref{prop:particle_sys_well-posed}). Moreover, as usual, we let $\mathbf{u}_i \in \mathbb{R}^k$ and $\mathbf{v}_i\in \mathbb{R}^k$ denote the type vectors from \eqref{eq:vectors_u_v}, for $i=1,\ldots,n$. For simplicity of notation, we are suppressing the dependence on $n\geq1$ in each triple $(\mathbf{u}_i,\mathbf{v}_i,X_i)\in \mathbb{R}^k\times\mathbb{R}^k\times D_\mathbb{R}$. Now consider the empirical measures
\begin{equation}
\mathbf{P}^n:= \frac{1}{n} \sum_{i=1}^n \delta_{\mathbf{u}_i} \otimes \delta_{\mathbf{v}_i} \otimes \delta_{X_i(\cdot)}, \qquad \text{for} \quad n \geq 1,
\end{equation}
which are random variables valued in the space of probability measures $\mathcal{P}(\mathbb{R}^k\times\mathbb{R}^k\times D_\mathbb{R})$. For $(u,v,\eta)\in\mathbb{R}^k\times\mathbb{R}^k\times D_\mathbb{R} $, we define the coordinate projections $\pi_{1,l}(u,v,\eta):=u_l$, $\pi_{2,l}(u,v,\eta):=v_l$, and $\pi_{3,t}(u,v,\eta):=\eta_t$ as well as $\pi_{t}(u,v,\eta):=(u,v,\eta_t)$ and $\pi_{(1,2)}(u,v,\eta):=(u,v)$. Writing $\mathbf{P}^n_t:=\mathbf{P}^n \circ \pi_{t}^{-1}$ and $\hat{\varpi}^n:=\mathbf{P}^n \circ \pi_{(1,2)}^{-1}$, we have $\mathbf{P}^n_0 = \varpi^n \Rightarrow \varpi$ and $\hat{\varpi}^n\rightarrow \hat{\varpi}$ by virtue of Assumption \ref{assump:paramters_emperical}. The first task is to ensure tightness of the pair $(\mathbf{P}^n,B_0)$ so that we can extract weakly convergent subsequences.

\begin{lemma}[Tightness]\label{prop:tightness}
	The sequence of random variables $(\mathbf{P}^n,B_0)$ is tight on the product space $\mathcal{P}(\mathbb{R}^k\times\mathbb{R}^k\times D_\mathbb{R}) \times C_{\mathbb{R}}$. Here $C_{\mathbb{R}}$ is the space of continuous real-valued paths on $[0,T]$ topologized by uniform convergence, and $\mathcal{P}(\mathbb{R}^k\times\mathbb{R}^k\times D_\mathbb{R})$ is topologized by weak convergence of measures as induced by the M1-topology on $D_\mathbb{R}$.
\end{lemma}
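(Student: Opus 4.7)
Since $B_0$ is a fixed Brownian motion independent of $n$, tightness on $C_{\mathbb{R}}$ is immediate for the second coordinate, and the task reduces to establishing tightness of the sequence of random probability measures $\{\mathbf{P}^n\}$ on $E:=\mathbb{R}^k\times\mathbb{R}^k\times D_{\mathbb{R}}^{M_1}$. The first step is the classical reduction on a Polish space: $\{\mathbf{P}^n\}$ is tight (in law, on $\mathcal{P}(E)$) if and only if the deterministic mean measures $\bar{P}^n:=\mathbb{E}[\mathbf{P}^n]=\frac{1}{n}\sum_{i=1}^n\mathrm{Law}(\mathbf{u}_i,\mathbf{v}_i,X_i)$ form a tight sequence in $\mathcal{P}(E)$. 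The non-trivial direction follows by Markov's inequality: given nested compacts $K_j\uparrow E$ with $\sup_n\bar{P}^n(E\setminus K_j)<\eta\, 2^{-j}/j$, the set $\{Q\in\mathcal{P}(E):Q(E\setminus K_j)\le 1/j\text{ for all }j\ge 1\}$ is itself tight in $\mathcal{P}(E)$ (hence relatively compact by Prokhorov's theorem) and carries $\mathbf{P}^n$-probability at least $1-\eta$, uniformly in $n$.

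To verify tightness of $\bar{P}^n$, the $(\mathbf{u},\mathbf{v})$-marginal is exactly $\hat{\varpi}^n$, which is tight by Assumption \ref{assump:paramters_emperical}, so it remains to deal with the path marginal. The crucial step is to decompose $X_i=Y_i-Z_i$ with
\[
Y_i(t):=X_i(0)-\int_0^t b_i(s)\,ds+\int_0^t \sigma_i(s)\,dW_i(s),\qquad Z_i(t):=F\Bigl(\int_0^t g(s)\, d\loss^n_i(s)\Bigr),
\]
where we have used $\sum_{l=1}^k v_{li}\int_0^t g(s) d\mathcal{L}^n_l(s)=\int_0^t g(s) d\loss^n_i(s)$, since $\loss^n_i=\sum_l v_{li}\mathcal{L}^n_l=\sum_j \lambda_{ji}\mathbf{1}_{t\ge \tau_j}$ is itself non-decreasing. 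Consequently $Y_i$ is continuous, while $Z_i$ is non-decreasing with $Z_i(0)=0$ ($g\ge 0$, $F$ increasing with $F(0)=0$). The rank factorisation $n\lambda_{ji}=\mathbf{u}_j\cdot\mathbf{v}_i$, combined with the uniform bound $|\mathbf{u}_i|+|\mathbf{v}_i|\le C$ of Assumption \ref{assump:paramters_emperical} and Cauchy--Schwarz, then yields
\[
\sup_n\sup_i \loss^n_i(T)\le \sup_n\sup_i \sum_{j=1}^n \lambda_{ji}=\sup_n\sup_i \frac{1}{n}\sum_{j=1}^n \mathbf{u}_j\cdot\mathbf{v}_i\le C^2,
\]
so that $\sup_n\sup_i\|Z_i\|_\infty\le \|F\|_{\mathrm{Lip}}\,g(0)\,C^2$.

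Tightness of $\{\mathrm{Law}(Y_i)\}$ in $C_{\mathbb{R}}$ is standard: it follows from tightness of the initial laws (inherited from $\varpi^n\Rightarrow\varpi$) together with the Kolmogorov moment criterion, using the uniform bounds $\epsilon\le\sigma_{\mathbf{u}_i,\mathbf{v}_i}\le\epsilon^{-1}$ from Assumption \ref{MV_assump} and the corresponding uniform boundedness of the drifts. Tightness of $\{\mathrm{Law}(Z_i)\}$ in $D_{\mathbb{R}}^{M_1}$ is essentially free: any uniformly bounded family of non-decreasing c\`adl\`ag paths is relatively compact in $D_{\mathbb{R}}^{M_1}$ by Helly's selection theorem, which is captured by Whitt's strong M1 oscillation function $w_s'$ vanishing identically on monotone paths (see \cite{whitt_2002}). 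Joint tightness of $(Y_i,Z_i)$ in $C_{\mathbb{R}}\times D_{\mathbb{R}}^{M_1}$ follows from tightness of the marginals, and continuity of the subtraction map $(y,z)\mapsto y-z\colon C_{\mathbb{R}}\times D_{\mathbb{R}}^{M_1}\to D_{\mathbb{R}}^{M_1}$ at pairs whose first coordinate is continuous (cf.~\cite{whitt_2002}) then transfers tightness to $\{\mathrm{Law}(X_i)\}$ in $D_{\mathbb{R}}^{M_1}$, and hence to the path marginal of $\bar{P}^n$.

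The main obstacle is the use of the M1 topology rather than the finer J1 topology. In J1, tightness would be elusive because the jump times of $X_i$, which coincide with the default times $\tau_j$, may accumulate arbitrarily in $[0,T]$ without matching across $n$ or across realisations. The M1 topology circumvents this by requiring only that $X_i(t_2)$ lie close to the segment $[X_i(t_1),X_i(t_3)]$; under the decomposition $X_i=Y_i-Z_i$, this condition is automatic for the monotone part $-Z_i$ and is controlled by the uniform modulus of $Y_i$, yielding the explicit bound $w_s'(X_i,\delta)\le 2\,w(Y_i,\delta)$ whose right-hand side vanishes in probability uniformly in $i$ and $n$ as $\delta\downarrow 0$. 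Whitt's M1-tightness criterion thus applies, and combining the three marginal tightness statements with the reduction from the first paragraph, joint tightness of $(\mathbf{P}^n,B_0)$ on $\mathcal{P}(E)\times C_{\mathbb{R}}$ follows.
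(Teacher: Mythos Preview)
Your proof is correct and follows essentially the same strategy as the paper: reduce tightness of the random empirical measures to uniform tightness of the individual laws $\mathrm{Law}(\mathbf{u}_i,\mathbf{v}_i,X_i)$ via the Sznitman-type mean-measure criterion, dispose of the $(\mathbf{u}_i,\mathbf{v}_i)$-component using the uniform bound in Assumption~\ref{assump:paramters_emperical}, and then decompose $X_i$ into a continuous Brownian part and a monotone contagion part to invoke the M1-oscillation criterion. The paper states the monotonicity observation and then defers the remaining estimates to \cite[Prop.~3.9]{LS18a}, whereas you spell out the $w_s'(X_i,\delta)\le 2\,w(Y_i,\delta)$ bound and the Helly/continuous-mapping argument explicitly; the only loose thread is that Assumption~\ref{MV_assump} does not literally state a uniform bound on the drifts $b_{u,v}$, which you tacitly use for the Kolmogorov estimate on $Y_i$.
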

\begin{proof}
	Since $D_\mathbb{R}$ is a Polish space with the M1-topology, it is a classical result (see e.g.~\cite[Ch.I, Prop.~2.2]{sznitman_1991}) that the sequence of (random) empirical measures $\textbf{P}^n$ is tight if, for each $\varepsilon>0$, we can find  $K_\varepsilon$ compact in $\mathbb{R}^k\times\mathbb{R}^k\times D_\mathbb{R}$,  where $D_\mathbb{R}$ comes with the M1-topology, such that, for all $n\geq1$,
	\[
	E_n( K_\varepsilon)\geq 1-\varepsilon, \quad \text{where} \quad E_n( K_\varepsilon):=\frac{1}{n}\sum_{i=1}^n \mathbb{P}\bigl( (\mathbf{u}_i,\mathbf{v}_i,X_i )  \in K_\varepsilon \bigr).
	\]
	To fulfil this, it is sufficient that, for every $\varepsilon>0$, we can find a compact set $K_\varepsilon$ such that $\mathbb{P}( (\mathbf{u}_i,\mathbf{v}_i,X_i )  \in K_\varepsilon )\geq 1-\varepsilon$ for each $i=1,\ldots,n$ uniformly in $n\geq 1$. By Assumption \ref{assump:paramters_emperical}, we have $|\mathbf{u}_i|+|\mathbf{v}_i|\leq C$, for some $C>0$, uniformly in $i=1,\ldots,n$ and $n\geq 1$. Hence we can take $K_\varepsilon $ to be of the form  $K_\varepsilon  = \bar{B}_C \times S_\varepsilon $, where $\bar{B}_C$ is the closed ball of radius $C$ in $\mathbb{R}^{2k}$, and $S_\varepsilon$ is compact in $(D_\mathbb{R},\textrm{M1})$. Consequently, writing $X_i^n$ for the $i$'th particle in the size-$n$ particle system, it suffices to show that each sequence $(X_i^n)_{n\geq1}$ is tight with estimates that are uniform in $i=1,\ldots, n$ and $n\geq 1$. To this end, the first crucial observation is that
	\[
	t\mapsto F\Bigl( \sum_{l=1}^{k} v_{il} \int_0^t g(s) d\mathcal{L}^n_l(s) \Bigr)
	\]
	is increasing. Therefore, exploiting the special nature of the M1-topology, the uniform tightness estimates can be established by retracing the steps of \cite[Prop.~3.9]{LS18a}.
\end{proof}

We now turn to the problem of identifying the limit points of $(\mathbf{P}^n,B_0)$ as $n\rightarrow\infty$, where convergent subsequences are ensured by Prokhorov's theorem in light of the previous lemma. First of all, we define the mapping
\begin{equation}\label{loss_map}
(\mathcal{L}_l(\mu))(t):=\bigl\langle \mu , \pi_{1,l}(\cdot)  \mathbf{1}_{(\infty,0]}\bigl(\inf_{s\leq t}\pi_{3,t}(\cdot)\bigr)\bigr\rangle,
\end{equation}
for $\mu \in \mathcal{P}(\mathbb{R}^k\times\mathbb{R}^k\times D_\mathbb{R})$, where the rationale is of course that 
\begin{equation}\label{loss_empirical}
(\mathcal{L}_l(\mathbf{P}^n))(t) =\frac{1}{n} \sum_{j=1}^{n} u_{jl} \mathbf{1}_{t\geq\tau_j} = \mathcal{L}^n_l(t).
\end{equation}
Using the mappings $\mu\mapsto \mathcal{L}_l(\mu)$, we in turn define
\begin{equation}\label{mathcal_M_map}
(\mathcal{M}(u, v,\eta , \mu))(t) := \eta(t) - \eta(0) - \int_0^t b_{u,v}(s)ds - F\Bigl( \sum_{l=1}^k v_l \int_0^t g(s)d (\mathcal{L}_l(\mu))(s)  \Bigr),
\end{equation}
and we then intend to perform a martingale argument to identify the limit points of $(\mathbf{P}^n,B_0)$ based on mappings of the form
\begin{equation}\label{eq:cont_mapping}
(\mu , w) \mapsto \bigl \langle  \mu , \Psi\bigl(  \mathcal{M}(\cdot  , \mu) , w \bigr)        \bigr \rangle
\end{equation}
for $(\mu,w) \in \mathcal{P}(\mathbb{R}^k\times\mathbb{R}^k\times D_\mathbb{R}) \times C_\mathbb{R}$, for suitable functions $\Psi : D_\mathbb{R} \times C_\mathbb{R} \rightarrow \mathbb{R}$. Indeed, we can observe that
\[
\bigl \langle  \mathbf{P}^n , \Psi\bigl(  \mathcal{M}(\cdot  , \mathbf{P}^n) , B_0 \bigr)        \bigr \rangle = \frac{1}{n} \sum_{i=1}^n \Psi \Bigl( \int_0^t\sigma_{\mathbf{u}_i,\mathbf{v}_i}(s)d\bigl( \rho B_0(s) +\sqrt{1-\rho^2}B_i(s)   \bigr)    , B_0   \Bigr),
\]
where the right-hand side is a nice average of the function $\Psi$ applied to square integrable martingales on $[0,T]$. This will essentially allow us to transfer suitable martingale properties to the limit as $n \rightarrow \infty$, which is the machinery behind the next observations.

Proceeding as in \cite[Lemma 3.11]{LS18a} we can show that \eqref{eq:cont_mapping} and similar mappings are continuous for suitable functions $\Psi$ (the specific mappings are defined immediately before \cite[Lemma 3.11]{LS18a}). Fix a limit point $(\mathbf{P},B_0)$ of $(\mathbf{P}^n,B_0)$, realised along a convergent subsequence (due to Lemma \ref{prop:tightness} above).  Write $\varpi=\mathrm{Law}(\mathbf{u},\mathbf{v},X(0))$, where we recall that $\varpi=\mathbf{P}_0$ is the limit of $\varpi^n=\mathbf{P}^n_0$, as ensured by Assumption \eqref{assump:paramters_emperical}. Retracing the steps of \cite[Prop.~3.12]{LS18a} and \cite[Proof of Thm.~3.2, p.~26]{LS18a}, based on the aforementioned continuity results, we can show that there is a probability space $(\bar{\Omega},\bar{\mathcal{F}},\bar{\mathbb{P}})$ which supports our limiting random variables $(\mathbf{u},\mathbf{v}, \mathbf{P},B^0)$ and also carries a c\`adl\`ag process $X$ as well as a Brownian motion $B \perp B_0$, for which  $(B,B_0)$ is independent of $(\mathbf{u},\mathbf{v},X(0))$, such that
\begin{equation*}
(\mathcal{M}(\mathbf{u},\mathbf{v},X,\mathbf{P}))(t)= \int_0^t\sigma_{\mathbf{u},\mathbf{v}}(s)d\bigl( \rho B_0(s) +\sqrt{1-\rho^2}B(s)   \bigr)
\end{equation*}
holds for all $t\in[0,T]$, $\bar{\mathbb{P}}$-almost surely. In other words, on the background space $(\bar{\Omega},\bar{\mathcal{F}},\bar{\mathbb{P}})$, we have
\begin{align}\label{limit_martingale}
X(t) = X(0) &+ \int_0^t b_{\mathbf{u},\mathbf{v}}(s)ds+  \int_0^t\sigma_{\mathbf{u},\mathbf{v}}(s)d\bigl( \rho B_0(s) +\sqrt{1-\rho^2}B_i(s)   \bigr) \nonumber\\
&- F\Bigl( \sum_{l=1}^k \mathbf{v}_l \int_0^t g(s)d (\mathcal{L}_l(\mathbf{P}))(s)  \Bigr).
\end{align}
To avoid clouding the presentation, let us (for now) assume that the limiting random probability measure $\mathbf{P}$ is known to be $B_0$ measurable. Intuitively, this is what one expects, as the sequence $\mathbf{P}^n$ is subject to the common noise $B_0$, which is felt by all the particles, and hence should stay in the limit; whereas the effect of the idiosyncratic noise from the independent Brownian motions $B_1,\ldots,B_n$ will be averaged way in the limit. The situation where $\mathbf{P}$ is not known to be $B_0$-measurable is dealt with separately in Remark \ref{rem:relaxed_soln} below. Once we have that $\mathbf{P}$ is $B_0$-measurable, retracing the proof of  \cite[Thm.~3.2]{LS18a}, as we did above, not only gives \eqref{limit_martingale}, but also shows that $\mathbf{P}=\mathrm{Law}(\mathbf{u},\mathbf{v},X \, | \, B_0)$. Therefore, letting $\bar{\mathbb{E}}$ denote the expectation operator corresponding to $\bar{\mathbb{P}}$, we have
\begin{align}\label{limit_loss_computation}
(\mathcal{L}_l(\mathbf{P}))(t) &= \bigl\langle \mathbf{P} , \pi_{1,l}(\cdot)  \mathbf{1}_{(\infty,0]}\bigl(\inf_{s\leq t}\pi_{3,t}(\cdot)\bigr)\bigr\rangle \nonumber
\\ &= \bar{\mathbb{E}} \bigl[  \mathbf{u}_l \mathbf{1}_{(-\infty, 0]}\bigl(\inf_{s\leq t}X_s \bigr)  \, | \, B_0 \bigr]
\nonumber \\
&= \bar{\mathbb{E}} \bigl[  \mathbf{u}_l \bar{\mathbb{P}} \bigl[  \inf_{s\leq t}X_s \leq 0 \mid  B_0,\mathbf{u},\mathbf{v},X(0) \bigr] \, \bigr| \, B_0 \bigr].
\end{align}
Since $(\mathbf{u},\mathbf{v},X(0))$ is independent of $(B,B_0)$, using the equation for $X$ in \eqref{limit_martingale} and the definition of $\mathcal{L}$ in  \eqref{loss_map}, we can conclude from \eqref{limit_loss_computation} that
\begin{align}\label{mean_field_limit_loss}
(\mathcal{L}_l(\mathbf{P}))(t) & = \int_{\mathbb{R}_+\times \mathbb{R}^k \times \mathbb{R}^k}  u_l \mathbb{P}( t\geq \tau^x_{u,v} \mid B_0) d\varpi(x,u,v), \qquad l=1,\ldots,k,
\end{align}
where $\varpi=\mathbf{P}_0$ is the limit of $\varpi^n$ given by \eqref{empirical_measures_parameters}, and where we have defined
\begin{align}\label{limit_X_xuv}
\tau^x_{u,v} &:= \inf \{ t\geq 0 : X^x_{u,v}(t) \leq 0 \}, \;\; \text{and} \nonumber \\
X^{x}_{u,v}(t) &:= x + \int_0^t \!b_{u,v}(s)ds + \int_0^t\!\sigma_{u,v}(s) d( \rho B_0(s) +\sqrt{1-\rho^2}B_i(s)   \bigr) \nonumber \\
&\qquad\qquad - F\Bigl( \sum_{l=1}^k v_l \int_0^t \!g(s)d(\mathcal{L}_l(\mathbf{P}))(s)  \Bigr),
\end{align}
for all realisations $(u,v)$ of $(\mathbf{u},\mathbf{v})$. Consequently, we have recovered the desired mean field limit \eqref{CMV}, since the limit point $(\mathbf{P},B^0)$ of $(\mathbf{P}^n,B^0)$ satisfies the conditional McKean--Vlasov problem \eqref{mean_field_limit_loss}--\eqref{limit_X_xuv}. Furthermore, as in \cite[Prop.~3.6]{LS18a} and the proof of \cite[Prop.~3.9]{LS18a}, the above tightness and continuity results, along with the expression \eqref{loss_empirical}, give that (in the M1 topology on $D_\mathbb{R}$), the loss processes $\mathcal{L}^n_l=\mathcal{L}_l(\mathbf{P}^n)$ converge to the desired limiting loss processes $\mathcal{L}_l=\mathcal{L}_l(\mathbf{P})$ satisfying the conditional McKean--Vlasov problem \eqref{mean_field_limit_loss}--\eqref{limit_X_xuv}.

\begin{remark}\label{rem:relaxed_soln}
	Without assuming $B_0$-measurability, we need to work with what is defined as a `relaxed' solution to \eqref{CMV} in \cite[Sect.~3]{LS18a}. Specifically, the arguments from \cite{LS18a} only gives that $\mathbf{P}=\mathrm{Law}(\mathbf{u},\mathbf{v},X\,|\,B_0,\mathbf{P} )$ with $(B_0,\mathbf{P}) \perp B$ and $(B,(B_0,\mathbf{P}))\perp X(0)$, as opposed to $\mathbf{P}=\mathrm{Law}(\mathbf{u},\mathbf{v},X\,|\,B_0)$ which we relied on above.  That is, $\mathbf{P}$ fulfils the first criteria for being the conditional law of $(\mathbf{u},\mathbf{v},X)$ given $B_0$, but it is only known to be $(B_0,\mathbf{P} )$-measurable, and hence it may not be the true conditional law given $B_0$. Yet, it behaves in almost the same way, since it is also independent of $B$, which is precisely what we expected to happen in the limit, as the idiosyncratic noise is averaged away and the common noise $B_0$ is independent of $B$. Repeating \eqref{limit_loss_computation} with $(\mathbf{P},B_0)$ in place of $B_0$, and using that $(\mathbf{u},\mathbf{v},X(0))$ is independent of $(B,(B_0,\mathbf{P}))$, we instead get
	\[
	(\mathcal{L}_l(\mathbf{P}))(t) = \int_{\mathbb{R}_+\times \mathbb{R}^k \times \mathbb{R}^k}  u_l \mathbb{P}( t\geq \tau^x_{u,v} \mid B_0, \mathbf{P}) d\varpi(x,u,v),
	\]
	so there is potentially some extra randomness that has survived the limiting procedure. In other words, we have mildly relaxed the criterion that the loss processes should strictly be conditional on the common noise $B_0$. For this reason, the limit thus obtained is called a `relaxed' solution to \eqref{CMV}. Nevertheless, in cases where we have a pathwise uniqueness argument for \eqref{CMV} such as in Section \ref{Proof_thm_uniq_common} (the proof of Theorem \ref{thm_weak_feedback}), we can apply a Yamada-Watanabe argument as in \cite[Thm.~2.3]{LS18b} to ensure that $\mathbf{P}$ really is $B_0$ measurable and that we are hence only conditioning on the common noise $B_0$.
\end{remark}

\bibliographystyle{plainnat}
\bibliography{bibtex2}

\begin{thebibliography}{80}
\providecommand{\natexlab}[1]{#1}
\providecommand{\url}[1]{\texttt{#1}}
\expandafter\ifx\csname urlstyle\endcsname\relax
  \providecommand{\doi}[1]{doi: #1}\else
  \providecommand{\doi}{doi: \begingroup \urlstyle{rm}\Url}\fi

\bibitem[Adrian and Brunnermeier(2016)]{adrian2011covar}
Tobias Adrian and Markus~K. Brunnermeier.
\newblock {CoVaR}.
\newblock \emph{American Economic Review}, 106\penalty0 (7):\penalty0
  1705--1741, 2016.

\bibitem[Amini and Minca(2016)]{amini_minca}
Hamed Amini and Andreea Minca.
\newblock Inhomogeneous financial networks and contagious links.
\newblock \emph{Operations Research}, 64\penalty0 (5):\penalty0 1053--1176,
  2016.

\bibitem[Amini et~al.(2012)Amini, Cont, and Minca]{ACM12}
Hamed Amini, Rama Cont, and Andreea Minca.
\newblock Stress testing the resilience of financial networks.
\newblock \emph{International Journal of Theoretical and Applied Finance},
  15\penalty0 (1):\penalty0 1250006, 2012.

\bibitem[Amini et~al.(2016)Amini, Cont, and Minca]{ACM10}
Hamed Amini, Rama Cont, and Andreea Minca.
\newblock Resilience to contagion in financial networks.
\newblock \emph{Mathematical Finance}, 24\penalty0 (2):\penalty0 329--365,
  2016.

\bibitem[Anand et~al.(2015)Anand, Craig, and Von~Peter]{ACP14}
Kartik Anand, Ben Craig, and Goetz Von~Peter.
\newblock Filling in the blanks: Network structure and interbank contagion.
\newblock \emph{Quantitative Finance}, 15\penalty0 (4):\penalty0 625--636,
  2015.

\bibitem[Banerjee and Feinstein(2019)]{BF18comonotonic}
Tathagata Banerjee and Zachary Feinstein.
\newblock Pricing of debt and equity in a financial network with comonotonic
  endowments.
\newblock 2019.
\newblock Working paper.

\bibitem[Banerjee et~al.(2018)Banerjee, Bernstein, and Feinstein]{BBF18}
Tathagata Banerjee, Alex Bernstein, and Zachary Feinstein.
\newblock Dynamic clearing and contagion in financial networks.
\newblock 2018.
\newblock Working paper.

\bibitem[Barucca et~al.(2016)Barucca, Bardoscia, Caccioli, D'Errico, Visentin,
  Battiston, and Caldarelli]{barucca2016valuation}
Paolo Barucca, Marco Bardoscia, Fabio Caccioli, Marco D'Errico, Gabriele
  Visentin, Stefano Battiston, and Guido Caldarelli.
\newblock Network valuation in financial systems.
\newblock 2016.
\newblock Working paper.

\bibitem[Battiston et~al.()Battiston, Gatti, Gallegati, Greenwald, and
  Stiglitz]{Battiston_2012}
S.~Battiston, D.~Delli Gatti, M.~Gallegati, B.C. Greenwald, and J.E. Stiglitz.
\newblock Liaisons dangereuses: Increasing connectivity, risk sharing, and
  systemic risk.
\newblock \emph{Journal of Economic Dynamics and Control}, 36\penalty0 (8).

\bibitem[Benazzoli et~al.()Benazzoli, Campi, and
  Persio]{Benazzoli_campi_persio}
Chiara Benazzoli, Luciano Campi, and Luca~Di Persio.
\newblock Mean-field games with controlled jump- diffusion dynamics: existence
  results and an interbank illiquid market model.
\newblock Working paper.

\bibitem[Biagini et~al.(2019{\natexlab{a}})Biagini, Fouque, Frittelli, and
  Meyer-Brandis]{biagini_fouque}
Francesca Biagini, Jean-Pierre Fouque, Marco Frittelli, and Thilo
  Meyer-Brandis.
\newblock A unified approach to systemic risk measures via acceptance sets.
\newblock \emph{Mathematical Finance}, 29\penalty0 (1):\penalty0 329--367,
  2019{\natexlab{a}}.

\bibitem[Biagini et~al.(2019{\natexlab{b}})Biagini, Fouque, Frittelli, and
  Meyer-Brandis]{fouque2015systemic}
Francesca Biagini, Jean-Pierre Fouque, Marco Frittelli, and Thilo
  Meyer-Brandis.
\newblock A unified approach to systemic risk measures via acceptance sets.
\newblock \emph{Mathematical Finance}, 29\penalty0 (1):\penalty0 329--367,
  2019{\natexlab{b}}.

\bibitem[Bo and Capponi(2018)]{bo_capponi}
Lijun Bo and Agostino Capponi.
\newblock Systemic risk in interbanking networks.
\newblock \emph{SIAM Journal on Financial Mathematics}, 6\penalty0
  (1):\penalty0 386--424, 2018.

\bibitem[Borgatti and Everett(2000)]{BE00}
Stephen~P. Borgatti and Martin~G. Everett.
\newblock Models of core/periphery structures.
\newblock \emph{Social Networks}, 21\penalty0 (4):\penalty0 375--395, 2000.

\bibitem[Borovykh et~al.(2018)Borovykh, Pascucci, and Rovere]{pascucci_sysrisk}
Anastasia Borovykh, Andrea Pascucci, and Stefano~La Rovere.
\newblock Systemic risk in a mean-field model of interbank lending with
  self-exciting shocks.
\newblock \emph{IISE Transactions}, 50\penalty0 (9):\penalty0 806--819, 2018.

\bibitem[Braouezec and Wagalath(2019)]{BW19}
Yann Braouezec and Lakshithe Wagalath.
\newblock Strategic fire-sales and price-mediated contagion in the banking
  system.
\newblock \emph{European Journal of Operational Research}, 274\penalty0
  (3):\penalty0 1180--1197, 2019.

\bibitem[Campi and Fischer(2018)]{campi_mfg_1}
Luciano Campi and Markus Fischer.
\newblock N-player games and mean-field games with absorption.
\newblock \emph{The Annals of Applied Probability}, 28\penalty0 (4):\penalty0
  2188--2242, 2018.

\bibitem[Campi et~al.()Campi, Ghio, and Livieri]{campi_mfg_2}
Luciano Campi, Maddalena Ghio, and Giulia Livieri.
\newblock N-player games and mean-field games with smooth dependence on past
  absorptions.
\newblock Working paper.

\bibitem[Capponi and Chen(2015)]{CC15}
Agostino Capponi and Peng-Chu Chen.
\newblock Systemic risk mitigation in financial networks.
\newblock \emph{Journal of Economic Dynamics and Control}, 58:\penalty0
  152--166, 2015.

\bibitem[Capponi et~al.(2019)Capponi, Sun, and Yao]{capponi_clusters}
Agostino Capponi, Xu~Sun, and David Yao.
\newblock A dynamic network model of interbank lending {---} systemic risk and
  liquidity provisioning.
\newblock 2019.
\newblock To appear in Mathematics of Operations Research.

\bibitem[Carmona et~al.(2015)Carmona, Fouque, and Sun]{fouque2015meanfield}
Ren\'{e} Carmona, Jean-Pierre Fouque, and Li-Hsien Sun.
\newblock Mean field games and systemic risk.
\newblock \emph{Communications in Mathematical Sciences}, 13\penalty0
  (4):\penalty0 911--933, 2015.

\bibitem[Carmona et~al.(2018)Carmona, Fouque, Mousavi, and
  Sun]{carmona2016delay}
Ren{\'e} Carmona, Jean-Pierre Fouque, Seyyed~Mostafa Mousavi, and Li-Hsien Sun.
\newblock Systemic risk and stochastic games with delay.
\newblock \emph{Journal of Optimization Theory and Applications}, Mar 2018.
\newblock ISSN 1573-2878.
\newblock \doi{10.1007/s10957-018-1267-8}.
\newblock URL \url{https://doi.org/10.1007/s10957-018-1267-8}.

\bibitem[Chen et~al.(2013)Chen, Iyengar, and Moallemi]{chen2013axiomatic}
Chen Chen, Garud Iyengar, and Ciamac~C. Moallemi.
\newblock An axiomatic approach to systemic risk.
\newblock \emph{Management Science}, 59\penalty0 (6):\penalty0 1373--1388,
  2013.

\bibitem[Chong and Kluppelberg()]{chong_kluppelberg}
Carsten Chong and Claudia Kluppelberg.
\newblock Partial mean field limits in heterogeneous networks.
\newblock \emph{To appear in Stochastic Processes and their Applications},
  129\penalty0 (12):\penalty0 4998--5036.

\bibitem[Cifuentes et~al.(2005)Cifuentes, Shin, and Ferrucci]{CFS05}
Rodrigo Cifuentes, Hyun~Song Shin, and Gianluigi Ferrucci.
\newblock Liquidity risk and contagion.
\newblock \emph{Journal of the European Economic Association}, 3\penalty0
  (2-3):\penalty0 556--566, 2005.

\bibitem[Cont and Schaanning(2019)]{cont_schaanning}
Rama Cont and Eric Schaanning.
\newblock Monitoring indirect contagion.
\newblock \emph{Journal of Banking and Finance}, 104:\penalty0 85--102, 2019.

\bibitem[Craig and Von~Peter(2014)]{CP14}
Ben Craig and Goetz Von~Peter.
\newblock Interbank tiering and money center banks.
\newblock \emph{Journal of Financial Intermediation}, 23\penalty0 (3):\penalty0
  322--347, 2014.

\bibitem[Cucuringu et~al.(2016)Cucuringu, Rombach, Lee, and
  Porter]{porter_2016}
Mihar Cucuringu, Puck Rombach, Sang~Hoon Lee, and Maons~A. Porter.
\newblock Detection of core-periphery structure in networks using spectral
  methods and geodesic paths.
\newblock \emph{European Journal of Applied Mathematics}, 27\penalty0
  (6):\penalty0 846--887, 2016.

\bibitem[Delarue et~al.(2019)Delarue, Nadtochiy, and Shkolnikov]{DNS19}
Fran\c{c}ois Delarue, Sergey Nadtochiy, and Mykhaylo Shkolnikov.
\newblock Global solutions to the supercooled {Stefan} problem with blow-ups:
  regularity and uniqueness.
\newblock 2019.
\newblock Working paper.

\bibitem[Detering et~al.(2019)Detering, Meyer-Brandis, Panagiotou, and
  Ritter]{detering2019inhomogeneous}
Nils Detering, Thilo Meyer-Brandis, Konstantinos Panagiotou, and Daniel Ritter.
\newblock Managing default contagion in inhomogeneous financial networks.
\newblock \emph{SIAM Journal on Financial Mathematics}, 10\penalty0
  (2):\penalty0 578--614, 2019.

\bibitem[Eisenberg and Noe(2001)]{EN01}
Larry Eisenberg and Thomas~H. Noe.
\newblock Systemic risk in financial systems.
\newblock \emph{Management Science}, 47\penalty0 (2):\penalty0 236--249, 2001.

\bibitem[Elsinger(2009)]{E07}
Helmut Elsinger.
\newblock Financial networks, cross holdings, and limited liability.
\newblock \emph{{\"{O}sterrei}chische Nationalbank (Austrian Central Bank)},
  156, 2009.

\bibitem[Fang et~al.(2017)Fang, Spiliopoulos, and Sun]{spilio_hetero}
Fei Fang, Konstantinos Spiliopoulos, and Yiwei Sun.
\newblock The effect of heterogeneity on flocking behavior and systemic risk.
\newblock \emph{Statistics and Risk Modelling}, 34\penalty0 (3-4):\penalty0
  141--155, 2017.

\bibitem[Feinstein(2019)]{feinstein2017currency}
Zachary Feinstein.
\newblock Obligations with physical delivery in a multi-layered financial
  network.
\newblock \emph{SIAM Journal on Financial Mathematics}, 10\penalty0
  (4):\penalty0 877--906, 2019.

\bibitem[Feinstein(2020)]{feinstein2019leverage}
Zachary Feinstein.
\newblock Capital regulation under price impacts and dynamic financial
  contagion.
\newblock \emph{European Journal of Operational Research}, 281\penalty0
  (2):\penalty0 449--463, 2020.

\bibitem[Feinstein et~al.(2017)Feinstein, Rudloff, and
  Weber]{feinstein2014measures}
Zachary Feinstein, Birgit Rudloff, and Stefan Weber.
\newblock Measures of systemic risk.
\newblock \emph{SIAM Journal on Financial Mathematics}, 8\penalty0
  (1):\penalty0 672--708, 2017.

\bibitem[Feinstein et~al.(2018)Feinstein, Pang, Rudloff, Schaanning, Sturm, and
  Wildman]{feinstein2017sensitivity}
Zachary Feinstein, Weijie Pang, Birgit Rudloff, Eric Schaanning, Stephan Sturm,
  and Mackenzie Wildman.
\newblock Sensitivity of the {E}isenberg--{N}oe clearing vector to individual
  interbank liabilities.
\newblock \emph{SIAM Journal on Financial Mathematics}, 9\penalty0
  (4):\penalty0 1286--1325, 2018.

\bibitem[Ferrara et~al.(2016)Ferrara, Langfield, Liu, and Ota]{ferrara16}
Gerardo Ferrara, Sam Langfield, Zijun Liu, and Tomohiro Ota.
\newblock Systemic illiquidity in the interbank network.
\newblock Staff Working Paper 586, Bank of England, 2016.

\bibitem[Fouque and Ichiba(2013)]{fouque2013stability}
Jean-Pierre Fouque and Tomoyuki Ichiba.
\newblock Stability in a model of interbank lending.
\newblock \emph{SIAM Journal on Financial Mathematics}, 4:\penalty0 784--803,
  2013.

\bibitem[Fouque and Sun(2013)]{fouque2013illustrated}
Jean-Pierre Fouque and Li-Hsien Sun.
\newblock Systemic risk illustrated.
\newblock In \emph{Handbook on Systemic Risk}, pages 444--452. Cambridge
  University Press, 2013.

\bibitem[Fouque and Zhang(2018)]{fouque_zhang}
Jean-Pierre Fouque and Zhaoyu Zhang.
\newblock Mean field game with delay: a toy model.
\newblock \emph{Risks}, 6\penalty0 (90):\penalty0 1--17, 2018.

\bibitem[Fricke and Lux(2015)]{FL15}
Daniel Fricke and Thomas Lux.
\newblock Core-periphery structure in the overnight money market: evidence from
  the e-mid trading platform.
\newblock \emph{Computational Economics}, 45\penalty0 (3):\penalty0 359--395,
  2015.

\bibitem[Gai and Kapadia(2010)]{GK10}
Prasanna Gai and Sujit Kapadia.
\newblock Contagion in financial networks.
\newblock \emph{Proceedings of the Royal Society A: Mathematical, Physical and
  Engineering Sciences}, 466\penalty0 (2120):\penalty0 2401--2423, 2010.

\bibitem[Gandy and Veraart(2016)]{GV16}
Axel Gandy and Luitgard~A.M. Veraart.
\newblock A {B}ayesian methodology for systemic risk assessment in financial
  networks.
\newblock \emph{Management Science}, 63\penalty0 (12):\penalty0 4428--4446,
  2016.
\newblock \doi{{DOI: 10.1287/mnsc.2016.2546}}.

\bibitem[Garnier et~al.(2013)Garnier, Papanicolaou, and Yang]{garnier2013b}
Josselin Garnier, George Papanicolaou, and Tzu-Wei Yang.
\newblock Large deviations for a mean field model of systemic risk.
\newblock \emph{SIAM Journal on Financial Mathematics}, 4:\penalty0 151--184,
  2013.

\bibitem[Garnier et~al.(2017)Garnier, Papanicolaou, and Yang]{garnier2017}
Josselin Garnier, George Papanicolaou, and Tzu-Wei Yang.
\newblock A risk analysis for a system stabilized by a central agent.
\newblock \emph{Risk and Decision Analysis}, 6\penalty0 (2):\penalty0 97--120,
  2017.

\bibitem[Giesecke et~al.(2013)Giesecke, Spiliopoulos, and
  Sowers]{giesecke_2013}
Kay Giesecke, Konstantinos Spiliopoulos, and Richard Sowers.
\newblock Default clustering in large portfolios: Typical events.
\newblock \emph{The Annals of Applied Probability}, 23\penalty0 (1):\penalty0
  348--385, 2013.

\bibitem[Giesecke et~al.(2015)Giesecke, Spiliopoulos, Sowers, and
  Sirignano]{giesecke_2015}
Kay Giesecke, Konstantinos Spiliopoulos, Richard Sowers, and Justin Sirignano.
\newblock Large portfolio asymptotics for loss from default.
\newblock \emph{Mathematical Finance}, 25\penalty0 (1):\penalty0 77--114, 2015.

\bibitem[Glasserman and Young(2015)]{GY14}
Paul Glasserman and H.~Peyton Young.
\newblock How likely is contagion in financial networks?
\newblock \emph{Journal of Banking and Finance}, 50:\penalty0 383--399, 2015.

\bibitem[Hambly and S{\o}jmark(2019)]{HS18}
Ben Hambly and Andreas S{\o}jmark.
\newblock An {SPDE} model for systemic risk with endogenous contagion.
\newblock \emph{Finance and Stochastics}, 23\penalty0 (3):\penalty0 535--594,
  2019.

\bibitem[Hambly et~al.(2019)Hambly, Ledger, and S{\o}jmark]{HLS18}
Ben Hambly, Sean Ledger, and Andreas S{\o}jmark.
\newblock A {McKean--Vlasov} equation with positive feedback and blow-ups.
\newblock \emph{The Annals of Applied Probability}, 29\penalty0 (4):\penalty0
  2338--2373, 2019.

\bibitem[Huang and Jaimungal(2017)]{huang_jaimungal}
Xuancheng Huang and Sebastian Jaimungal.
\newblock Robust stochastic games and systemic risk.
\newblock 2017.
\newblock Working paper.

\bibitem[Hurd(2016)]{hurd2016}
T.R. Hurd.
\newblock \emph{Contagion! {S}ystemic Risk in Financial Networks}.
\newblock SpringerBriefs in Quantitative Finance. Springer International
  Publishing, 2016.
\newblock ISBN 9783319339306.

\bibitem[Ichiba et~al.(2019)Ichiba, Ludkovski, and Sarantsev]{ichiba_2018}
Tomoyuki Ichiba, Michael Ludkovski, and Andrey Sarantsev.
\newblock Dynamic contagion in a banking system with births and defaults.
\newblock \emph{Annals of Finance}, 15\penalty0 (4):\penalty0 489--538, 2019.

\bibitem[Itkin and Lipton(2017)]{Lipton2015}
Andrey Itkin and Alexander Lipton.
\newblock Structural default model with mutual obligations.
\newblock \emph{Review of Derivatives Research}, 20\penalty0 (1):\penalty0
  15--46, 2017.

\bibitem[Kaushansky and Reisinger(2019)]{KR18}
Vadim Kaushansky and Christoph Reisinger.
\newblock Simulation of particle systems interacting through hitting times.
\newblock \emph{Discrete and Continuous Dynamical Systems -- Series B},
  24\penalty0 (10):\penalty0 5481--5502, 2019.

\bibitem[Kaushansky et~al.(2018{\natexlab{a}})Kaushansky, Lipton, and
  Reisinger]{KLR18a}
Vadim Kaushansky, Alexander Lipton, and Christoph Reisinger.
\newblock Numerical analysis of an extended structural default model with
  mutual liabilities and jump risk.
\newblock \emph{Journal of Computational Science}, 24:\penalty0 218--231,
  2018{\natexlab{a}}.

\bibitem[Kaushansky et~al.(2018{\natexlab{b}})Kaushansky, Lipton, and
  Reisinger]{KLR18b}
Vadim Kaushansky, Alexander Lipton, and Christoph Reisinger.
\newblock Transition probability of brownian motion in the octant and its
  application to default modelling.
\newblock \emph{Applied Mathematical Finance}, 2018{\natexlab{b}}.

\bibitem[Kaushansky et~al.(2018{\natexlab{c}})Kaushansky, Lipton, and
  Reisinger]{KLR18c}
Vadim Kaushansky, Alexander Lipton, and Christoph Reisinger.
\newblock Semi-analytical solution of a {McKean-Vlasov} equation with feedback
  through hitting a boundary.
\newblock 2018{\natexlab{c}}.
\newblock Working paper.

\bibitem[Kromer et~al.(2016)Kromer, Overbeck, and Zilch]{kromer2013systemic}
Eduard Kromer, Ludger Overbeck, and Katrin Zilch.
\newblock Systemic risk measures on general probability spaces.
\newblock \emph{Mathematical Methods of Operations Research}, 84\penalty0
  (2):\penalty0 323--357, 2016.

\bibitem[Kusnetsov and Veraart(2018)]{KV16}
Michael Kusnetsov and Luitgard~A.M. Veraart.
\newblock Interbank clearing in financial networks with multiple maturities.
\newblock 2018.
\newblock Working paper.

\bibitem[Ledger and S{\o}jmark(2018{\natexlab{a}})]{LS18a}
Sean Ledger and Andreas S{\o}jmark.
\newblock At the mercy of the common noise: Blow-ups in a conditional
  {McKean--Vlasov} problem.
\newblock 2018{\natexlab{a}}.
\newblock Working paper.

\bibitem[Ledger and S{\o}jmark(2018{\natexlab{b}})]{LS18b}
Sean Ledger and Andreas S{\o}jmark.
\newblock Uniqueness for contagious {McKean--Vlasov} systems in the weak
  feedback regime.
\newblock 2018{\natexlab{b}}.
\newblock Working paper.

\bibitem[Lipton(2016)]{Lipton2016}
Alexander Lipton.
\newblock Modern monetary circuit theory, stability of interconnected banking
  network, and balance sheet optimization for individual banks.
\newblock \emph{International Journal of Theoretical and Applied Finance},
  19\penalty0 (6):\penalty0 1--57, 2016.

\bibitem[Liu and Staum(2010)]{LS10}
Ming Liu and Jeremy Staum.
\newblock Sensitivity analysis of the {E}isenberg-{N}oe model of contagion.
\newblock \emph{Operations Research Letters}, 35\penalty0 (5):\penalty0
  489--491, 2010.

\bibitem[Maheshwari and Sarantsev(2017)]{sarantsev_maheshwari}
Aditya Maheshwari and Andrey Sarantsev.
\newblock A model of interbank flows, borrowing, and investing.
\newblock 2017.
\newblock Working paper.

\bibitem[Mastromatteo et~al.(2012)Mastromatteo, Zarinelli, and Marsili]{MZM12}
Iacopo Mastromatteo, Elia Zarinelli, and Matteo Marsili.
\newblock Reconstruction of financial networks for robust estimation of
  systemic risk.
\newblock \emph{Journal of Statistical Mechanics: Theory and Experiment},
  2012\penalty0 (3):\penalty0 P03011, 2012.

\bibitem[Nadtochiy and Shkolnikov(2019{\natexlab{a}})]{NS17}
Sergey Nadtochiy and Mykhaylo Shkolnikov.
\newblock Particle systems with singular interaction through hitting times:
  Application in systemic risk modeling.
\newblock \emph{The Annals of Applied Probability}, 29\penalty0 (1):\penalty0
  89--129, 2019{\natexlab{a}}.

\bibitem[Nadtochiy and Shkolnikov(2019{\natexlab{b}})]{NS18}
Sergey Nadtochiy and Mykhaylo Shkolnikov.
\newblock Mean field systems on networks, with singular interaction through
  hitting times.
\newblock 2019{\natexlab{b}}.
\newblock To appear in The Annals of Probability.

\bibitem[Rogers and Veraart(2013)]{RV13}
Leonard~C.G. Rogers and Luitgard~A.M. Veraart.
\newblock Failure and rescue in an interbank network.
\newblock \emph{Management Science}, 59\penalty0 (4):\penalty0 882--898, 2013.

\bibitem[Shkolnikov and Ichiba(2013)]{misha_ichiba}
Mykhaylo Shkolnikov and Tomoyuki Ichiba.
\newblock Large deviations for interacting {Bessel-like} processes and
  applications to systemic risk.
\newblock 2013.
\newblock Working paper.

\bibitem[S{\o}jmark(2019)]{sojmark_2019}
Andreas S{\o}jmark.
\newblock Contagious {McKean--Vlasov} systems with positive feedback.
\newblock 2019.
\newblock DPhil Thesis, University of Oxford.

\bibitem[Sonin and Sonin(2017)]{sonin2017}
Isaac~M.\ Sonin and Konstantin Sonin.
\newblock Banks as tanks: A continuous-time model of financial clearing.
\newblock 2017.
\newblock Working paper.

\bibitem[Spiliopoulos and Yang(2019)]{spiliopoulos_2019}
Konstantinos Spiliopoulos and Jia Yang.
\newblock Network effects in default clustering for large systems.
\newblock 2019.
\newblock Working paper.

\bibitem[Stroock(2011)]{stroock_integration}
Daniel~W. Stroock.
\newblock \emph{Essentials of integration theory for analysis}, volume 262 of
  \emph{Graduate Texts in Mathematics}.
\newblock Springer, New York, 2011.

\bibitem[Sun(2018)]{sun_interbank}
Li-Hsien Sun.
\newblock Systemic risk and interbank lending.
\newblock \emph{Journal of Optimization Theory and Applications}, 179\penalty0
  (2):\penalty0 400--424, 2018.

\bibitem[Sznitman(1991)]{sznitman_1991}
Alain-Sol Sznitman.
\newblock Topics in propagation of chaos.
\newblock In \emph{\'{E}cole d'\'{E}t\'{e} de {P}robabilit\'{e}s de
  {S}aint-{F}lour {XIX}---1989}, volume 1464 of \emph{Lecture Notes in Math.},
  pages 165--251. Springer, Berlin, 1991.

\bibitem[Upper and Worms(2004)]{UW04}
Christian Upper and Andreas Worms.
\newblock Estimating bilateral exposures in the {G}erman interbank market: {I}s
  there a danger of contagion?
\newblock \emph{European Economic Review}, 48\penalty0 (4):\penalty0 827--849,
  2004.

\bibitem[Weber and Weske(2017)]{AW_15}
Stefan Weber and Kerstin Weske.
\newblock The joint impact of bankruptcy costs, fire sales and cross-holdings
  on systemic risk in financial networks.
\newblock \emph{Probability, Uncertainty and Quantitative Risk}, 2\penalty0
  (1):\penalty0 9, 2017.

\bibitem[Whitt(2002)]{whitt_2002}
Ward Whitt.
\newblock \emph{Stochastic-process limits}.
\newblock Springer Series in Operations Research. Springer-Verlag, New York,
  2002.
\newblock An introduction to stochastic-process limits and their application to
  queues.

\end{thebibliography}

\end{document}